\newcommand{\cmsotwo}{$\mathsf{CMSO}_2$\xspace}
\newtheorem{theorem}{Theorem}[section]
\newtheorem{lemma}{Lemma}[section]
\newtheorem{observation}{Observation}[section]
\theoremstyle{definition}
\newtheorem{definition}{Definition}[section]
\date{}
\newcommand{\tw}{\mathrm{tw}}
\newcommand{\td}{\mathcal{T}}
\newcommand{\tin}{\mathsf{tree} \textnormal{-} \alpha}
\newcommand{\trlambda}{\mathsf{tree} \textnormal{-} \lambda}
\newcommand{\trmu}{\mathsf{tree} \textnormal{-} \mu}
\renewcommand{\P}{\textup{\textsf{P}}}
\newcommand{\NP}{\textup{\textsf{NP}}}
\newcommand{\parNP}{\textup{\textsf{para-NP}}}
\newcommand{\OO}{\mathcal{O}}
\newcommand{\FPT}{\textup{\textsf{FPT}}}
\newcommand{\W}{\textup{\textsf{W[1]}}}
\crefname{observation}{Observation}{Observations}
\Crefname{observation}{Observation}{Observations}
\author{
Cl\'{e}ment Dallard\thanks{
Department of Informatics, University of Fribourg, Switzerland.}
\and
Fedor V. Fomin\thanks{
Department of Informatics, University of Bergen, Norway.}
\and
Petr A. Golovach\addtocounter{footnote}{-1}\footnotemark{}
\and
Tuukka Korhonen\thanks{Department of Computer Science, University of Copenhagen, Denmark.}
\and
Martin Milani{\v c}\thanks{FAMNIT and IAM, University of Primorska, Koper, Slovenia.}
}
\title{Computing Tree Decompositions with Small Independence Number\thanks{The research leading to these results has received funding from the Research Council of Norway via the project BWCA (grant no.\ 314528), by the  Slovenian Research and Innovation Agency (I0-0035, research program P1-0285 and research projects J1-3003, J1-4008, J1-4084, J1-60012, and N1-0370), and by the research program CogniCom (0013103) at the University of Primorska.\\
An extended abstract of this work was presented at the 51st International Colloquium on Automata, Languages, and Programming (ICALP 2024).}
}
\begin{document}

\maketitle

\begin{abstract}
The independence number of a tree decomposition is the maximum of the independence numbers of the subgraphs induced by its bags.
The tree-independence number of a graph is the minimum independence number of a tree decomposition of it.
Several \NP-hard graph problems, like maximum weight independent set, can be solved in time $n^{\OO(k)}$ if the input $n$-vertex graph is given together with a tree decomposition of independence number~$k$.
Yolov, in [SODA 2018], gave an algorithm that, given an $n$-vertex graph $G$ and an integer $k$, in time
$n^{\OO(k^3)}$ either constructs a tree decomposition of $G$ whose independence number is $\OO(k^3)$ or correctly reports that
the tree-independence number of $G$ is larger than~$k$.

In this paper, we first give an algorithm for computing the tree-independence number with a better approximation ratio and running time and then prove that our algorithm is, in some sense, the best one can hope for.
More precisely, our algorithm runs in time $2^{\OO(k^2)} n^{\OO(k)}$ and either outputs a tree decomposition of $G$ with independence number at most $8k$, or determines that the tree-independence number of $G$ is larger than~$k$.
This implies $2^{\OO(k^2)} n^{\OO(k)}$-time algorithms for various problems, like maximum weight independent set, parameterized by the tree-independence number $k$ without needing the decomposition as an input.
Assuming Gap-ETH, an $n^{\Omega(k)}$ factor in the running time is unavoidable for any approximation algorithm for the tree-independence number.

Our second result is that the exact computation of the tree-independence number is \parNP-hard: We show that for every constant $k \ge 4$ it is \NP-complete to decide if a given graph has the tree-independence number at most~$k$.
\end{abstract}

\section{Introduction}

Tree decompositions are among the most popular tools in graph algorithms.
The crucial property of tree decompositions exploited in the majority of dynamic programming algorithms is that each bag of the decomposition can interact with an optimal solution only in a bounded number of vertices. The common measure of a tree decomposition is the width, that is,  the maximum size of a bag in the decomposition (minus 1). The corresponding graph parameter is the treewidth of a graph.
Many problems that are intractable on general graphs can be solved efficiently when the treewidth of a graph is bounded.

However, it is not always the size of a bag that matters. For example, suppose that every bag of the decomposition is a clique, that is, the graph is chordal. Since every independent set intersects each of the clique-bags in at most one vertex, dynamic programming still computes maximum weight independent sets in such graphs in polynomial time even if the bags could be arbitrarily large.
An elegant approach to capturing such properties of tree decompositions 
is the notion of the tree-independence number of a graph.
The \emph{independence number} of a tree decomposition is the maximum of the independence numbers (that is, the maximum size of an independent set) of the subgraphs induced by its bags.
The \emph{tree-independence number} of a graph $G$, denoted by  $\tin(G)$,  is the minimum independence number of a tree decomposition of~$G$. In particular, the tree-independence number of a chordal graph is at most one, and for any graph $G$, the value $\tin(G)$ does not exceed the treewidth of $G$ (plus 1) or the independence number $\alpha(G)$ of~$G$.

The family of graph classes with bounded tree-independence number forms a significant generalization of graph classes with bounded treewidth.
It also contains dense graph classes, including: graph classes with bounded independence number;
classes of intersection graphs of connected subgraphs of graphs with bounded treewidth, studied by Bodlaender et al.~\cite{bodlaender1998}, which in particular include classes of \emph{{$H$}-graphs}, that is, intersection graphs of connected subgraphs of a subdivision of a fixed multigraph $H$, introduced in 1992 by B\'{\i}r\'{o} et al.~\cite{MR1172354} and studied more recently in a number of papers~\cite{MR4249058,MR4141534,MR4332111}; classes of graphs in which all minimal separators have bounded size, studied by Skodinis in 1999~\cite{Skodinis99}; and, more generally, classes of graphs in which all minimal separators induce subgraphs with bounded independence number, studied  by Dallard et al.~\cite{DallardMS24}.

\paragraph{Our results.}
Yolov \cite{Yolov18} gave an algorithm that for a given $n$-vertex graph $G$ and integer $k$, in time
$n^{\OO(k^3)}$ either constructs a tree decomposition of $G$ whose independence number is $\OO(k^3)$ or correctly reports that the tree-independence number of $G$ is larger than~$k$.
Our first main result is the following improvement over Yolov's algorithm.

\begin{theorem}
\label{the:main_alg}
There is an algorithm that, given an $n$-vertex graph $G$ and an integer $k$, in time $2^{\OO(k^2)} n^{\OO(k)}$ either outputs a tree decomposition of $G$ with independence number at most $8k$, or concludes that the tree-independence number of $G$ is larger than~$k$.
\end{theorem}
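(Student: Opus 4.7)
The plan is to follow the classical recursive scheme for constructing tree decompositions via balanced separators, but adapted so that the separator has small \emph{independence number} rather than small size. The algorithm will recursively split the graph along a balanced separator whose induced subgraph has independence number $O(k)$, attach that separator as the top bag of the output decomposition, and recurse on the pieces.

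\textbf{Step 1 (Structural lemma).} First I would establish that if $\tin(G) \le k$, then for every vertex weight function $w$ with $w(V(G))=1$, there exists a set $B \subseteq V(G)$ with $\alpha(G[B]) \le k$ such that every connected component of $G - B$ has weight at most $1/2$. This is the standard centroid argument: root an optimal tree decomposition and take $B$ to be the bag of the lowest node whose subtree contains weight more than $1/2$; such $B$ is a $1/2$-balanced separator and already satisfies $\alpha(G[B]) \le k$ by the definition of tree-independence number.

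\textbf{Step 2 (Reduction to guessing a small independent set).} The key observation is that if $B$ satisfies $\alpha(G[B]) \le k$ and $I$ is a \emph{maximum} independent set of $G[B]$, then $|I| \le k$ and $B \subseteq N_G[I]$, since every vertex of $B$ is dominated by $I$ inside $G[B]$. So the algorithm enumerates all independent sets $I \subseteq V(G)$ of size at most $k$; there are $n^{O(k)}$ of these, and for the correct choice of $I$ the sought balanced separator lies entirely inside $N_G[I]$. For each candidate $I$, the task becomes: find a balanced separator $S \subseteq N_G[I]$ of small independence number, or certify that none of the right shape exists.

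\textbf{Step 3 (Flow-based extraction of the separator).} For a fixed independent set $I$, I would use a maximum-flow / minimum-cut computation to obtain a minimum vertex separator $S \subseteq N_G[I]$ that cuts the heavy side from the light side of $w$. The difficulty is that $N_G[I]$ itself can have independence number much larger than $k$ (the private neighbors of different vertices of $I$ can form a huge independent set), so simply returning $N_G[I]$, or even an unrestricted min-cut inside $N_G[I]$, is not enough. The plan is to exploit that once $I$ is fixed, the components of $G - N_G[I]$ attach to $N_G[I]$ according to one of at most $2^{|I|} \le 2^k$ possible boundary patterns on $I$. Enumerating all $2^{O(k^2)}$ possible ``type configurations'' on $I$ and, for each, running a standard submodular / vertex-cut minimization lets one carefully select, from among the private neighbors of each $v_i \in I$, only the slice that is actually needed to block flow. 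The output separator $S$ then satisfies $\alpha(G[S]) \le 8k$, where the constant $8$ absorbs both the slack from working with $3/4$-balanced rather than $1/2$-balanced separators and the loss from grouping private neighbors by types.

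\textbf{Step 4 (Recursion and glueing).} Finally, I would apply the separator subroutine recursively to each torso obtained by taking a component $C$ of $G - S$ together with $S$, using a weight function concentrated on $V(C)$ to drive the recursion depth down to $O(\log n)$, and attach the decompositions built for the subproblems to a new node whose bag is $S$. Correctness relies on the monotonicity of tree-independence number under taking induced subgraphs, so any failure of the separator subroutine on a subproblem certifies $\tin(G) > k$ globally. The total running time is the cost of one separator call, $2^{O(k^2)} n^{O(k)}$, times a polynomial factor for the recursion, giving the claimed bound. \textbf{The main obstacle} I expect is precisely Step~3: producing a separator with independence number $O(k)$, not merely size $O(k)$. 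A pure minimum-cut argument is blind to the independence number of the cut itself, and the bookkeeping over the $2^{O(k^2)}$ type configurations, combined with the flow/min-cut duality, is what ultimately forces the separator to be dominated by a small number of vertices of $I$ and hence have bounded independence number.
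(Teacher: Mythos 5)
Your high-level scheme (balanced separators plus recursion) is indeed the one the paper uses, but Step 3 is not a proof and the specific mechanism you sketch there would not work; this is precisely the part that requires the paper's main technical contribution. Two concrete problems. First, the "type configurations" you propose to enumerate are boundary patterns on $I$, but the components of $G\setminus N_G[I]$ attach to vertices of $N_G[I]$, not to $I$, and there is no reason for those attachment sets to fall into $2^{\OO(k)}$ (or even $2^{\OO(k^2)}$) equivalence classes in a way that flow computations could exploit. Second, and more fundamentally, minimum vertex cut (and submodular minimization generally) optimizes cardinality of the cut, not the independence number of the subgraph it induces; indeed, deciding whether two vertices can be separated by a set with bounded independence number is itself \NP-complete for every fixed $k\ge 3$ (this is \Cref{sec:hardsep} of the paper), so no exact polynomial-time flow-style subroutine exists. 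Your plan would need an actual mechanism to control $\alpha(G[S])$, and "carefully select only the slice that is actually needed to block flow" is exactly the missing idea.

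The paper resolves this gap in three stages you do not have: (i) iterative compression so that a tree decomposition of independence number $\OO(k)$ is always available as auxiliary input; (ii) a covering lemma showing that any candidate separator $S$ with $\alpha(S)\le k$ is contained in the union of at most $2k-1$ bags of that auxiliary decomposition, which after guessing the covering bags restricts the search to a set $R$ with $\alpha(R)=\OO(k^2)$; (iii) a branching process, measured by $\alpha(R\setminus N(V_1\cup V_2\cup V_3))$, that reduces to the case $R\subseteq N(V_1\cup V_2\cup V_3)$, where a linear-programming relaxation with constraints $x_u+x_v\ge 1$ (connectivity) and $\sum_{v\in I}x_v\le k$ over all independent sets $I$ of size $2k+1$ is solved and rounded at threshold $1/2$ to get a $2$-approximation. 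None of this is present or implied by your Step 3. Additionally, your weight-based balance in Steps 1 and 4 does not on its own keep the interface sets' independence numbers bounded through the recursion; the paper instead enforces balance directly on $\alpha(W)$ by fixing a large independent set $I\subseteq W$ and controlling how $I$ is split (\Cref{lem:sep_exist} and \Cref{lem:decomp_sepfind}), and the recursion-size analysis is by counting forget nodes rather than by $\OO(\log n)$ depth.
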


The performance of the algorithm from  \Cref{the:main_alg}  (the running time and the need of approximation) is in some sense optimal, for the following reasons.
First,  by a simple reduction that given an $n$-vertex graph $G$ produces a $2n$-vertex graph $G'$ with $\tin(G') = \alpha(G)$ (see~\cite{DallardMS24}, as well as \Cref{lem:inappr}), all (parameterized) hardness results for the independence number (or clique) translate into hardness results for computing the tree-independence number.
In particular, from the result of Lin~\cite{Lin21} it follows that a constant-factor approximation of the tree-independence number is \W-hard, and from the result of Chalermsook at al.~\cite{DBLP:journals/siamcomp/ChalermsookCKLM20} it follows that assuming Gap-ETH\footnote{Gap-ETH states that for some constant $\epsilon > 0$, distinguishing between a satisfiable 3-\textsf{SAT} formula and one that is not even $(1-\epsilon)$-satisfiable requires exponential time (see~\cite{DBLP:conf/icalp/ManurangsiR17,DBLP:journals/eccc/Dinur16}).}, there is no $f(k) \cdot n^{o(k)}$-time $g(k)$-approximation algorithm for the tree-independence number for any computable functions $f$ and~$g$.
In particular, the time complexity obtained in \cref{the:main_alg} is optimal in the sense that an $n^{\Omega(k)}$ factor is unavoidable assuming Gap-ETH, even if the approximation ratio would be drastically weakened.

The above arguments do not exclude the possibility of  \emph{exact} computation of tree-independence number in time $n^{f(k)}$ for some function~$f$.
The computational complexity  of recognizing graphs with the tree-independence numbers at most $k$ for a fixed integer $k \geq 2$ was asked as an open problem by Dallard et al.~\cite[Question~8.3]{DallardMS24}.\footnote{There is a linear-time algorithm for deciding if a given graph has the tree-independence number at most $1$, because such graphs are exactly the chordal graphs, see, e.g., \cite{Golumbic80}.}
While for values $k=2$ and $k=3$ the question remains open,  our next result resolves it for any constant $k\geq 4$.

\begin{restatable}{theorem}{npcmain}
\label{the:main_nphard}
For every constant $k\geq 4$, it is \NP-complete to decide whether $\tin(G)\leq k$ for a given graph~$G$.
\end{restatable}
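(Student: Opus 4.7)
The plan is to prove NP-hardness for each fixed $k \geq 4$ separately by polynomial-time reduction from a known NP-hard problem. Since $\tin(G) = 1$ exactly characterizes chordal graphs (recognizable in linear time), the construction must encode genuinely non-chordal structure. I would approach the base case $k = 4$ first and then extend to larger $k$ by a scaling or padding argument.

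For the base case, I would reduce from a structured NP-hard problem — plausible candidates include a restricted variant of $3$-colorability, a $3$-SAT variant, or a clique-partition problem — and build a graph $G_\Pi$ consisting of a rigid "backbone" together with instance-specific gadgets. The backbone would be designed so that any tree decomposition with bags of independence number at most $4$ is essentially unique up to a combinatorial choice associated to each backbone-region, and the gadgets enforce that globally consistent choices correspond exactly to a solution of $\Pi$. The easy direction (yes-instance yields a decomposition of independence number $4$) is then proven by explicitly constructing the decomposition from a solution of $\Pi$.

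The main obstacle is the lower bound in the no-case: showing that every tree decomposition of $G_\Pi$ has a bag with an independent set of size at least $5$. Unlike treewidth, tree-independence number lacks a clean combinatorial dual (brambles or balanced separators do not directly certify large $\tin$), so the argument must be tailored to the construction. I would design the gadgets to contain explicit induced independent sets of size $5$ that, by the connectivity requirements of tree decompositions, cannot be split across several bags unless the instance-specific choices are globally consistent. Ramsey-type reasoning — the same that underlies the relationship between tree-independence number and $(\tw, \omega)$-boundedness discussed earlier in the paper — should help control the possible trade-offs between cliques and independent sets within bags.

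To handle general $k \geq 5$, I would parameterize the backbone of the reduction by $k$, scaling up the "witness" independent sets so that in a yes-instance an optimal tree decomposition achieves independence number exactly $k$, while in a no-instance the argument above is upgraded to force some bag to contain an independent set of size $k+1$. An alternative, if a suitable graph operation $\ast$ satisfying $\tin(G \ast H) = \tin(G) + \tin(H)$ could be identified, would be to pad the $k = 4$ construction with a fixed auxiliary graph of tree-independence number $k - 4$; however, identifying such an operation is itself nontrivial, so I expect the parameterized-reduction route to be cleaner.
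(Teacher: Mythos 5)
Your plan correctly identifies the broad strokes — reduce from a 3-coloring-type problem, build a rigid backbone that forces the shape of every low-independence-number tree decomposition, and scale the backbone for larger $k$ — but the concrete mechanism that makes this work is missing, and the tools you gesture at (Ramsey-type reasoning, brambles/duals) are not the ones that actually close the gap. The paper's key trick for rigidifying the decomposition is the simple observation that if $A$ and $B$ are vertex sets with a complete bipartite connection between them, then some bag of any tree decomposition must contain all of $A$ or all of $B$ (Observation~\ref{obs:folk}). Concretely: take two disjoint copies $G_1, G_2$ of $\overline{G}$, join corresponding vertices by a perfect matching, and add $5$ pendant "universal" vertices to $V(G_1)$ and $5$ to $V(G_2)$. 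Since $5 > k = 4$, those universal sets cannot fit in one bag, so $V(G_1)$ and $V(G_2)$ must each be contained in some bag, and the problem collapses to analyzing the path decomposition between these two forced bags. This converts the condition $\tin(G') \le 4$ into a purely order-theoretic condition on $V(G)$: the existence of an ordering $v_1, \ldots, v_n$ with $\omega(G[\{v_1,\ldots,v_i\}]) + \omega(G[\{v_{i+1},\ldots,v_n\}]) \le 4$ for all $i$, equivalently, the existence of two disjoint sets $U_1, U_2$ with vertex cover number at most $1$ each hitting every triangle — a problem the paper shows \NP-hard from \textsc{3-Coloring} in a separate lemma (\Cref{lem:aux}). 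Without the biclique-forcing observation you do not get a handle on what bags must look like, and the "witness independent set of size $5$" argument you sketch stays wishful.

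Two further points. First, for $k \ge 5$ the paper does not look for an additive operation $\ast$; it parameterizes the backbone directly, as you suggest as the "cleaner" route: use $k+1$ universal vertices instead of $5$ (so the forced-bag argument still works against independence number $k$), and add $k-4$ extra vertices $z_1,\ldots,z_{k-4}$ adjacent to all the $x_i$ and $y_j$, which are then forced into every bag on the path between the two end bags and uniformly inflate the independence number by $k-4$. Second, your worry that tree-independence number "lacks a clean combinatorial dual" is well placed but slightly off target: the proof does not need any dual-like lower-bound certificate, because once the two end bags are forced, the analysis is a direct combinatorial calculation over the nice path decomposition, not a min-max argument. Your proposal as written would not compile into a proof without supplying the biclique observation, the complement construction, and the reduction to the vertex-ordering / two-disjoint-sets problem.
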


 Let us observe that  \cref{the:main_nphard} implies also that, assuming $\P \neq \NP$, there is no $n^{f(k)}$-time approximation algorithm for the tree-independence number with approximation ratio less than~$5/4$.

We supplement our main results with a second \NP-completeness proof for a problem closely related to computing the tree-independence number and the algorithm of \Cref{the:main_alg}.
We consider the problem where we are given a graph $G$, two non-adjacent vertices $u$ and $v$, and an integer $k$, and the task is to decide if $u$ and $v$ can be separated by removing a set of vertices that induces a subgraph with independence number at most~$k$.
We show in \Cref{thm:separator} that this problem is \NP-complete for any fixed integer $k \ge 3$.
This hardness result is motivated by the fact that the algorithm of \Cref{the:main_alg} finds separators with bounded independence number as a subroutine.
While for the algorithm of \Cref{the:main_alg}, we design a $2^{\OO(k^2)} n^{\OO(k)}$-time $2$-approximation algorithm for a generalization of this problem (assuming that a tree decomposition of independence number $\OO(k)$ is given), the proof of \Cref{thm:separator} shows that this step of the algorithm cannot be turned into an exact algorithm (in our reduction, we can construct, along the graph $G$, a tree decomposition of independence number $\OO(k)$ of $G$).

\paragraph{Previous work and applications of  \Cref{the:main_alg}.}
The notion of the tree-independence number is very natural and it is not surprising that it was introduced independently by several researchers~\cite{DallardMS24,Yolov18}.
Yolov, in \cite{Yolov18}, introduced a new width measure called
\emph{minor-matching hypertree-width}, $\trmu$\footnote{Minor-matching hypertree-width is defined for hypergraphs, but algorithms for computing decompositions for it are only known for graphs.}.
He proved that a number of problems, including \textsc{Maximum Independent Set}, \textsc{$k$-Colouring}, and \textsc{Graph Homomorphism}, permit polynomial-time solutions for graphs with bounded minor-matching hypertree width.
Furthermore, Yolov showed that the minor-matching hypertree-width of a graph is equal to the tree-independence number of the square of its line graph, that is, $\trmu(G) = \tin(L(G)^2)$ holds for all graphs $G$, where $L(G)^2$ is the graph whose vertices are the edges of $G$, with two distinct edges adjacent if and only if they have nonempty intersection or there is a third edge intersecting both.
Moreover, a tree decomposition of $L(G)^2$ with independence number at most $k$ can be turned into a tree decomposition of $G$ with minor-matching hypertree-width at most~$k$.
Then, Yolov gave an $n^{\OO(k^3)}$-time $\OO(k^2)$-approximation algorithm computing the tree-independence number of an $n$-vertex graph, implying also the same bounds for computing the minor-matching hypertree-width of a graph.
\Cref{the:main_alg} improves the running time and approximation ratio of Yolov's algorithm. Pipelined with Yolov's reduction,  \Cref{the:main_alg} also implies an $8$-approximation of minor-matching hypertree-width of graphs in time $2^{\OO(k^2)} n^{\OO(k)}$.

\Cref{the:main_nphard} implies also the \NP-hardness of deciding whether $\trmu(G) \le k$ for every constant $k \ge 4$ because a simple reduction that attaches a pendant vertex to every vertex of a graph $G$ produces  a graph $G'$ such that $\trmu(G') = \tin(G)$ (see~\cite{Yolov18}).

The tree-independence number of a graph was introduced independently by Yolov~\cite{Yolov18} and Dallard et al.~\cite{DallardMS24}.\footnote{Yolov called it \emph{$\alpha$-treewidth} in~\cite{Yolov18}.}
The original motivation for Dallard et al.~\cite{DallardMS24} stems from structural graph theory.
In 2020, Dallard et al.~\cite{DMS-WG2020,dallard2021treewidth} initiated a systematic study of $(\tw,\omega)$-bounded graph classes, that is, hereditary graph classes in which the treewidth can only be large due to the presence of a large clique.
While $(\tw,\omega)$-bounded graph classes are known to possess some good algorithmic properties related to clique and coloring problems (see~\cite{DBLP:journals/endm/ChaplickZ17,MR4332111,DMS-WG2020,dallard2021treewidth,DallardMS24}), the extent to which this property has useful algorithmic implications for problems related to independent sets is an open problem.
The connection with the tree-independence number follows from Ramsey's theorem, which implies that graph classes with bounded tree-independence number are $(\tw,\omega)$-bounded with a polynomial binding function (see~\cite{DallardMS24}).
Dallard et al.~\cite{dallard2022secondpaper} conjecture the converse, namely, that every $(\tw,\omega)$-bounded graph class has bounded tree-independence number.
This conjecture was recently disproved by Chudnovsky and Trotignon~\cite{CT24}.
A related research direction in structural graph theory is the study of induced obstructions to bounded tree-independence number; see, for example, the recent works~\cite{dallard2024treewidthversuscliquenumber,abrishami2023tree,chudnovsky2024treeindependencenumberiv,chudnovsky2024treeindependencenumberiii,chudnovsky2024treeindependencenumberii}.

In our opinion, the most interesting application of  \Cref{the:main_alg} lies in the area of graph algorithms for \NP-hard optimization problems. Dallard et al.~\cite{DallardMS24} and Yolov in \cite{Yolov18} have shown that certain $\NP$-hard optimization problems like \textsc{Maximum Independent Set}, \textsc{Graph Homomorphism}, or \textsc{Maximum Induced Matching} problems can be solved in time $n^{\OO(k)}$ if the input graph is given with a tree decomposition of independence number at most~$k$.
Lima et al.~\cite{LMMORS24} extended this idea to generic packing problems in which any two of the chosen subgraphs have to be at pairwise distance at least $d$, for even~$d$.
They also obtained an algorithmic metatheorem for the problem of finding a maximum-weight sparse (bounded chromatic number) induced subgraph satisfying an arbitrary but fixed property expressible in counting monadic second-order logic (\cmsotwo).
In particular, the metatheorem implies polynomial-time solvability of several classical problems like finding the largest induced forest (which is equivalent to \textsc{Minimum Feedback Vertex Set}), finding the largest induced bipartite subgraph (which is equivalent to \textsc{Minimum Odd Cycle Transversal}), finding the maximum number of pairwise disjoint and non-adjacent cycles (\textsc{Maximum Induced Cycle Packing}), and finding the largest induced planar subgraph (which is equivalent to \textsc{Planarization}).

However, the weak spot in all these algorithmic approaches is the requirement that a tree decomposition with bounded independence number is given with the input.
\Cref{the:main_alg} fills this gap by constructing a decomposition of bounded independence number
in time that asymptotically matches or improves the time required to solve all these optimization problems.

\Cref{the:main_alg} appears to be a handy tool in the subarea of computational geometry concerning optimization problems on geometric graphs. Treewidth plays a fundamental role in the design of exact and approximation algorithms on planar graphs (and more generally, $H$-minor-free graphs) \cite{DemaineFHT05jacm,Baker94,Grohe:2003kt}.
The main property of such graphs is that they enjoy the bounded local treewidth property. In other words, any planar graph of a small diameter has a small treewidth. 
A natural research direction is to extend such methods to intersection graphs of geometric objects \cite{FominLS18,LokshtanovPSXZ22}. 
However, even for very ``simple'' objects like unit disks, the corresponding intersection graphs do not have locally bounded treewidth. 
On the other hand, such graphs of bounded diameter have bounded tree-independence number. 
Thus, in many scenarios, the treewidth-based methods on such graphs could be replaced by tree decompositions of bounded independence number.
{In particular,} Galby et al.~\cite{Galby23,DBLP:journals/corr/abs-2402-18352} use \Cref{the:main_alg} for obtaining polynomial-time approximation schemes for several packing and induced subgraph problems on geometric graphs.
It is interesting to note that algorithms on geometric graphs often require geometric representation of a graph. Sometimes, like for unit disk graphs, finding such a representation is a challenging computational task \cite{HlinenyK01}.
In contrast, \Cref{the:main_alg} does not need the geometric properties of objects or their geometric representations and thus could be used for developing so-called robust algorithms~\cite{RaghavanS03} on geometric graphs~\cite{ClarkCJ90}.

In parameterized algorithms, Fomin and Golovach~\cite{FominG21} and
Jacob et al.~\cite{DBLP:journals/algorithmica/JacobPRS22} used tree decompositions where each bag is obtained from a clique by deleting at most $k$ edges or adding at most $k$ vertices, respectively.
These type of decompositions are special types of tree decompositions with bounded independence numbers.

The rest of this paper is organized as follows.
In \Cref{sec:overview}, we overview the proofs of our main results, \Cref{the:main_alg} and \Cref{the:main_nphard}.
In \Cref{sec:preliminaries} we recall definitions and preliminary results.
\Cref{sec:algo} is devoted to the proof of \Cref{the:main_alg} and \Cref{sec:lower-bound} to the proof of \Cref{the:main_nphard}.
In \Cref{sec:hardsep} we show the \NP-hardness of finding separators with bounded independence number.
We conclude in \Cref{sec:concl} with final remarks and open problems.

\section{Overview\label{sec:overview}}
In this section we sketch the proofs of our two main results, \cref{the:main_alg} and \cref{the:main_nphard}, and compare the algorithm of \cref{the:main_alg} to the algorithm of Yolov~\cite{Yolov18}.

\subsection{Outline of the algorithm\label{sec:alg_outline}}
For simplicity, we sketch here a version of \cref{the:main_alg} with approximation ratio 11 instead of 8.
The difference between the 11-approximation and 8-approximation is that for 11-approximation it is sufficient to use 2-way separators, while for 8-approximation we use 3-way separators.

On a high level, our algorithm follows the classical technique of constructing a tree decomposition by repeatedly finding balanced separators.
This technique was introduced by Robertson and Seymour in Graph Minors~XIII~\cite{RobertsonS-GMXIII}, was used for example in~\cite{DBLP:journals/siamcomp/BodlaenderDDFLP16,DBLP:journals/siamcomp/BergBKMZ20,DBLP:journals/algorithmica/JacobPRS22}, and an exposition of it was given in~\cite[Section~7.6.2]{cygan2015parameterized} and in~\cite{DBLP:conf/birthday/Pilipczuk20a}.

The challenge in applying this technique is the need to compute separators that are both balanced and small with respect to the independence numbers of the involved vertex sets. Our main technical contribution is an approximation algorithm for finding such separators.
In what follows, we will sketch an algorithm that, given a graph $G$, a parameter $k$, and a set of vertices $W \subseteq V(G)$ with independence number $\alpha(W) = 9k$, either finds a partition $(S, C_1, C_2)$ of $V(G)$ such that each $S$, $C_1$, and $C_2$ are nonempty, $S$ separates $C_1$ from $C_2$, $\alpha(S) \le 2k$, and $\alpha(W \cap C_i) \le 7k$ for both $i =1,2$, or determines that the tree-independence number of $G$ is larger than~$k$. (For $S\subseteq V(G)$ we use $\alpha(S)$  to denote the independence number of the induced subgraph $G[S]$.)
Our algorithm for finding such balanced separators works in two parts.
First, we reduce finding balanced separators to finding separators, and then we give a 2-approximation algorithm for finding separators.

We emphasize that the balancedness condition is expressed in terms of the independence number of the separated subgraphs and not in terms of their number of vertices.
Hence, at first glance, it is not even clear that small tree-independence number guarantees the existence of such balanced separators.
To prove the existence of balanced separators and to reduce the finding of balanced separators to finding separators between specified sets of vertices, instead of directly enforcing that both sides of the separation have a small independence number, we enforce that both sides of the separation have sufficiently large independence number.
More precisely, we pick an arbitrary independent set $I \subseteq W$ of size $|I| = 9k$. By making use of the properties of tree decompositions,  it is possible to show that there exists a separation $(S, C_1, C_2)$ with $|I \cap C_i| \le 6k$ for $i \in \{1,2\}$ and $\alpha(S) \le k$. Hence  $|I \cap C_i| \ge 2k$ for $i \in \{1,2\}$.
By enforcing the condition  $|I \cap C_i| \ge 2k$ for both $i \in \{1,2\}$, we will have  that $\alpha(W \cap C_i) \le 7k$ for both $i \in \{1,2\}$. (Note that  $\alpha(W \cap C_1) + \alpha(W \cap C_2) \le \alpha(W)$.)
Therefore, to find a balanced separator for $W$ or to conclude that the tree-independence number of $G$ is larger than $k$, it is sufficient to select an arbitrary independent set $I \subseteq W$ with $|I| = 9k$, do $2^{\OO(k)}$ guesses for sets $I \cap C_1$ and $I \cap C_2$, and for each of the guesses search for a separator between the sets $I \cap C_1$ and $I \cap C_2$.

In the separator finding algorithm the input includes two sets $V_1 = I \cap C_1$ and $V_2 = I \cap C_2$, and the task is to find a set of vertices $S$ disjoint from both $V_1$ and $V_2$ separating $V_1$ from $V_2$ with $\alpha(S) \le 2k$ or to conclude that no such separator $S$ with $\alpha(S) \le k$ exists.
Our algorithm works by first using multiple stages of different branching steps, and then arriving at a special case which is solved by rounding a linear program.
We explain some details in what follows.

First, by using iterative compression around the whole algorithm, we can assume that we have a tree decomposition with independence number $\OO(k)$ available.
We show that any set $S \subseteq V(G)$ with $\alpha(S) \le k$ can be covered by $\OO(k)$ bags of the tree decomposition.
This implies that by first guessing the covering bags, we reduce the problem to the case where we search for a separator $S \subseteq R$ for some set $R \subseteq V(G)$ with independence number $\alpha(R) = \OO(k^2)$.

Then, we use a branching procedure to reduce the problem to the case where $R \subseteq N[V_1 \cup V_2]$. In the branching, we select a vertex $v \in R \setminus N[V_1 \cup V_2]$, and branch into three subproblems, corresponding to including $v$ into $V_1$, into $V_2$, or into a partially constructed solution~$S$.
The key observation here is that if we branch on vertices $v \in R \setminus N[V_1 \cup V_2]$, then the branches where $v$ is included in $V_1$ or in $V_2$ reduce the value $\alpha(R \setminus N[V_1 \cup V_2])$.
By first handling the case with $\alpha(R \setminus N[V_1 \cup V_2]) \ge 2k$ by branching on $2k$ vertices at the same time and then branching on single vertices, this branching results in $2^{\OO(\alpha(R))} n^{\OO(k)} = 2^{\OO(k^2)} n^{\OO(k)}$ instances where $R \subseteq N[V_1 \cup V_2]$.

Finally, when we arrive at the subproblem where $R \subseteq N[V_1 \cup V_2]$, we design a \hbox{$2$-approximation} algorithm by rounding a linear program.
For $v\in R$, let us have variables $x_v$ with $x_v = 1$ indicating that $v \in S$ and $x_v = 0$ indicating that $v \notin S$.
Because $R \subseteq N[V_1 \cup V_2]$, the fact that $S \subseteq R$ separates $V_1$ from $V_2$ can be encoded by only using inequalities of form $x_v + x_u \ge 1$.
To bound the independence number of $S$, for every independent set $I\subseteq R$ of size $|I| = 2k+1$ we add a constraint that $\sum_{v \in I} x_v \le k$.
Now, a separator $S$ with $\alpha(S) \le k$ corresponds to an integer solution.
We then find a fractional solution and round $x_v$ to $1$ if $x_v \ge 1/2$ and to $0$ otherwise.
Note that this satisfies the $x_v + x_u \ge 1$ constraints, so the rounded solution corresponds to a separator.
To bound the independence number of the rounded solution, note that $\sum_{v \in I} x_v \le 2k$  for independent sets $I$ of size $|I| = 2k+1$, therefore implying that $\alpha(S) \le 2k$.

\paragraph*{Comparison to the algorithm of Yolov.}
Both our algorithm and the algorithm of Yolov~\cite{Yolov18} are based on the general approach of constructing a tree decomposition by repeatedly finding balanced separators, originating from Graph Minors~XIII~\cite{RobertsonS-GMXIII}.
The main difference between our algorithm and the algorithm of Yolov is the approach for finding separators with bounded independence number.
The separator algorithm of Yolov has running time $n^{\OO(k)}$ but an approximation ratio of $\OO(k^2)$, which in the end manifests in both the overall running time of $n^{\OO(k^3)}$ and approximation ratio of $\OO(k^2)$.
Our algorithm could be thought of as improving the separator finding algorithm to have an approximation ratio of $2$, resulting in improving both the overall running time to $2^{\OO(k^2)} n^{\OO(k)}$ and the approximation ratio to~$8$.
Our separator finding algorithm appears to use completely different techniques than the separator finding algorithm of Yolov.
Our approach crucially uses the fact that we have an approximately optimal tree decomposition available by iterative compression, while the approach of Yolov does not use a tree decomposition, but instead employs a greedy edge addition lemma combined with structural results on separators with bounded independence number, and finally uses 2-SAT to find the separator.

\subsection{Outline of the \texorpdfstring{\NP}{NP}-hardness proof}
We explain the general idea of the reduction behind \Cref{the:main_nphard} in a somewhat inversed order, in particular by making a sequence of observations about a particular special case of the problem of determining if $\tin(G) \le 4$, culminating in observing that it corresponds to \textsc{3-coloring}.

Let $G$ be a graph that does not contain cliques of size four.
We create a graph $G'$ by first taking the disjoint union of two copies $G_1$ and $G_2$ of the complement $\overline{G}$, and then adding a matching between $V(G_1)$ and $V(G_2)$ by connecting the corresponding copies of the vertices.
Then, we enforce that the sets $V(G_1)$ and $V(G_2)$ are bags in any tree decomposition of $G'$ of the independence number at most $4$ by adding $5$ vertices $v_1, \ldots, v_5$ with $N(v_i) = V(G_1)$ and 5 vertices $u_1, \ldots, u_5$ with $N(u_i) = V(G_2)$ (using \cref{obs:folk}).
Now, the problem of determining whether $\tin(G') \le 4$ boils down to determining if we can construct a path decomposition with independence number at most $4$ with a bag $V(G_1)$ on one end and a bag $V(G_2)$ on the other end (the additional vertices $v_1, \ldots, v_5$ and $u_1, \ldots, u_5$ can be immediately ignored after the bags $V(G_1)$ and $V(G_2)$ have been enforced).

The problem of constructing such a path decomposition corresponds to the problem of finding an ordering according to which the vertices in $V(G_1)$ should be eliminated when traversing the decomposition from the bag of $V(G_1)$ to the bag of $V(G_2)$.
In particular, because the edges between $V(G_1)$ and $V(G_2)$ are a matching, we can assume that in an optimal path decomposition a vertex of $V(G_2)$ is introduced right before the corresponding vertex in $V(G_1)$ is eliminated.
From the viewpoint of the original graph $G$, this problem now corresponds to finding an ordering $v_1, \ldots, v_n$ of $V(G)$ such that for every $i \in [n-1]$ it holds that $\omega(G[\{v_1, \ldots, v_i\}]) + \omega(G[\{v_{i+1}, \ldots, v_n\}]) \le 4$, where $\omega$ denotes the maximum size of a clique in a graph.
This problem in turn corresponds to determining if $G$ has two disjoint subsets $U_1,U_2 \subseteq V(G)$ such that for both $i = 1,2$ it holds that the vertex cover number of $G[U_i]$ is at most 1, and $U_i$ intersects every triangle of~$G$.
We show a reduction from \textsc{3-coloring} to this problem, which finishes the \NP-hardness proof.

\section{Preliminaries}\label{sec:preliminaries}

We denote the vertex set and the edge set of a graph $G = (V,E)$ by $V(G)$ and $E(G)$, respectively.
The \emph{neighborhood} of a vertex $v$ in $G$ is the set $N_G(v)$ of vertices adjacent to $v$ in $G$, and the \emph{closed neighborhood} of $v$ is the set $N_G[v] = N_G(v) \cup \{v\}$.
These concepts are extended to sets $X\subseteq V(G)$ so that $N_G[X]$ is defined as the union of all closed neighborhoods of vertices in $X$, and $N_G(X)$ is defined as the set $N_G[X]\setminus X$.
The \emph{degree} of $v$, denoted by $d_G(v)$, is the cardinality of the set $N_G(v)$.
When there is no ambiguity, we may omit the subscript $G$ in the notations of the degree, and open and closed neighborhoods, and thus simply write $d(v)$, $N(v)$, and $N[v]$, respectively.

Given a set $X \subseteq V(G)$, we denote by $G[X]$ the subgraph of $G$ induced by~$X$.
We also write $G \setminus X = G[V(G) \setminus X]$.
Similarly, given a vertex $v \in V(G)$, we denote by $G \setminus v$ the graph obtained from $G$ by deleting~$v$.
The \emph{complement} of a graph $G$ is the graph $\overline{G}$ with vertex set $V(G)$ in which two distinct vertices are adjacent if and only if they are non-adjacent in~$G$.
For a positive integer $n$, we denote the $n$-vertex complete graph, path, and cycle by $K_n$, $P_n$, and $C_n$, respectively.
For positive integers $m$ and $n$ we denote by $K_{m,n}$ the complete bipartite graph with parts of sizes $m$ and~$n$.
Given two graphs $G$ and $H$, we say that $G$ is \emph{$H$-free} if no induced subgraph of $G$ is isomorphic to $H$.
A graph $G$ is \emph{chordal} if it has no induced cycles of length at least four.

A \emph{clique} (resp.\ \emph{independent set}) in a graph $G$ is a set of pairwise adjacent (resp.\ non-adjacent) vertices.
The \emph{independence number} of $G$, denoted by $\alpha(G)$, is the maximum size of an independent set in~$G$.
For a set of vertices $X \subseteq V(G)$, the independence number of $X$ is $\alpha(X) = \alpha(G[X])$.

Let $(V_1, V_2, \ldots, V_t)$ be a tuple of disjoint subsets of $V(G)$.
A \emph{$(V_1, V_2, \ldots, V_t)$-separator} is a set $S \subseteq V(G)$ such that $S \cap V_i = \emptyset$ for each $i \in [t]$, and in the graph $G \setminus S$ there is no path from $V_i$ to $V_j$ for all pairs $i \neq j$.
Such a separator is sometimes called a $t$-way separator.
Note that if there is an edge with an endpoint in $V_i$ and the other in $V_j$, for some $i \neq j$, then no $(V_1, V_2, \ldots, V_t)$-separator exists.

A \emph{tree decomposition} of a graph $G$ is a pair $\td = (T, \{X_t\}_{t\in V(T)})$ where $T$ is a tree and every node $t$ of $T$ is assigned a vertex subset $X_t\subseteq V(G)$ called a bag such that the following conditions are satisfied:
(1) every vertex of $G$ is in at least one bag, (2) for every edge $uv\in E(G)$ there exists a node $t\in V(T)$ such that $X_t$ contains both $u$ and $v$, and (3) for every vertex $u\in V(G)$ the subgraph $T_u$ of $T$ induced by the set $\{t\in V(T): u\in X_t\}$ is connected (that is, a tree).
The \emph{independence number} of a tree decomposition $\td = (T, \{X_t\}_{t\in V(T)})$ of a graph $G$, denoted by $\alpha(\td)$, is defined as follows:
\[\alpha(\td) = \max_{t\in V(T)} \alpha(X_t)\,.\]
The \emph{tree-independence number} of $G$, denoted by $\tin(G)$, is the minimum independence number among all tree decompositions of~$G$.

\section{An 8-approximation algorithm for tree-independence number\label{sec:algo}}
In this section we prove \cref{the:main_alg}, that is, we give a $2^{\OO(k^2)}n^{\OO(k)}$-time algorithm for either computing tree decompositions with independence number at most $8k$ or deciding that the tree-independence number of the graph is more than~$k$.
Our algorithm consists of three parts.
First, we give a $2^{\OO(k^2)}n^{\OO(k)}$-time $2$-approximation algorithm for finding 3-way separators with independence number at most $k$, with the assumption that a tree decomposition with the  independence number $\OO(k)$ is given with the input.
Then, we apply this separator finding algorithm to find balanced separators, and then apply balanced separators in the fashion of the Robertson-Seymour treewidth approximation algorithm~\cite{RobertsonS-GMXIII} to construct a tree decomposition with independence number at most~$8k$.
The requirement for having a tree decomposition with independence number $\OO(k)$ as an input in the separator algorithm is satisfied by iterative compression (see, e.g.,~\cite{cygan2015parameterized}), as we explain at the end of~\Cref{subsec:constructing}.

The presentation of the algorithm in this section is in reverse order compared to the presentation we gave in  \Cref{sec:alg_outline}.

\subsection{Finding approximate separators\label{subsec:apxsep}}
In this subsection, we show the following theorem.

\begin{theorem}
\label{the:sepfinding}
There is an algorithm that, given a graph $G$, an integer $k$, a tree decomposition $\td$ of $G$ with independence number $\alpha(\td) = \OO(k)$, and three disjoint sets of vertices $V_1, V_2, V_3 \subseteq V(G)$, in time $2^{\OO(k^2)} n^{\OO(k)}$ either reports that no $(V_1, V_2, V_3)$-separator with independence number at most $k$ exists, or returns a $(V_1, V_2, V_3)$-separator with independence number at most~$2k$.
\end{theorem}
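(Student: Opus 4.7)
The plan is to follow the three-stage strategy outlined in \Cref{sec:alg_outline}, adapted to the $3$-way setting: (i) use the given tree decomposition $\td$ to restrict the search for $S$ to a candidate set $R \subseteq V(G)$ of bounded independence number; (ii) branch on vertices of $R$ to reach an instance in which every remaining candidate lies in the closed neighborhood of an enlarged $V_1 \cup V_2 \cup V_3$; and (iii) solve the structured residual problem by rounding a linear program.

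For Stage~(i), I would first establish that any candidate separator $S$ with $\alpha(S)\le k$ is covered by $\OO(k)$ bags of $\td$: since $\alpha(\td) = \OO(k)$, the subtrees $\{T_v : v \in S\}$ in $\td$ cannot be too spread out without producing a large independent set inside $S$, and a greedy Helly-style selection yields the bound. Guessing the $\OO(k)$ covering bags contributes an $n^{\OO(k)}$ factor; for each guess we let $R$ be the union of the chosen bags, so $\alpha(R) \le \OO(k)\cdot\alpha(\td) = \OO(k^2)$, and, moreover, by the tree-decomposition property, $R$ doubles as a vertex separator of $G$ whose deletion decomposes $G$ according to the components of $T$ minus the guessed bags.

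For Stage~(ii), I maintain extended sides $V_i' \supseteq V_i$ and a partial solution $S_0 \subseteq R$; while the undecided region $U := R \setminus (N[V_1' \cup V_2' \cup V_3'] \cup S_0)$ is nonempty, I branch on a vertex $v \in U$, placing $v$ into one of $V_1'$, $V_2'$, $V_3'$, or $S_0$. Using a two-phase measure (branching on a maximum independent set of $U$ whenever $\alpha(U) \ge 2k$, and on single vertices otherwise) combined with $\alpha(R) = \OO(k^2)$, the branching tree has $2^{\OO(k^2)}$ leaves, and at every leaf $R \subseteq N[V_1' \cup V_2' \cup V_3' \cup S_0]$. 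For Stage~(iii), at each leaf I form an LP with variables $x_v \in [0,1]$ for the undecided $v \in R$, interpreting $x_v = 1$ as $v \in S$. The key structural claim is that, under the invariant $R \subseteq N[V_1' \cup V_2' \cup V_3']$ combined with the tree-decomposition separator property of $R$, the $(V_1,V_2,V_3)$-separation condition on $S \subseteq R$ is captured by inequalities of the form $x_v \ge 1$ (for each $v \in R$ lying in two neighborhoods $N(V_i') \cap N(V_j')$) and $x_u + x_v \ge 1$ (for each edge $uv \in E(G[R])$ that witnesses a length-$3$ path from some $V_i'$ to some $V_j'$). To control the independence number I add, for every independent set $I \subseteq R$ of size $2k+1$, the constraint $\sum_{v \in I} x_v \le k$, enforced via a separation oracle that finds a maximum-weight size-$(2k+1)$ independent set in $G[R]$ in $n^{\OO(k)}$ time by the packing algorithm of Dallard et al.~\cite{dallard2022firstpaper} together with the bound $\alpha(R) = \OO(k^2)$. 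Rounding $\widetilde{x}_v := \mathbf{1}[x_v \ge 1/2]$ preserves all pairwise separation constraints, so the rounded set is a valid $(V_1,V_2,V_3)$-separator, and $\sum_{v \in I} \widetilde{x}_v \le 2 \sum_{v \in I} x_v \le 2k$ gives $\alpha(S) \le 2k$; if no guess/branch/rounding yields a valid separator, the algorithm reports that no separator with $\alpha(S) \le k$ exists.

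The main obstacle will be verifying the structural reduction at the heart of Stage~(iii): after the branching of Stage~(ii) has absorbed the undecided part of $R$ into the closed neighborhood of the three enlarged sides, the full $(V_1,V_2,V_3)$-separation condition in $G$ must really be captured by the $\le 2$-variable inequalities above. This uses both the neighborhood-cover invariant and the tree-decomposition fact that $R$ separates $G$, so that any long $V_i$-$V_j$ path in $G$ entering $R$ for the first time at $u$ and leaving it for the last time at $w$ forces $u \in N(V_i')$, $w \in N(V_j')$, and enables a shortcut argument contributing either a unary or a binary constraint. A secondary difficulty is the efficient generation of the exponentially many independence constraints, which I plan to address via the separation oracle described above, exploiting the bound $\alpha(G[R]) = \OO(k^2)$ established in Stage~(i).
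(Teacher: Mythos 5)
Your three-stage plan — cover the target separator by $\OO(k)$ bags of $\td$ to obtain $R$ with $\alpha(R) = \OO(k^2)$, branch to make $R \subseteq N(V_1' \cup V_2' \cup V_3')$, then round an LP — matches the architecture of the paper's proof (Lemmas~\ref{lem:sepapxLP}--\ref{lem:bag_cover_is}), and Stages~(i) and~(ii), including the two-phase branching measure, are essentially right. The gap is in the separation constraints of the LP in Stage~(iii).

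First, you assert that a violating $V_i$--$V_j$ path ``entering $R$ for the first time at $u$'' forces $u \in N(V_i')$. This does not follow from your setup: the pre-$u$ segment of the path lies in $V(G)\setminus R$, and your branching only moves vertices \emph{of $R$} into the sides $V_i'$, so the vertices of $V(G)\setminus R$ adjacent to $u$ may be unclassified. The paper handles this with a pre-processing step in Lemma~\ref{lem:sepapxLP}: each $V_i$ is replaced by the full set of vertices reachable from it in $G\setminus(R\cup S_0)$, after which $N(V_1\cup V_2\cup V_3)\subseteq R\cup S_0$, and your claim about $u$ becomes true.

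Second, and more importantly, even with that pre-processing your LP constraints are too narrow. You only generate $x_u+x_v\ge 1$ when $uv\in E(G[R])$ witnesses a length-$3$ path. But a violating path can have the form $V_i \to v_i \to w_1 \to \cdots \to w_m \to v_j \to V_j$ with $v_i,v_j\in R$ non-adjacent and $w_1,\ldots,w_m$ lying in $V(G)\setminus(R\cup S_0)$ in a component not reachable from any $V_\ell$ in $G\setminus(R\cup S_0)$ (these exist precisely because your branching and the reachability closure never touch them). Such a path violates no edge constraint, so a fractional solution can set $x_{v_i}=x_{v_j}=0$ and the rounded set need not separate. Your appeal to the ``tree-decomposition separator property of $R$'' does not save this: it only guarantees that $w_1,\ldots,w_m$ live in some subtree component of $T$ minus the guessed bag nodes, which in no way forces $v_i$ and $v_j$ to be adjacent. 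The paper's formulation repairs exactly this: it introduces $x_{v_i}+x_{v_j}\ge 1$ for every \emph{connected pair} $(v_i,v_j)$ with $v_i\in N(V_i)$, $v_j\in N(V_j)$, $i\ne j$, such that there is a $v_i,v_j$-path in $G\setminus\bigl((S_0\cup R)\setminus\{v_i,v_j\}\bigr)$ — i.e.\ a path whose interior avoids $R\cup S_0$. These are still two-variable constraints, so the threshold-$1/2$ rounding still works, and there are only $\OO(n^2)$ of them. With this replacement (and the reachability pre-processing), the rest of your argument goes through as intended.
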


To prove \Cref{the:sepfinding}, we define a more general problem that we call \textsc{partial $3$-way $\alpha$-separator}, which is the same as the problem of \cref{the:sepfinding} except that two sets $S_0$ and $R$ are given with the input, and we are only looking for separators $S$ with $S_0 \subseteq S \subseteq S_0 \cup R$.

\begin{definition}[\textsc{Partial 3-way $\alpha$-separator}]
An instance of \textsc{partial 3-way $\alpha$-separator} is a 5-tuple $(G, (V_1, V_2, V_3), S_0, R, k)$, where $G$ is a graph, $V_1, V_2, V_3, S_0$, and $R$ are disjoint subsets of $V(G)$, and $k$ is an integer.
A solution to \textsc{partial 3-way $\alpha$-separator} is a $(V_1, V_2, V_3)$-separator $S$ such that $S_0 \subseteq S \subseteq S_0 \cup R$.
A $2$-approximation algorithm for \textsc{partial 3-way $\alpha$-separator} either returns a solution $S$ with $\alpha(S) \le 2k$ or determines that there is no solution $S$ with $\alpha(S) \le k$.
\end{definition}

We give a 2-approximation algorithm for \textsc{partial 3-way $\alpha$-separator}.
Then \cref{the:sepfinding} will follow by setting $S_0 = \emptyset$ and $R = V(G) \setminus (V_1 \cup V_2 \cup V_3)$.

We give our 2-approximation algorithm by giving 2-approximation algorithms for increasingly more general cases of \textsc{partial 3-way $\alpha$-separator}.
First, we give a linear programming-based 2-approximation algorithm for the special case when $R \subseteq N(V_1 \cup V_2 \cup V_3)$.
Then, we use branching and the first algorithm to give a 2-approximation algorithm for the case when $\alpha(R) = \OO(k^2)$.
Then, we use the input tree decomposition to reduce the general case to the case where $\alpha(R) = \OO(k^2)$.

Let us make some observations about trivial instances of \textsc{partial 3-way $\alpha$-separator}.
First, we can assume that $\alpha(S_0) \le k$, as otherwise any solution $S$ has $\alpha(S) > k$ and we can immediately return ``no''.
All our algorithms include an $n^{\OO(k)}$ factor in the time complexity, so it can be assumed that this condition is always tested.
Then, we can also always return ``no'' if some vertex of $V_i$ is adjacent to a vertex of $V_j$, where $i \neq j$.
This can be checked in polynomial time, so we can assume that this condition is always tested.
Note that testing this condition implies that $N(V_1 \cup V_2 \cup V_3) = N(V_1) \cup N(V_2) \cup N(V_3)$. 
For simplicity, we write $N(V_1 \cup V_2 \cup V_3)$ as it is shorter.

\medskip

We start with the linear programming-based $2$-approximation algorithm for the case when $R \subseteq N(V_1 \cup V_2 \cup V_3)$.

\begin{lemma}
\label{lem:sepapxLP}
There is an $n^{\OO(k)}$-time $2$-approximation algorithm for \textsc{partial $3$-way $\alpha$-separator} when $R \subseteq N(V_1 \cup V_2 \cup V_3)$.
\end{lemma}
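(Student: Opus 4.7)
I would encode the problem as a linear program with one variable $x_v \in [0,1]$ for each $v \in R$ (intended meaning: $x_v = 1$ iff $v \in S$), solve the LP in $n^{O(k)}$ time, and round by putting $v \in S$ whenever $x_v \ge \tfrac{1}{2}$. First I dispose of obvious rejects in polynomial time: check that $\alpha(S_0) \le k$, that distinct $V_i, V_j$ are non-adjacent in $G$, and that no connected component of $G[U]$ is adjacent to two different $V_i$'s, where $U := V(G) \setminus (V_1 \cup V_2 \cup V_3 \cup S_0 \cup R)$ is the set of ``free'' vertices that no admissible separator can touch. Then for each $r \in R$ I compute a label set $L(r) \subseteq \{1,2,3\}$ consisting of those $i$ such that $r$ has a neighbor in $V_i$ either directly or via a path whose internal vertices lie in $U$; by the hypothesis $R \subseteq N(V_1 \cup V_2 \cup V_3)$ every $L(r)$ is non-empty. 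Call $r, r' \in R$ \emph{$U$-linked} if both touch the same connected component of $G[U]$, so that in $G$ they are joined by a path with internal vertices in $U$.

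\textbf{LP constraints.} The key structural fact is that after the preprocessing, $S = S_0 \cup \{v \in R : x_v = 1\}$ (integral $x$) fails to separate some $V_i$ from $V_j$ in $G$ if and only if either (i) some $r \in R \setminus S$ has $|L(r)| \ge 2$, or (ii) two vertices $r, r' \in R \setminus S$ that are $G$-adjacent or $U$-linked have $L(r) \ne L(r')$; indeed, any path between different $V$-sets in $G \setminus S$ collapses, after contracting $V$-internal and $U$-internal subpaths, to a chain of $R$-vertices witnessing one of (i) or (ii). This yields only two families of separation constraints: (a) $x_r \ge 1$ whenever $|L(r)| \ge 2$, and (b) $x_r + x_{r'} \ge 1$ whenever $r, r' \in R$ are $G$-adjacent or $U$-linked and $L(r) \ne L(r')$. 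To control the independence number, I add for every independent set $I \subseteq S_0 \cup R$ in $G$ with $|I| = 2k+1$ the constraint
\[
\sum_{v \in I \cap R} x_v \;\le\; k - |I \cap S_0|.
\]
Any genuine separator $S^*$ with $\alpha(S^*) \le k$ yields a feasible integral solution, because $|I \cap S^* \cap R| = |I \cap S^*| - |I \cap S_0| \le k - |I \cap S_0|$. There are $\binom{|S_0 \cup R|}{2k+1} = n^{O(k)}$ independence constraints and polynomially many of types (a), (b), so the LP has $n^{O(k)}$ constraints in $|R| \le n$ variables and can be solved in $n^{O(k)}$ time.

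\textbf{Rounding and verification.} If the LP is infeasible, return \textbf{no}; otherwise set $S := S_0 \cup \{v \in R : x_v \ge \tfrac{1}{2}\}$. Each separation constraint involves at most two variables with unit coefficients and right-hand side $1$, so threshold-$\tfrac{1}{2}$ rounding preserves all of them; by the structural fact above, $S$ is a valid $(V_1, V_2, V_3)$-separator. For the independence bound, suppose for contradiction some independent set $I \subseteq S$ in $G$ satisfies $|I| = 2k+1$. Then $|I \cap S_0| \le \alpha(S_0) \le k$, every $v \in I \cap R$ has $x_v \ge \tfrac{1}{2}$, so
\[
\sum_{v \in I \cap R} x_v \;\ge\; \tfrac{1}{2}|I \cap R| \;=\; \tfrac{1}{2}\bigl(2k+1-|I \cap S_0|\bigr) \;>\; k - |I \cap S_0|,
\]
contradicting the LP constraint on $I$; hence $\alpha(S) \le 2k$. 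The main subtlety, I expect, is the preprocessing of the free vertices $U$: absorbing $U$-paths into the labels $L(r)$ and into the ``$U$-linked'' relation is what keeps every separation constraint in the two-variable form $x_r + x_{r'} \ge 1$ (or $x_r \ge 1$), which is precisely what makes factor-$2$ threshold rounding work. Verifying that no path from $V_i$ to $V_j$ can sneak past (a)--(b) by winding through intermediate $V_\ell$-vertices with $\ell \notin \{i,j\}$ is the most delicate case-check in the structural fact.
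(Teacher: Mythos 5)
Your proof is correct and follows essentially the same strategy as the paper's: an LP relaxation with two-variable separation constraints $x_r + x_{r'} \ge 1$ (plus forcing constraints) and independence-set constraints $\sum_{v\in I} x_v \le k$ over all independent $I$ of size $2k+1$, solved in $n^{\OO(k)}$ time and rounded at threshold $\tfrac{1}{2}$. The only differences are cosmetic: the paper absorbs the ``free'' vertices reachable from each $V_i$ in $G \setminus (R \cup S_0)$ directly into the $V_i$'s (so that $N(V_1 \cup V_2 \cup V_3) \subseteq R \cup S_0$ and the separation constraints become ``connected pairs'' $v_i \in N(V_i)$, $v_j \in N(V_j)$ joined by a path avoiding $(R\cup S_0)\setminus\{v_i,v_j\}$), whereas you keep $U$ explicit and encode the same information via the label sets $L(r)$ and the $U$-linked relation, and the paper keeps variables for $S_0$ forced to $1$ where you instead move $|I\cap S_0|$ to the right-hand side.
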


\begin{proof}
First, note that we may assume that for all $i,j\in \{1,2,3\}$, $i\neq j$, there is no path in the graph $G\setminus (R\cup S_0)$ between a vertex of $V_i$ and a vertex of $V_j$, since otherwise the given instance has no solution at all.
Furthermore, under this assumption, we can safely replace each set $V_i$ with the connected component of $G \setminus (R \cup S_0)$ containing $V_i$ in the graph $G\setminus (R\cup S_0)$.
We thus arrive at an instance such that $R\subseteq N(V_1 \cup V_2 \cup V_3) \subseteq R \cup S_0$.
We then make an integer programming formulation of the problem (with $n^{\OO(k)}$ constraints) and show that it gives a 2-approximation by rounding a fractional solution.

For every vertex $v \in R \cup S_0$ we introduce a variable $x_v$, with the interpretation that $x_v = 1$ if $v \in S$ and $x_v = 0$ otherwise.
We force $S_0$ to be in the solution by adding constraints $x_v = 1$ for all $v \in S_0$.
Then, we say that a pair $(v_i, v_j)$ of vertices with $v_i \in N(V_i)$, $v_j \in N(V_j)$, $i \neq j$, is \emph{connected} if there is a $v_i,v_j$-path in the graph $G \setminus ((S_0 \cup R) \setminus \{v_i, v_j\})$.
For any pair $(v_i, v_j)$ of connected vertices we introduce a constraint $x_{v_i} + x_{v_j} \ge 1$ indicating that $v_i$ or $v_j$ should be selected to the solution.
Note that here it can happen that $v_i = v_j$, corresponding to the case when $v_i \in N(V_i) \cap N(V_j)$, resulting in a constraint forcing $v_i$ to be in the solution.
Finally, for every independent set $I \subseteq R \cup S_0$ of size $|I| = 2k+1$, we introduce a constraint $\sum_{v \in I} x_v \le k$.

We observe that any solution $S$ with $\alpha(S) \le k$ corresponds to a solution to the integer program.
In particular, any $(V_1,V_2,V_3)$-separator $S$ must satisfy the $x_{v_i} + x_{v_j} \ge 1$ constraints for connected pairs $(v_i,v_j)$, as otherwise there would be a $V_i$,$V_j$-path in $G \setminus S$, and a separator with $\alpha(S) \le k$ satisfies the independent set constraints as otherwise it would contain an independent set larger than~$k$.

Then, we show that any integral solution that satisfies the $x_{v_i} + x_{v_j} \ge 1$ constraints for connected pairs $(v_i, v_j)$ and the constraints $x_v = 1$ for $v \in S_0$ corresponds to a $(V_1, V_2, V_3)$-separator.
Suppose that all such constraints are satisfied by an integral solution corresponding to a set $S$, but there is a path from $V_i$ to $V_j$ with $i \neq j$ in $G \setminus S$.
Take a shortest path connecting $N(V_i) \setminus S$ to $N(V_j) \setminus S$ in $G \setminus S$ for any $i \neq j$, and note that the intermediate vertices of such a path do not intersect $N(V_1 \cup V_2 \cup V_3)$, as otherwise we could get a shorter path, and therefore do not intersect~$R$.
By $S_0 \subseteq S$, we have that the intermediate vertices do not intersect $R \cup S_0$, so this path is in fact a path in the graph $G \setminus ((S_0 \cup R) \setminus \{v_i, v_j\})$, where $v_i \in N(V_i) \setminus S$ is the first vertex and $v_j \in N(V_j) \setminus S$ is the last vertex.
However, in this case $(v_i, v_j)$ is a connected pair and we have a constraint $x_{v_i} + x_{v_j} \ge 1$, which would be violated by such an integral solution, so we get a contradiction.

We solve the linear program in polynomial time (which is $n^{\OO(k)}$ as the number of variables is $\OO(n)$ and the number of constraints is $n^{\OO(k)}$), and round the solution by rounding $x_v$ to 1 if $x_v \ge 1/2$ and otherwise to 0.
This rounding will satisfy the $x_{v_i} + x_{v_j} \ge 1$ constraints for connected pairs, so the resulting solution corresponds to a separator.
Then, by the constraints $\sum_{v \in I} x_v \le k$ for independent sets $I$ of size $2k+1$, the rounded solution will satisfy $\sum_{v \in I} x_v \le 2k$ for independent sets $I$ of size $2k+1$.
Therefore, there are no independent sets of size $2k+1$ in the resulting solution, so its independence number is at most~$2k$.
\end{proof}

We will pipeline \cref{lem:sepapxLP} with branching to obtain a 2-approximation algorithm for \textsc{partial 3-way $\alpha$-separator} in a setting where $\alpha(R)$ is small.
In our final algorithm, $\alpha(R)$ will be bounded by $\OO(k^2)$, and this is the part that causes the $2^{\OO(k^2)}$ factor in the time complexity.

Let us observe that we can naturally branch on vertices in $R$ in instances of \textsc{partial 3-way $\alpha$-separator}.

\begin{lemma}[Branching]
\label{obs:branch}
Let $\mathcal{I} = (G, (V_1, V_2, V_3), S_0, R, k)$ be an instance of \textsc{partial 3-way $\alpha$-separator}, and let $v \in R$.
If $S$ is a solution of $\mathcal{I}$, then $S$ is also a solution of at least one of
\begin{enumerate}
\item $(G, (V_1 \cup \{v\}, V_2, V_3), S_0, R \setminus \{v\}, k)$,
\item $(G, (V_1, V_2 \cup \{v\}, V_3), S_0, R \setminus \{v\}, k)$,
\item $(G, (V_1, V_2, V_3 \cup \{v\}), S_0, R \setminus \{v\}, k)$, or
\item $(G, (V_1, V_2, V_3), S_0 \cup \{v\}, R \setminus \{v\}, k)$.
\end{enumerate}
Moreover, any solution of any of the instances 1-4 is also a solution of $\mathcal{I}$.
\end{lemma}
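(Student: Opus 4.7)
The plan is to prove both implications of the lemma by a direct case analysis on the position of $v$ relative to a candidate solution $S$. Nothing subtle is at stake: the forward direction only needs the observation that a 3-way separator forces each connected component of $G\setminus S$ to meet at most one of $V_1, V_2, V_3$, while the reverse direction is a routine check that each of the four modifications only strengthens the conditions defining a solution.

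For the forward direction, I would fix a solution $S$ of $\mathcal{I}$ and split on whether $v \in S$ or not. If $v\in S$, then $S$ is already a solution of instance~4: the inclusion $S_0\cup\{v\}\subseteq S$ is immediate from $S_0\subseteq S$ and $v\in S$, the upper bound $S\subseteq S_0\cup R = (S_0\cup\{v\})\cup(R\setminus\{v\})$ is unchanged, and the separator property is identical. If instead $v\notin S$, let $C$ be the connected component of $v$ in $G\setminus S$. Since $S$ is a $(V_1,V_2,V_3)$-separator, no two of the $V_j$'s can lie in the same component of $G\setminus S$, so there exists an index $i\in\{1,2,3\}$ with $C\cap V_j=\emptyset$ for every $j\neq i$. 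I would then verify that $S$ is a solution of instance~$i$: the equality $S\cap(V_i\cup\{v\})=\emptyset$ holds because $v\notin S$ and $S\cap V_i=\emptyset$; the absence of a path from $V_i\cup\{v\}$ to $V_j$ in $G\setminus S$ follows from the original separator property together with the choice of $i$; and the inclusion $S\subseteq S_0\cup(R\setminus\{v\})$ holds because $v\notin S$.

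For the reverse direction, I would take a solution $S$ of any of the four listed instances and show it satisfies the conditions of $\mathcal{I}$. For instance~4, $S_0\cup\{v\}\subseteq S\subseteq S_0\cup R$ immediately gives $S_0\subseteq S\subseteq S_0\cup R$, and the $(V_1,V_2,V_3)$-separator condition is unchanged. For instance~$i\in\{1,2,3\}$, the disjointness $S\cap(V_i\cup\{v\})=\emptyset$ implies $S\cap V_i=\emptyset$, the absence of a $(V_i\cup\{v\})$-to-$V_j$ path in $G\setminus S$ implies the absence of a $V_i$-to-$V_j$ path, and the containment $S\subseteq S_0\cup(R\setminus\{v\})\subseteq S_0\cup R$ is transparent.

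There is no real obstacle here: the entire argument is elementary. The only point requiring any thought is the choice of the index $i$ in the case $v\notin S$, which is forced by the 3-way separator property of $S$. This lemma will then be used by the branching algorithm to reduce any instance with a distinguished vertex $v\in R$ to four strictly smaller instances (measured by $|R|$), which will be combined with \Cref{lem:sepapxLP} to handle the case $\alpha(R)=\OO(k^2)$.
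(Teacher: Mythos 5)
Your proof is correct and essentially the same as the paper's: both arguments hinge on the observation that a connected component of $G\setminus S$ can intersect at most one of the sets $V_1,V_2,V_3$, and then place $v$ in the corresponding branch. The paper packages this as an explicit partition $(V_1',V_2',V_3')$ of $V(G)\setminus S$ while you split on $v\in S$ versus $v\notin S$ and inspect the component of $v$, but the underlying reasoning is identical.
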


\begin{proof}
If $S$ is a solution of $\mathcal{I}$, we can partition $V(G) \setminus S$ into parts $V'_1 \supseteq V_1$, $V'_2 \supseteq V_2$, and $V'_3 \supseteq V_3$ by setting $V_i'$ equal to the union of the vertex sets of connected components of $G \setminus S$ containing vertices of $V_i$
for all $i\in\{1,2,3\}$, and including the vertices in connected components containing no vertices of $V_1, V_2, V_3$ into $V'_1$, resulting in a partition $(V'_1, V'_2, V'_3)$ of $V(G) \setminus S$ such that $S$ is a $(V'_1, V'_2, V'_3)$-separator.
Therefore, the first branch corresponds to when $v \in V'_1$, the second when $v \in V'_2$, the third when $v \in V'_3$, and the fourth when $v \in S$.

The direction that any solution of any of the instances 1-4 is also a solution of the original instance is immediate from the fact that we do not remove vertices from any $V_i$ or~$S_0$.
\end{proof}

\cref{obs:branch} allows branching into four subproblems, corresponding to the situation whether a vertex from  $R$ goes to $V_1$, $V_2$, $V_3$, or to~$S_0$.
Now, our goal is to branch until we derive an instance with $R \subseteq N(V_1 \cup V_2 \cup V_3)$, which can be solved by \cref{lem:sepapxLP}.
In particular, we would like to branch on vertices $v \in R \setminus N(V_1 \cup V_2 \cup V_3)$.
The following lemma shows that we can use $\alpha(R \setminus N(V_1 \cup V_2 \cup V_3))$ as a measure of progress in the branching.

\begin{lemma}
\label{obs:branchadec}
For any vertex $v \in R \setminus N(V_1 \cup V_2 \cup V_3)$ it holds that $\alpha(R \setminus (N[v] \cup N(V_1 \cup V_2 \cup V_3))) < \alpha(R \setminus N(V_1 \cup V_2 \cup V_3))$.
\end{lemma}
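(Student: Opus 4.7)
The plan is to reduce the statement to the intuitive fact that removing a vertex together with its neighborhood strictly decreases the independence number of any induced subgraph containing that vertex.

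First, I would set $A = R \setminus N(V_1 \cup V_2 \cup V_3)$. Since $v \in R \setminus N(V_1 \cup V_2 \cup V_3) = A$, the closed neighborhood $N[v]$ contains $v$, and therefore
\[
R \setminus (N[v] \cup N(V_1 \cup V_2 \cup V_3)) \;=\; \bigl(R \setminus N(V_1 \cup V_2 \cup V_3)\bigr) \setminus N[v] \;=\; A \setminus N[v].
\]
Thus the claim reduces to $\alpha(A \setminus N[v]) < \alpha(A)$ under the assumption $v \in A$.

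Next, I would take a maximum independent set $I$ of $G[A \setminus N[v]]$, so $|I| = \alpha(A \setminus N[v])$. Since $I \subseteq A \setminus N[v]$, no vertex of $I$ lies in $N[v]$; in particular $v \notin I$ and no vertex of $I$ is adjacent to $v$. Because $v \in A$ and $I \subseteq A$, the set $I \cup \{v\}$ is an independent set contained in $A$, giving
\[
\alpha(A) \;\geq\; |I \cup \{v\}| \;=\; |I| + 1 \;>\; \alpha(A \setminus N[v]),
\]
which is exactly the desired strict inequality.

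There is no real obstacle here — the only subtlety is verifying the set identity that lets us peel off $N(V_1 \cup V_2 \cup V_3)$ from both sides, and this works precisely because $v$ is chosen from $R$ but outside $N(V_1 \cup V_2 \cup V_3)$, so $N[v]$ interacts cleanly with the relative complement. The lemma then serves its intended role in the branching: each branch placing $v \in R \setminus N(V_1 \cup V_2 \cup V_3)$ into one of $V_1, V_2, V_3$ enlarges $N(V_1 \cup V_2 \cup V_3)$ to include $N[v]$, strictly dropping the measure $\alpha(R \setminus N(V_1 \cup V_2 \cup V_3))$.
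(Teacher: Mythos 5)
Your proof is correct and takes essentially the same approach as the paper: both extend a maximum independent set of the set with $N[v]$ removed by the vertex $v$ itself to obtain a strictly larger independent set. The paper phrases it as a proof by contradiction while you argue directly, but the key step is identical.
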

\begin{proof}
If $\alpha(R \setminus (N[v] \cup N(V_1 \cup V_2 \cup V_3))) \ge \alpha(R \setminus N(V_1 \cup V_2 \cup V_3))$, then we could take an independent set $I \subseteq R \setminus (N[v] \cup N(V_1 \cup V_2 \cup V_3))$ such that
$|I| = \alpha(R \setminus N(V_1 \cup V_2 \cup V_3))$, and construct an independent set $I \cup \{v\} \subseteq R \setminus N(V_1 \cup V_2 \cup V_3)$ of size $|I \cup \{v\}| > \alpha(R \setminus N(V_1 \cup V_2 \cup V_3))$, which is a contradiction.
\end{proof}

\cref{obs:branchadec} implies that when branching on a vertex $v \in R \setminus N(V_1 \cup V_2 \cup V_3)$, the branches where $v$ is moved to $V_1$, $V_2$, or $V_3$ decrease $\alpha(R \setminus N(V_1 \cup V_2 \cup V_3))$.
This will be the main idea of our algorithm for the case when $\alpha(R)$ is bounded, which we give next.

\begin{lemma}
\label{lem:sepfind_boundalpha}
There is a $2^{\OO(\alpha(R))} n^{\OO(k)}$-time $2$-approximation algorithm for \textsc{partial 3-way $\alpha$-separator}.
\end{lemma}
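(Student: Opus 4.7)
The plan is to reduce, via recursive branching, to the base case handled by \Cref{lem:sepapxLP}. The measure of progress is $\Phi := \alpha(R \setminus N(V_1 \cup V_2 \cup V_3))$; once $\Phi = 0$, we have $R \subseteq N(V_1 \cup V_2 \cup V_3)$ and \Cref{lem:sepapxLP} can be invoked in $n^{\OO(k)}$ time to produce a $2$-approximate separator.

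The recursive step is as follows. Compute a maximum independent set $I$ in $R \setminus N(V_1 \cup V_2 \cup V_3)$ of size $\min(\Phi, 2k+1)$. By \Cref{obs:branch}, every solution assigns each vertex of $I$ to one of $V_1, V_2, V_3, S_0$, and a solution with $\alpha(S) \le k$ places at most $k$ of them into $S_0$ (otherwise the branch is pruned). Enumerating all such assignments gives at most $\sum_{j=0}^{k} \binom{|I|}{j} 3^{|I|-j} \le 4^{|I|} = 2^{\OO(k)}$ branches. Writing $I_V$ for the part of $I$ assigned to $V_1 \cup V_2 \cup V_3$, iteratively applying \Cref{obs:branchadec} along $I_V$ shows that $\Phi$ drops by at least $|I_V|$ per step. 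The payoff regime is $|I| = 2k+1$: the $S_0$-budget forces $|I_V| \ge k+1$, so the $\Phi$-drop dominates the log of the branching factor and this phase contributes only $2^{\OO(\alpha(R))}$ leaves in total. A single step in the intermediate regime $k < |I| \le 2k$ already brings $\Phi$ down to at most $k$ and costs only an additional $2^{\OO(k)}$ factor.

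The main technical difficulty will be the regime $|I| \le k$, where the branch that sends all of $I$ into $S_0$ can leave $\Phi$ unchanged. I plan to handle it by switching to single-vertex branching on $v \in R \setminus N(V_1 \cup V_2 \cup V_3)$ (four branches per step) and to choose $v$ outside $N(S_0)$ whenever possible. In that case, the first three branches decrease $\Phi$ by at least one via \Cref{obs:branchadec}, while the fourth branch strictly increases $\alpha(S_0)$ (since $v$ is non-adjacent to $S_0$, a maximum independent set of $S_0$ can be extended by $v$), so it can occur at most $k$ times on any root-to-leaf path before the $\alpha(S_0) \le k$ check prunes it. Hence any such path has length at most $\Phi + k \le 2k$, giving at most $4^{2k} = 2^{\OO(k)}$ leaves, which is absorbed into the $n^{\OO(k)}$ factor. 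The residual subcase $R \setminus N(V_1 \cup V_2 \cup V_3) \subseteq N(S_0)$ requires a finer structural argument, which I expect to resolve either by a small extension of the LP in \Cref{lem:sepapxLP} that adds constraints along paths passing through $N(S_0) \setminus N(V_1 \cup V_2 \cup V_3)$, or by first preprocessing $S_0$'s neighborhood to restore the applicability of the main branching invariant. Returning the best separator across all leaves (or NO if all are pruned) yields the claimed $2$-approximation running in $2^{\OO(\alpha(R))} n^{\OO(k)}$ time.
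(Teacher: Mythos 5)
Your Phase~1 (the regime $\alpha(R \setminus N(V_1 \cup V_2 \cup V_3)) > 2k$) matches the paper's argument up to an inessential choice of constant ($2k+1$ versus $2k$), and the intermediate regime is a fine but unnecessary refinement. The genuine gap is in the final phase $\alpha(R \setminus N(V_1 \cup V_2 \cup V_3)) \le k$, which you yourself flag: your depth bound hinges on always being able to pick a vertex $v \in (R \setminus N(V_1 \cup V_2 \cup V_3)) \setminus N(S_0)$, and when $R \setminus N(V_1 \cup V_2 \cup V_3) \subseteq N(S_0)$ you have no argument, only a hope that some LP extension or preprocessing will rescue it. That residual case is not rare or pathological --- it can persist for $\Omega(n)$ consecutive branching steps, and neither of the sketched fixes is obviously sound, so as written the proof does not go through.

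The paper sidesteps this entirely by not trying to bound the \emph{depth} of the branching tree in the final phase at all. Once $\alpha(R \setminus N(V_1 \cup V_2 \cup V_3)) < 2k$, it branches on arbitrary single vertices of $R \setminus N(V_1 \cup V_2 \cup V_3)$ (no cleverness in the choice of $v$), and then bounds the \emph{number of root-to-leaf paths} directly: every branch strictly shrinks $|R \setminus N(V_1 \cup V_2 \cup V_3)|$, so path length is at most $n$; by \Cref{obs:branchadec} fewer than $2k$ of the edges along any root-to-leaf path send $v$ into some $V_i$; since the vertex branched on at each node is determined by the path so far, a root-to-leaf path is fully specified by the $<2k$ positions of its $V_i$-edges and the choice of $V_i$ for each, giving at most $n^{2k}\,3^{2k}= n^{\OO(k)}$ paths and hence $n^{\OO(k)}$ nodes. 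This counting argument works regardless of how many consecutive $S_0$-branches occur, which is exactly what your approach cannot control. I'd recommend replacing your final phase with this path-counting argument rather than pursuing the $N(S_0)$-based invariant.
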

\begin{proof}
We give a branching algorithm, whose base case is the case when $R \subseteq N(V_1 \cup V_2 \cup V_3)$, which is solved by \cref{lem:sepapxLP}.
The main idea is to analyze the branching by the paramter $\alpha(R \setminus N(V_1 \cup V_2 \cup V_3))$, in particular with $\alpha(R \setminus N(V_1 \cup V_2 \cup V_3)) = 0$ corresponding to the base case.
The branching itself will be ``exact'' in the sense that all four cases of \cref{obs:branch} are always included, in particular, the 2-approximation is caused only by the application of \cref{lem:sepapxLP}.

First, while $\alpha(R \setminus N(V_1 \cup V_2 \cup V_3)) \ge 2k$, which can be checked in $n^{\OO(k)}$ time, we branch as follows.
We select an independent set $I \subseteq R \setminus N(V_1 \cup V_2 \cup V_3)$ of size $|I| = 2k$, and for all of its vertices we branch on whether to move it into  $V_1$, $V_2$, $V_3$, or $S_0$, i.e., according to \cref{obs:branch}.
Because $I$ is an independent set, at most $k$ of the vertices can go to $S_0$, so at least $k$ go to $V_1$, $V_2$, or~$V_3$.
Also for the reason that $I$ is an independent set, \cref{obs:branchadec} can be successively applied for all of the vertices that go to $V_1$, $V_2$, or~$V_3$.
Therefore, this decreases $\alpha(R \setminus N(V_1 \cup V_2 \cup V_3))$ by at least $k$, so the depth of this recursion is at most $\alpha(R)/k$, so the total number of nodes in this branching tree is at most $(4^{2k})^{\alpha(R)/k} = 2^{\OO(\alpha(R))}$.

Then, we can assume that $\alpha(R \setminus N(V_1 \cup V_2 \cup V_3)) < 2k$.
We continue with a similar branching, this time branching on single vertices.
In particular, as long as $R \setminus N(V_1 \cup V_2 \cup V_3)$ is nonempty, we select a vertex in it and branch on whether to move it into $V_1$, $V_2$, $V_3$, or~$S_0$.
To analyze the size of this branching tree, note that by \cref{obs:branchadec}, when moving a vertex into $V_1$, $V_2$, or $V_3$, the value of $\alpha(R \setminus N(V_1 \cup V_2 \cup V_3))$ decreases by at least one.
Therefore, as initially $\alpha(R \setminus N(V_1 \cup V_2 \cup V_3)) < 2k$, any root-to-leaf path of the branching tree contains less than $2k$ edges that correspond to such branches, and therefore any root-leaf path can be characterized by specifying the $<2k$ indices corresponding to such edges, and whether these indices correspond to $V_1$, $V_2$, or~$V_3$.
The length of any root-leaf path is at most $n$ because $|R \setminus N(V_1 \cup V_2 \cup V_3)|$ decreases in every branch, and therefore the number of different root-to-leaf paths is at most $n^{2k} 3^{2k} = n^{\OO(k)}$, and therefore the total number of nodes in this branching tree is $n^{\OO(k)}$.

Therefore, the total size of both of the branching trees put together is $2^{\OO(\alpha(R))} n^{\OO(k)}$, so our algorithm works by $2^{\OO(\alpha(R))} n^{\OO(k)}$ applications of \cref{lem:sepapxLP}, resulting in a $2^{\OO(\alpha(R))} n^{\OO(k)}$-time algorithm.
\end{proof}

Finally, what is left is to reduce the general case to the case where $\alpha(R) = \OO(k^2)$.
We do not know if this can be done in general, but we manage to do it by using a tree decomposition with independence number $\OO(k)$.
To this end, we show the following lemma.
Given a graph $G$, a tree decomposition $\td$ of $G$, and a set $W\subseteq V(G)$, we say that $W$ is \emph{covered} by a set $\mathcal{B}$ of bags of $\td$ if every vertex in $W$ is contained in at least one of the bags in $\mathcal{B}$.
The lemma generalizes the well-known fact that any clique $W$ of $G$, that is, a set with $\alpha(W)=1$, is covered by a single bag of the tree decomposition.

\begin{lemma}
\label{lem:bag_cover_is}
Let $G$ be a graph, $\td$ a tree decomposition of $G$, and $W$ a nonempty set of vertices of~$G$.
Then $W$ is covered by a set of at most  $2 \alpha(W)-1$ bags of~$\td$.
\end{lemma}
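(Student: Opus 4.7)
The plan is to prove the stronger fact that $W$ can be covered by at most $\alpha(W)$ bags of $\td$. Since $W$ is nonempty we have $\alpha(W)\ge 1$ and therefore $\alpha(W)\le 2\alpha(W)-1$, which gives the stated bound. The approach combines the Helly property for subtrees of a tree with perfection of chordal graphs.

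First, I would introduce an auxiliary graph $H$ with $V(H)=W$ in which two vertices $u,v\in W$ are adjacent exactly when the subtrees $T_u$ and $T_v$ of $T$ intersect, equivalently when some bag of $\td$ contains both $u$ and $v$. As the intersection graph of a family of subtrees of a tree, $H$ is chordal; by the weak perfect graph theorem both $H$ and $\overline{H}$ are perfect. I would then observe that $\alpha(H)\le \alpha(W)$: an independent set of $H$ is a family of pairwise disjoint subtrees of $T$, i.e., a set of vertices of $W$ no two of which share a bag of $\td$, and since every edge of $G$ must be contained in some bag, such a set is an independent set of $G[W]$.

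Using perfection of $\overline{H}$ yields $\chi(\overline{H})=\omega(\overline{H})=\alpha(H)\le \alpha(W)$, so a proper coloring of $\overline{H}$ with $\alpha(H)$ colors partitions $W$ into at most $\alpha(W)$ classes, each of which is a clique of $H$ and hence a family of pairwise intersecting subtrees of $T$. The Helly property for subtrees of a tree then produces, for each class, a common node $t\in V(T)$, and the corresponding bag $X_t$ contains every vertex of the class. Taking one such bag per class therefore covers $W$ with at most $\alpha(W)\le 2\alpha(W)-1$ bags, as required.

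The only mild obstacle is that this route invokes two classical facts: the Helly property for subtrees of a tree and perfection of (complements of) chordal graphs. If one prefers an entirely self-contained elementary argument matching exactly the $2\alpha(W)-1$ bound, the alternative plan is to induct on $|V(T)|$: root $T$, pick $v^*\in W$ whose subtree $T_{v^*}$ has the deepest root $\mathrm{top}(v^*)$ in $T$, take the bag $X_{\mathrm{top}(v^*)}$ (which, by the same subtree argument, simultaneously contains every $v\in W$ with $\mathrm{top}(v^*)\in T_v$), and then recurse on the components of $T-\mathrm{top}(v^*)$; a careful accounting of how $\alpha(W)$ distributes across components recovers the stated bound.
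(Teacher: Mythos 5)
Your argument is correct, and it actually establishes a stronger bound than the paper states: $W$ can always be covered by at most $\alpha(W)$ bags, not merely $2\alpha(W)-1$. Every step checks out. The auxiliary graph $H$ is the intersection graph of the subtrees $\{T_u : u \in W\}$, hence chordal; an independent set in $H$ consists of vertices of $W$ no two of which share a bag, so it is independent in $G[W]$ and $\alpha(H) \le \alpha(W)$; perfection of $H$ (equivalently $\chi(\overline{H}) = \omega(\overline{H}) = \alpha(H)$, or, if you prefer to avoid the weak perfect graph theorem, the standard fact that a chordal graph admits a clique cover of size $\alpha$ via repeated removal of the closed neighborhood of a simplicial vertex) partitions $W$ into at most $\alpha(W)$ cliques of $H$; and the Helly property for subtrees of a tree gives one bag per clique. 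This is a genuinely different route from the paper, which proceeds by a direct induction on $\alpha(W)$: it locates a tree edge $ab$ whose separator $X_a \cap X_b$ splits $W$ nontrivially, uses submodularity $\alpha(C_a \cap W) + \alpha(C_b \cap W) \le \alpha(W)$, and recurses on the two sides, paying one extra bag $X_a$ per split — which is exactly what produces the weaker $2\alpha(W)-1$. Your structural proof is shorter and sharper but leans on classical machinery (chordality of subtree intersection graphs, Helly, perfection); the paper's proof is entirely self-contained and elementary. In the context of the paper's algorithm the improvement is not asymptotically meaningful — the lemma is used to bound $\alpha(R) \le \alpha(\td)\cdot(\text{number of covering bags})$, and both $2k-1$ and $k$ give $\OO(k^2)$ — but the cleaner statement is worth having.
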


\begin{proof}
We denote $\td = (T, \{X_t\}_{t \in V(T)})$.
Every edge $ab \in E(T)$ corresponds to a partition $(X_a \cap X_b, C_a, C_b)$ of $V(G)$, where $C_a$ is the set of vertices of $G \setminus (X_a \cap X_b)$ in the bags of $\td$ that are closer to $a$ than $b$, $C_b$ is the set of vertices of $G \setminus (X_a \cap X_b)$ in the bags of $\td$ that are closer to $b$ than $a$, and $X_a \cap X_b$ is a $(C_a, C_b)$-separator.

First, assume that for every edge $ab$ either $C_a \cap W = \emptyset$ or $C_b \cap W = \emptyset$.
If both $C_a \cap W =C_b \cap W = \emptyset$, then $W\subseteq X_a\cap X_b$, and we cover $W$ by a single bag $X_a$ (or $X_b$).
Thus, we may assume that for every edge exactly one of the sets  $C_a \cap W$ and $C_b \cap W$  is nonempty.
We orient the edge $ab$ from $a$ towards $b$ if $C_b \cap W \neq \emptyset$, and from $b$ to $a$ if $C_a \cap W \neq \emptyset$.
Because $T$ is a tree, there exists a node $t \in V(T)$ such that all edges incident with $t$ are oriented towards~$t$.
This implies that $X_t$ covers $W$ because otherwise some edge would be oriented away from~$t$.

Note that if $\alpha(W) = 1$, then $W$ is a clique and indeed for every edge $ab$  either $C_a \cap W = \emptyset$ or $C_b \cap W = \emptyset$.
The remaining case is that $\alpha(W) \ge 2$ and there exists an edge $ab$ such that both $|C_a \cap W| \ge 1$ and $|C_b \cap W| \ge 1$ hold.
In this case, we use induction on $\alpha(W)$.
Since $X_a \cap X_b$ is a $(C_a, C_b)$-separator, we have that $\alpha(C_a \cap W) + \alpha(C_b \cap W) \le \alpha(W)$.
In particular, note that $\alpha(C_a \cap W) \leq \alpha(W) - \alpha(C_b \cap W) < \alpha(W)$, and similarly $\alpha(C_b \cap W) < \alpha(W)$.
Therefore, we can take the union of a smallest set of bags covering $C_a \cap W$, a smallest set of bags covering $C_b \cap W$, and the bag~$X_a$. 
By induction, this set of bags covering $W$ contains at most $2\alpha(W \cap C_a)-1 + 2\alpha(W \cap C_b)-1 + 1 \le 2\alpha(W)-1$ bags.
\end{proof}

With \cref{lem:bag_cover_is}, we can use a tree decomposition $\td$ with independence number $\alpha(\td) = \OO(k)$ to 2-approximate the general case of \textsc{partial 3-way $\alpha$-separator} as follows.
By \cref{lem:bag_cover_is}, any solution $S$ with $\alpha(S) \le k$ is covered by at most $2k-1$ bags of~$\td$.
Therefore, with $\td$ available (and having $n^{\OO(1)}$ bags by standard arguments), we can guess a set of at most $2k-1$ bags of $\td$ that cover $S$, and intersect $R$ by the union of these bags, resulting in $\alpha(R) \le \alpha(\td) (2k-1) = \OO(k^2)$.
Therefore, we solve the general case by $n^{\OO(k)}$ applications of \cref{lem:sepfind_boundalpha} with $\alpha(R) = \OO(k^2)$, resulting in a total time complexity of $2^{\OO(k^2)} n^{\OO(k)}$.
This completes the proof of \cref{the:sepfinding}.

\subsection{From separators to balanced separators}
In this subsection we show that \cref{the:sepfinding} can be used to either find certain balanced separators for sets $W \subseteq V(G)$ or to show that the tree-independence number of the graph is large.

To enforce the ``balance'' condition, we cannot directly enforce that the separator separates a given set $W$ into sets of small independence numbers.  (This would result in time complexity exponential in $|W|$ instead of $\alpha(W)$.)
Instead, we  fix a maximum independent set in $W$, and enforce that this independent set is separated in a balanced manner.
As long as the independent set is large enough, this will enforce that the separator is balanced also with respect to $\alpha$.
The following lemma, which is an adaptation of a well-known lemma given in Graph~Minors~II~\cite{DBLP:journals/jal/RobertsonS86}, is the starting point of this approach.

\begin{lemma}
\label{lem:sep_exist}
Let $G$ be a graph with the tree-independence number at most $k$ and $I \subseteq V(G)$ an independent set.
Then there exists a partition $(S, C_1, C_2, C_3)$ of $V(G)$ (where some $C_i$ can be empty) such that $S$ is a $(C_1, C_2, C_3)$-separator, $\alpha(S) \le k$, and $|I \cap (C_i \cup C_j)| \ge |I|/2-k$ for any pair $i,j \in \{1,2,3\}$ with $i \neq j$.
\end{lemma}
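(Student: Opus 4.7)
The plan is to use a centroid-type argument on a tree decomposition $\td = (T, \{X_t\}_{t \in V(T)})$ of $G$ with $\alpha(\td) \le k$, which exists by hypothesis. I will take $S$ to be a single bag $X_t$ for a well-chosen node $t \in V(T)$, and obtain $C_1, C_2, C_3$ by grouping the connected components of $T - t$ into three classes.

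For each edge $e = ab$ of $T$, removing $e$ splits $V(G) \setminus (X_a \cap X_b)$ into two sets $U_a^e, U_b^e$ corresponding to the two subtrees of $T - e$. Orient $e$ toward whichever side has the larger value of $|I \cap U|$, breaking ties arbitrarily. A standard walk argument in the oriented tree produces a node $t$ all of whose incident edges are oriented toward it. Let $T_1, \ldots, T_d$ be the components of $T - t$, write $t_j$ for the neighbor of $t$ in $T_j$, and set $C^0_j = \bigl(\bigcup_{s \in V(T_j)} X_s\bigr) \setminus X_t$. The connectivity property of tree decompositions gives $\bigl(\bigcup_{s \in V(T_j)} X_s\bigr) \cap X_t = X_t \cap X_{t_j}$, so $C^0_j = U_{t_j}^{e_j}$ for $e_j = t t_j$; in particular $(X_t, C^0_1, \ldots, C^0_d)$ partitions $V(G)$ and $X_t$ separates the $C^0_j$'s pairwise. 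Combining the identity $|I| = |I \cap U_t^{e_j}| + |I \cap C^0_j| + |I \cap X_t \cap X_{t_j}|$ with the orientation inequality $|I \cap U_t^{e_j}| \ge |I \cap C^0_j|$ then yields $|I \cap C^0_j| \le |I|/2$.

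Next I will use a short packing lemma: any nonnegative reals $a_1, \ldots, a_d$ with $\max_j a_j \le M$ and $\sum_j a_j \le 2M$ can be partitioned into three groups each of sum at most $M$. The proof is by induction on $d$: for $d \le 3$ place each $a_j$ alone, and for $d \ge 4$ the two smallest (in sorted order) satisfy $a_{d-1} + a_d \le M$ (otherwise $a_{d-1} > M/2$ forces $a_1, \ldots, a_{d-1} > M/2$, whence $\sum_j a_j > 2M$), so merge them into an item of value at most $M$ and recurse. Applying this with $a_j = |I \cap C^0_j|$ and $M := |I|/2 - x + k$, where $x := |I \cap X_t| \le \alpha(X_t) \le k$, the hypotheses hold because $\max_j a_j \le |I|/2 \le M$ and $\sum_j a_j = |I| - x \le 2M$. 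Taking the three resulting groups as $C_1, C_2, C_3$ and $S := X_t$, the separation property and the bound $\alpha(S) \le k$ are immediate, while for distinct $i, j \in \{1,2,3\}$ with third index $\ell$,
\[
|I \cap (C_i \cup C_j)| = |I| - x - |I \cap C_\ell| \ge |I| - x - M = |I|/2 - k,
\]
as required.

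The main subtlety is the degenerate case $d = 1$, when $t$ is a leaf. The packing lemma then places $C^0_1$ alone and leaves $C_2 = C_3 = \emptyset$, so the only nontrivial pairwise inequality to verify is $|I \cap (C_2 \cup C_3)| = 0 \ge |I|/2 - k$. But the orientation of the single edge at $t$ toward $t$, together with $|I \cap U_t^e| \le |I \cap X_t| \le k$, forces $|I \cap C^0_1| \le k$ and hence $|I| \le 2k$, making this inequality hold. No other corner cases arise.
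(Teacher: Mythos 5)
Your approach is essentially the same as the paper's: take a tree decomposition with independence number at most $k$, orient the tree edges toward the side holding more of $I$, find a sink node $t$ whose bag serves as $S$, observe that each component of $G \setminus X_t$ meets $I$ in at most $|I|/2$ vertices, and then group the components into three classes that each remain below threshold. The paper performs the grouping by repeatedly merging the two components with smallest $|I \cap \cdot|$ while there are at least four parts, which is exactly the induction step of your packing lemma.

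There is one small but genuine gap in the proof of your packing lemma, at $d = 4$. You argue that if $a_{d-1} + a_d > M$ then $a_{d-1} > M/2$, so $a_1, \ldots, a_{d-1} > M/2$, "whence $\sum_j a_j > 2M$." For $d = 4$ this only yields $a_1 + a_2 + a_3 > 3M/2$, which together with $a_4 \ge 0$ gives $\sum_j a_j > 3M/2$, not $> 2M$. The lemma is still true, but the argument needs to also use $a_d$: since $d \ge 4$, the indices $1, 2, d{-}1, d$ are distinct and the sorted order gives $a_1 + a_2 \ge a_{d-1} + a_d$, so $\sum_j a_j \ge (a_1 + a_2) + (a_{d-1} + a_d) > 2M$. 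With that repair your proof is correct. (You may also want to note the trivial case $d = 0$, where $T$ has a single node and $|I| \le k$, though it poses no difficulty.)
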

\begin{proof}
Let $\td = (T, \{X_t\}_{t \in V(T)})$ be a tree decomposition of $G$ with $\alpha(\td)\le k$.
As in \Cref{lem:bag_cover_is}, we introduce the following notation.
Every edge $ab \in E(T)$ of $T$ corresponds to a partition $(X_a \cap X_b, C_a, C_b)$ of $V(G)$, where $C_a$ is the set of vertices of $G \setminus (X_a \cap X_b)$ in the bags of $\td$ that are closer to $a$ than $b$, $C_b$ is the set of vertices of $G \setminus (X_a \cap X_b)$ in the bags of $\td$ that are closer to $b$ than $a$, and $X_a \cap X_b$ is a $(C_a, C_b)$-separator.
We orient the edge $ab$ from $a$ to $b$ if $|C_b \cap I| > |I|/2$, from $b$ to $a$ if $|C_a \cap I| > |I|/2$, and otherwise arbitrarily.
Now, because $T$ is a tree, there exists a node $t \in V(T)$ such that all of its incident edges are oriented towards it.
Therefore, for all connected components $C$ of $G \setminus X_t$, we see that $|V(C)\cap I| \le |I|/2$, as otherwise some edge would be oriented out of~$t$.

Let $C_1, C_2, \dots$ be the vertex sets of connected components of $G \setminus X_t$.
As long as the number of such sets is at least 4, we can take two of them with the smallest values of $|C_j \cap I|$ and merge them: the obtained set $C^*$ is such that $|C^* \cap I| \leq |I|/2$.
In the end, we get a partition $(X_t, C_1, C_2, C_3)$ of $V(G)$ such that $X_t$ is a $(C_1,C_2,C_3)$-separator, $\alpha(X_t) \le k$, and $|I \cap C_i| \le |I|/2$ for all $i \in \{1,2,3\}$.
Then, because $|I \cap C_i| \le |I|/2$, we have that  $|I \cap (V(G) \setminus C_i)| \ge |I|/2$.
Therefore, since $|I \cap X_t| \le k$, it follows that $|I \cap (C_j \cup C_\ell)| \ge |I|/2-k$, where $(i,j,\ell)$ is any permutation of the set $\{1,2,3\}$.
\end{proof}

Next we use \cref{lem:sep_exist} together with the separator algorithm of \cref{the:sepfinding} to design an algorithm for finding $\alpha$-balanced separators of sets $W$ with $\alpha(W) = 6k$.

\begin{lemma}
\label{lem:decomp_sepfind}
There is an algorithm that, for a given graph $G$, an integer $k$, a tree decomposition of $G$ with independence number $\OO(k)$, and a set $W \subseteq V(G)$ with $\alpha(W) = 6k$, in time $2^{\OO(k^2)} n^{\OO(k)}$ either concludes that the tree-independence number of $G$ is larger than $k$, or finds a partition $(S, C_1, C_2, C_3)$ of $V(G)$ such that $S$ is a $(C_1,C_2,C_3)$-separator, $\alpha(S) \le 2k$, at most one of $C_1$, $C_2$, and $C_3$ is empty, and $\alpha(W \cap C_i) \le 4k$ for each $i \in \{1,2,3\}$.
\end{lemma}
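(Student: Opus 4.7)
The plan is to combine the existential result of \Cref{lem:sep_exist} with the separator-finding algorithm of \Cref{the:sepfinding}, by guessing where each vertex of a maximum independent set of $W$ lies in the desired partition. First I would compute a maximum independent set $I \subseteq W$, which has size exactly $6k$ since $\alpha(W) = 6k$; this takes time $n^{\OO(k)}$ by brute force over size-$6k$ subsets of $W$. I would then enumerate all $4^{|I|} = 2^{\OO(k)}$ partitions of $I$ into four parts $(I_1, I_2, I_3, I_S)$, corresponding to the four conjectured locations (the three sides and the separator), and retain only the ``valid'' partitions satisfying $|I_S| \le k$ and $|I_j| + |I_\ell| \ge 2k$ for every pair $j \neq \ell$ in $\{1,2,3\}$. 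For each valid partition I would call \Cref{the:sepfinding} on $(V_1, V_2, V_3) = (I_1, I_2, I_3)$, using the provided tree decomposition, asking for a $(V_1, V_2, V_3)$-separator of independence number at most $2k$.

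If some valid partition causes \Cref{the:sepfinding} to return such a separator $S$, I would assemble the final partition $(S, C_1, C_2, C_3)$ by placing each connected component of $G \setminus S$ that meets $I_j$ into $C_j$ and pushing every remaining component into any fixed $C_j$ with $I_j \neq \emptyset$. Then $I_j \subseteq C_j$, and the pair condition forces at most one of the $I_j$ (hence at most one of the $C_j$) to be empty, as required. To verify $\alpha(W \cap C_i) \le 4k$, I would use the fact that since $S$ separates $C_1, C_2, C_3$ pairwise there are no edges between them, so any independent set in $W \cap (C_1 \cup C_2 \cup C_3)$ decomposes across the $C_j$, which gives
\[
\alpha(W \cap C_1) + \alpha(W \cap C_2) + \alpha(W \cap C_3) \;\le\; \alpha(W) \;=\; 6k.
\]
Since $I_j \subseteq W \cap C_j$ is independent, $\alpha(W \cap C_j) \ge |I_j|$; thus for any $i$ with $\{j,\ell\} = \{1,2,3\} \setminus \{i\}$,
\[
\alpha(W \cap C_i) \;\le\; 6k - |I_j| - |I_\ell| \;\le\; 6k - 2k \;=\; 4k.
\]

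If every valid partition causes \Cref{the:sepfinding} to report that no separator of independence number at most $k$ exists, I would conclude that $\tin(G) > k$. Correctness of this conclusion is the contrapositive of the following: if $\tin(G) \le k$, then \Cref{lem:sep_exist} applied to $I$ produces a partition $(S^\star, C_1^\star, C_2^\star, C_3^\star)$ with $\alpha(S^\star) \le k$ and $|I \cap (C_j^\star \cup C_\ell^\star)| \ge 2k$ for every pair. Setting $I_j = I \cap C_j^\star$ and $I_S = I \cap S^\star$ yields a valid partition for which $S^\star$ itself is an $(I_1, I_2, I_3)$-separator of independence number at most $k$, so on this partition \Cref{the:sepfinding} is guaranteed to return a separator of independence number at most $2k$ rather than reporting ``no''.

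The running time is $2^{\OO(k)}$ partitions times $2^{\OO(k^2)} n^{\OO(k)}$ per call to \Cref{the:sepfinding}, giving $2^{\OO(k^2)} n^{\OO(k)}$ in total. The only subtle point is tolerating empty parts: when some $I_j = \emptyset$ the call to \Cref{the:sepfinding} degenerates to a two- or one-way problem (paths to an empty set are vacuously absent), and the corresponding $C_j$ may end up empty, but the pair condition keeps the number of empty $C_j$'s to at most one. I expect the main technical obstacle to be exactly this bookkeeping for empty parts combined with the subadditivity argument on $\alpha$ across a multiway separation; the latter is what converts a ``balance in $|I \cap C_i|$'' guarantee, which is easy to control by guessing, into the desired ``balance in $\alpha(W \cap C_i)$'' guarantee demanded by the statement.
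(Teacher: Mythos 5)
Your proof is correct and follows essentially the same approach as the paper: fix an independent set $I \subseteq W$ of size $6k$, use \Cref{lem:sep_exist} for the existential guarantee, enumerate $4^{|I|}$ partitions of $I$, invoke \Cref{the:sepfinding} on each guess $(I_1,I_2,I_3)$, and bound $\alpha(W \cap C_i)$ via the subadditivity of $\alpha$ across a separation together with the guessed lower bounds $|I_j| + |I_\ell| \ge 2k$. You are in fact slightly more careful than the paper's writeup in explicitly assigning the components of $G \setminus S$ that meet no vertex of $I$ to some nonempty part so that $(S, C_1, C_2, C_3)$ is genuinely a partition of $V(G)$.
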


\begin{proof}
First, we take an arbitrary independent set $I \subseteq W$ of size $|I| = 6k$, which can be found in $n^{\OO(k)}$ time.
If the tree-independence number of $G$ is at most $k$, then, by \cref{lem:sep_exist}, there exists a partition $(S, C_1, C_2, C_3)$ of $V(G)$ such that $S$ is a $(C_1,C_2,C_3)$-separator, $\alpha(S) \le k$, and $|I \cap (C_i \cup C_j)| \ge |I|/2-k \ge 2k$ for any pair $i,j \in \{1,2,3\}$ with $i \neq j$.
We guess the intersection of such a partition with $I$, in particular we guess the partition $(S \cap I, C_1 \cap I, C_2 \cap I, C_3 \cap I)$, immediately enforcing that it satisfies the constraints $|I \cap (C_i \cup C_j)| \ge 2k$ for $i \neq j$.

For each such guess, we use \cref{the:sepfinding} to either find a $(C_1 \cap I, C_2 \cap I, C_3 \cap I)$-separator with independence number at most $2k$ or to decide that no such separator with independence number at most $k$ exists.
The set $S$ of the partition guaranteed by \cref{lem:sep_exist} is indeed a $(C_1 \cap I, C_2 \cap I, C_3 \cap I)$-separator with independence number at most $k$, so if the algorithm reports for every guess that no such separator exists, we return that $G$ has tree-independence number larger than~$k$.

Otherwise, for some guess a $(C_1 \cap I, C_2 \cap I, C_3 \cap I)$-separator $S'$ with $\alpha(S') \le 2k$ is found, and we return the partition $(S', C'_1, C'_2, C'_3)$, where, for $i\in \{1,2\}$, the set $C'_i$ is the union of the vertex sets of components of $G \setminus S'$ that contain a vertex of $C_i \cap I$, and $C_3' = V(G) \setminus (S'\cup C'_1 \cup C'_2)$.
Also, because $|I \cap (C_i' \cup C_j')| \ge 2k$ for any pair $i \neq j$, the set $C_i'$ cannot be empty for more than one~$i$.

The algorithm works by first finding an independent set of size $6k$ and then using the algorithm of \cref{the:sepfinding} at most $4^{6k}$ times, so the total time complexity is $\OO(n^{6k}) + 4^{6k} \cdot 2^{\OO(k^2)} n^{\OO(k)} = 2^{\OO(k^2)} n^{\OO(k)}$.
\end{proof}

\subsection{Constructing the decomposition}\label{subsec:constructing}

Everything is prepared for the final step of the proof---the algorithm constructing a tree decomposition with independence number at most $8k$ by using the balanced separator algorithm of \cref{lem:decomp_sepfind}.
Our algorithm constructs a tree decomposition from the root to the leaves by maintaining an ``interface'' $W$ and breaking it with balanced separators.
This is a common strategy used for various algorithms for constructing tree decompositions and branch decompositions.
In our case, perhaps the largest hurdle in the proof is the analysis that the size of the recursion tree and the constructed decomposition is polynomial in~$n$.

A rooted tree decomposition is a tree decomposition where one node is designated as the root.

\begin{lemma}
\label{lem:maindecompalg}
There is an algorithm that, for a given graph $G$, an integer $k$, a tree decomposition of $G$ with independence number $\OO(k)$, and a set $W \subseteq V(G)$ with $\alpha(W) \le 6k$, in time $2^{\OO(k^2)} n^{\OO(k)}$ either determines that the tree-independence number of $G$ is larger than $k$ or returns a rooted tree decomposition $\td$ of $G$ with independence number at most $8k$ such that $W$ is contained in the root bag of~$\td$.
\end{lemma}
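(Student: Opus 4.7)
The plan is to design a recursive top-down algorithm following the Robertson--Seymour strategy: at each call, use the balanced separator algorithm \Cref{lem:decomp_sepfind} to split the current subinstance, place the resulting separator together with the current interface in a new bag, and recurse on each non-trivial side. Concretely, on input $(G, W)$ I would first greedily pad $W$ with arbitrary vertices of $V(G) \setminus W$ until either $V(G) = W$ (base case: return a one-bag decomposition whose bag is $W$, of independence number at most $6k \le 8k$) or $\alpha(W) = 6k$. In the latter case, I apply \Cref{lem:decomp_sepfind} to $W$ (passing along the restriction of the given tree decomposition of independence number $\OO(k)$ to the current vertex set). Either it reports $\tin(G) > k$, which I propagate upwards using $\tin(G[C_i \cup S]) \le \tin(G)$ for induced subgraphs, or it returns a partition $(S, C_1, C_2, C_3)$ with $\alpha(S) \le 2k$, at most one empty $C_i$, and $\alpha(W \cap C_i) \le 4k$. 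I then form the root bag $B := W \cup S$ (of independence number at most $6k + 2k = 8k$) and recurse on each nonempty $(G[C_i \cup S], W_i)$ with $W_i := S \cup (W \cap C_i)$ (of independence number at most $2k + 4k = 6k$), attaching each recursive result as a child of $B$.

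Correctness of the produced object should be routine by induction on $|V(G)|$. Vertex and edge coverage follow because $(S, C_1, C_2, C_3)$ partitions $V(G)$ and $S$ is a $(C_1, C_2, C_3)$-separator, so edges within $B$ are covered by $B$, edges within $G[C_i \cup S]$ are covered by the recursive decomposition, and no edges cross between distinct $C_i$'s. Connectivity of the subtree induced by the bags containing any fixed vertex can be checked case by case depending on whether the vertex lies in $S$ (in $B$ and in every $W_i$), in $W \cap C_i \setminus S$ (in $B$ and in $W_i$), or in $C_i \setminus W$ (only in the $i$-th child's subtree). The containment $W \subseteq B$ and the bound $\alpha(B) \le 8k$ are immediate by construction.

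The main obstacle will be to bound the recursion tree by a polynomial in $n$ so that the overall running time stays $2^{\OO(k^2)} n^{\OO(k)}$. Termination is easy: because at most one $C_i$ is empty, the graph $G[C_i \cup S]$ has $|C_i| + |S| \le |V(G)| - 1$ vertices, bounding the recursion depth by $n$. For the total number of calls, I would introduce the potential $\Phi(G, W) := |V(G) \setminus W|$ measured after the padding step. Every non-leaf call has $\Phi \ge 1$ (otherwise we would have stopped in the base case), and since the sets $C_i \setminus W$ are pairwise disjoint subsets of $V(G) \setminus W$, the children's potentials satisfy
\[\sum_{i} \Phi(G[C_i \cup S], W_i) \;=\; \sum_i |C_i \setminus W| \;\le\; |V(G) \setminus W| \;=\; \Phi(G, W),\]
and internal padding inside a child can only decrease that child's $\Phi$ further. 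By an easy induction on the depth $j$, the sum of $\Phi$ over all calls at depth $j$ is at most $\Phi$ of the root, which is at most $n$. Since each non-leaf contributes at least $1$ to this sum, each depth holds at most $n$ non-leaf calls, giving $\OO(n^2)$ non-leaves in total and, since the branching factor is at most $3$, $\OO(n^2)$ recursive calls altogether. Multiplying by the per-call cost $2^{\OO(k^2)} n^{\OO(k)}$ coming from \Cref{lem:decomp_sepfind} absorbs the $n^2$ factor into $n^{\OO(k)}$ and matches the claimed bound.
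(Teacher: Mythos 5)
Your algorithm and its correctness proof match the paper's: pad $W$ until $\alpha(W)=6k$ or $W=V(G)$ (the paper equivalently checks $\alpha(G)\le 6k$ up front), invoke \Cref{lem:decomp_sepfind}, set the root bag to $W \cup S$ (independence number $\le 8k$), and recurse on $(G[C_i\cup S],\, S\cup (W\cap C_i))$ (new interfaces of independence number $\le 6k$), with the same coverage and connectivity checks. Where you take a genuinely different route is the bound on the size of the recursion tree, which the paper flags as the main hurdle of this lemma. The paper's argument is structural: it counts \emph{forget nodes} (at most $n$, since each vertex is forgotten only once in a tree decomposition), argues that each node has at most one non-forget non-leaf child (because $\alpha(W\cap C_i)=4k$ can hold for at most one $i$ when $\alpha(W)=6k$), and decomposes the remaining non-forget non-leaf nodes into at most $n$ vertical paths of length at most $n$. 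Your potential-function argument is more direct: take $\Phi := |V(G)\setminus W|$ after padding, note that the sets $C_i\setminus W$ are pairwise disjoint subsets of $V(G)\setminus W$ so the children's potentials sum to at most the parent's, and deduce that each depth level contains at most $n$ internal nodes since every internal node has $\Phi \ge 1$. Both arguments yield $\OO(n^2)$ nodes; yours is shorter and sidesteps the forget/non-forget case analysis, a clean simplification of what the paper considers the trickiest part of this proof.
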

\begin{proof}
The algorithm will be based on recursively constructing the decomposition, using $W$ as the interface in the recursion.
First, if $\alpha(G) \le 6k$, we return the trivial tree decomposition with only one bag $V(G)$.
Otherwise, we start by inserting arbitrary vertices of $G$ into $W$ until the condition $\alpha(W) = 6k$ holds.

Then, we apply the algorithm of \cref{lem:decomp_sepfind} to find a partition $(S, C_1, C_2, C_3)$ of $V(G)$ such that $S$ is a $(C_1, C_2, C_3)$-separator, $\alpha(S) \le 2k$, $\alpha(W \cap C_i) \le 4k$ for each $i \in \{1,2,3\}$, and at most one of $C_1$, $C_2$, $C_3$ is empty, or to determine that the tree-independence number of $G$ is larger than $k$, in this case returning ``no'' immediately.
Then, we construct the tree decomposition recursively as follows: for each $i \in \{1,2,3\}$, we recursively use the algorithm with the graph $G_i = G[C_i \cup S]$ and the set $W_i = (C_i \cap W) \cup S$.
Let $\td_1$, $\td_2$, $\td_3$ be the obtained tree decompositions and let $r_1$, $r_2$, $r_3$ be their root nodes.
We create a new root node $r$ with a bag $X_r = S \cup W$, and connect $r_1$, $r_2$, and $r_3$ as children of~$r$.

\cref{lem:decomp_sepfind} guarantees that $\alpha(C_i \cap W) \le 4k$ and $\alpha(S) \le 2k$, and therefore $\alpha(W_i) \le 6k$.
Also, $\alpha(S \cup W) \le 8k$ because $\alpha(W) \le 6k$ and $\alpha(S) \le 2k$.
Therefore, the  independence number of the constructed tree decomposition  is at most~$8k$.
The constructed  tree decomposition satisfies all conditions of tree decompositions:
Because $S$ is a separator between $C_1$, $C_2$, and $C_3$, when recursing into the graphs $G_i = G[C_i \cup S]$ for $i \in \{1,2,3\}$,  the union of the graphs $G_1$, $G_2$, and $G_3$ includes all vertices and edges of~$G$. Therefore, by induction, every vertex and edge will be contained in some bag of the constructed tree decomposition (the base case is $\alpha(G) \le 6k$).
By induction, the decomposition satisfies also the connectivity condition: if a vertex occurs in $G_i$ and $G_j$ for $i \neq j$, then it is in $S$ and therefore in the bag $X_r$ and also in the sets $W_i$ and $W_j$ and therefore in the root bags $X_{r_i}$ and $X_{r_j}$ of $\td_i$ and $\td_j$.

It remains to argue that the size of the recursion tree (and equivalently the size of the decomposition constructed) is $n^{\OO(1)}$.
First, by the guarantee of \cref{lem:decomp_sepfind} that $C_i$ is empty for at most one $i \in \{1,2,3\}$, we have that each $G_i$ has strictly fewer vertices than $G$, and therefore the constructed tree has height at most~$n$.
We say that a constructed node $t$ is a forget node if there exists a vertex $v$ contained in the bag of $t$, but not in the bag of the parent of $t$.
The number of forget nodes is at most $n$ because a vertex can be forgotten only once in a tree decomposition.

Recall that in the start of each recursive call, on a graph $G_i$ and a subset $W_i$, we either recognize that $\alpha(G_i) \le 6k$, creating a leaf node in this case, or add vertices to $W_i$ until $\alpha(W_i) = 6k$.
In the latter case, as these added vertices were not initially in $W_i$, they are not in the bag of the parent node, and therefore the node constructed in such a call will be a forget node if any such vertices are added.
Therefore, the new node constructed can be a non-forget non-leaf node only if $\alpha(W_i) = 6k$ already for the initial input~$W_i$.
Then, we observe that $\alpha(W_i) = 6k$ can hold for the initial input $W_i$ only if $\alpha(C_i \cap W) = 4k$ did hold for the corresponding component $C_i$ of the parent and the corresponding set~$W$.
Therefore, as $\alpha(C_i \cap W) = 4k$ can hold for at most one $i \in \{1,2,3\}$, we have that any node can have at most one non-forget non-leaf child node.

It follows that non-forget non-leaf nodes can be decomposed into maximal paths going between a node and its ancestor, and each of these paths has at most $n$ nodes by the height of the tree.
Each such path either starts at the root or its highest node is a child of a forget node.
Since the number of forget nodes is at most $n$, the number of maximal paths of non-forget non-leaf nodes is at most $n+1$, and therefore the number of non-forget non-leaf nodes is at most~$n(n+1)$.
The number of leaf nodes is at most three times the number of non-leaf nodes, so the total number of nodes is $\OO(n^2)$.

Therefore, the algorithm works by $\OO(n^2)$ applications of the algorithm of \cref{lem:decomp_sepfind}, and therefore its time complexity is $2^{\OO(k^2)} n^{\OO(k)}$.
\end{proof}

It remains to observe that by using iterative compression, we can satisfy the requirement of \cref{lem:maindecompalg} to have a tree decomposition with independence number $\OO(k)$ as an input (in particular, here the independence number will be at most $8k+1$), and therefore \cref{lem:maindecompalg} implies \cref{the:main_alg}.

In more detail, we order the vertices of $G$ as $v_1, \ldots, v_n$, and iteratively compute tree decompositions with independence number at most $8k$ for induced subgraphs $G[\{v_1, \ldots, v_i\}]$, for increasing values of~$i$.
The iterative computation guarantees that when computing the tree decomposition for $G[\{v_1, \ldots, v_i\}]$, we have the tree decomposition for $G[\{v_1, \ldots, v_{i-1}\}]$ with independence number at most $8k$ available, which can be used to obtain a tree decomposition with independence number at most $8k+1$ of $G[\{v_1, \ldots, v_i\}]$ by adding $v_i$ to each bag to be used as the input tree decomposition.
More precisely, we use \cref{lem:maindecompalg} to either determine in time $2^{\OO(k^2)} n^{\OO(k)}$ that the tree-independence number of $G[\{v_1, \ldots, v_i\}]$ is larger than $k$ or obtain a rooted tree decomposition $\td$ of $G[\{v_1, \ldots, v_i\}]$ with independence number at most~$8k$.
As the tree-independence number does not increase when taking induced subgraphs, if for some induced subgraph we conclude that the tree-independence number is larger than $k$, we can conclude the same holds also for~$G$.
Otherwise, after $n$ steps we will have a rooted tree decomposition $\td$ of $G$ with independence number at most~$8k$.

\section{Hardness of computing tree-independence number}
\label{sec:lower-bound}

In this section, we complement our main algorithmic result by complexity lower bounds.

The following folklore observation will be useful for us.

\begin{observation}\label{obs:folk}
Let $G$ be a graph and let $A,B\subseteq V(G)$ be disjoint subsets of vertices such that each vertex of $A$ is adjacent to every vertex of $B$, that is, the edges between $A$ and $B$ compose a biclique.
Then for every tree decomposition  $\mathcal{T} = (T, \{X_t\}_{t\in V(T)})$ of  $G$, there is $t\in V(T)$ such that $A\subseteq X_t$ or $B\subseteq X_t$.
\end{observation}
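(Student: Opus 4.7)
The plan is to exploit the fact that for every vertex $v\in V(G)$, the set $T_v=\{t\in V(T):v\in X_t\}$ induces a subtree of $T$, and that subtrees of a tree enjoy the Helly property: a family of subtrees has a common node if and only if they pairwise intersect. The biclique hypothesis plus property (2) of a tree decomposition gives, for every $a\in A$ and $b\in B$, a node of $T$ whose bag contains both $a$ and $b$, hence $T_a\cap T_b\neq\emptyset$ for all such pairs.

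I would then argue by contradiction. Suppose that for no $t$ do we have $A\subseteq X_t$, that is $\bigcap_{a\in A}T_a=\emptyset$. By the Helly property for subtrees of a tree, there must exist $a_1,a_2\in A$ with $T_{a_1}\cap T_{a_2}=\emptyset$. Since $T$ is a tree, there is then an edge $e=xy\in E(T)$ lying on every path from $T_{a_1}$ to $T_{a_2}$ whose removal splits $T$ into two components $T^x\supseteq T_{a_1}$ and $T^y\supseteq T_{a_2}$ (the unique edge on the shortest path between the two subtrees).

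Now fix any $b\in B$. We know $T_b$ meets both $T_{a_1}$ and $T_{a_2}$, so $T_b$ contains at least one node in $T^x$ and at least one node in $T^y$. As $T_b$ is connected in $T$, it must contain the edge $e$, and in particular both of its endpoints $x$ and $y$. Therefore $b\in X_x$ for every $b\in B$, giving $B\subseteq X_x$ and contradicting our assumption. Thus either $A\subseteq X_t$ for some $t$ or $B\subseteq X_t$ for some $t$, which is exactly what was to be shown.

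The only real subtlety is the invocation of the Helly property for subtrees of a tree; this is a well-known fact and causes no difficulty, so I do not expect any serious obstacle in this proof beyond setting up the notation cleanly.
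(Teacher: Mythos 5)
Your proof is correct, and it takes a genuinely different route from the paper's. The paper forms the chordal supergraph $G'$ obtained by completing every bag into a clique, observes that an induced $C_4$ would arise from a non-edge inside $A$ together with a non-edge inside $B$, and concludes from chordality that one of $A$, $B$ is a clique of $G'$ and hence contained in a bag. You instead work directly with the subtrees $T_v$, invoke the Helly property for subtrees of a tree to extract a disjoint pair $T_{a_1}, T_{a_2}$, pick a separating edge $e = xy$ of $T$, and argue that every $T_b$ must cross $e$ so that $B \subseteq X_x$. The paper's argument is shorter given the well-known facts about chordal graphs (chordal iff clique tree, cliques covered by bags); yours is more self-contained and makes the separating mechanism visible, at the cost of appealing to the subtree Helly property. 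One small wording issue: after deriving $B \subseteq X_x$ you say this \emph{contradicts} the assumption, but the assumption was only that no bag contains $A$; in fact you have not reached a contradiction but have directly established the second disjunct of the claim, which is exactly what was required. This does not affect correctness, but the phrasing should be ``which establishes the claim'' rather than ``contradicting our assumption.''
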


\begin{proof}
Let $G'$ be the graph obtained from $G$ by adding edges so that each bag $X_t$ becomes a clique.
Then $G'$ is a chordal supergraph of~$G$.
Note that either $A$ or $B$ is a clique in $G'$, since otherwise the two vertices of any non-edge in $A$ and two vertices of any non-edge in $B$ would form an induced $4$-cycle in~$G'$.
Since $\mathcal{T}$ is also a tree decomposition of $G'$ and each clique is covered by a single bag of the tree decomposition, there exists a node $t\in V(T)$ such that $A\subseteq X_t$ or $B\subseteq X_t$.
\end{proof}

In particular, we use this observation to show that from the point of view of parameterized approximation, approximating the tree-independence number of $G$ is not easier than approximating the independence number.

\begin{lemma}\label{lem:inappr}
Suppose there is an algorithm that, for a given $n$-vertex graph $G$ and a positive integer $k$, in time $t(k,n)$ can distinguish between the cases $\tin(G) \le k$ and $\tin(G) > f(k)$, where $f$ is a computable function.
Then, there is an algorithm that, for a given $n$-vertex graph $G$ and a positive integer $k$, in $t(k,2n) + n^{\OO(1)}$ time can distinguish between the cases $\alpha(G) \le k$ and $\alpha(G) > f(k)$.
\end{lemma}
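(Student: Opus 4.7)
The plan is to reduce the promise problem for $\alpha$ to its analogue for $\tin$ by building, in polynomial time, a $2n$-vertex graph $G'$ from the given $n$-vertex graph $G$ satisfying $\tin(G') = \alpha(G)$. Once such a reduction is in hand, running the hypothesized algorithm on $(G', k)$ takes time $t(k, 2n)$, the construction itself takes $n^{\OO(1)}$ time, and the verdict transfers verbatim: $\tin(G') \le k$ exactly when $\alpha(G) \le k$, and $\tin(G') > f(k)$ exactly when $\alpha(G) > f(k)$.

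For the construction, I would take a fresh disjoint copy $V' = \{v'_1, \ldots, v'_n\}$ of $V(G)$, keep $V'$ independent, retain the edges of $G$ on $V(G)$, and add every edge $v_i v'_j$ for $i,j \in [n]$. The pairs $V(G)$ and $V'$ then form a biclique in $G'$, setting up an application of \autoref{obs:folk}.

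For the upper bound $\tin(G') \le \alpha(G)$, I would exhibit the path decomposition whose $i$-th bag is $V(G) \cup \{v'_i\}$. Since $v'_i$ is adjacent to every vertex of $V(G)$, it cannot extend any independent set drawn from $V(G)$, so each bag has independence number exactly $\alpha(G)$ (the empty case is trivial). The three axioms are immediate: every edge of $G$ lies inside $V(G)$, which is contained in every bag; each biclique edge $v_i v'_j$ lies in the $j$-th bag; each $v_i \in V(G)$ appears in every bag; and each $v'_i$ appears in a single bag.

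For the lower bound $\tin(G') \ge \alpha(G)$, I would invoke \autoref{obs:folk} on the biclique $(V(G), V')$: every tree decomposition $\td = (T, \{X_t\}_{t \in V(T)})$ of $G'$ has a bag $X_t$ containing $V(G)$ entirely, or a bag $X_t$ containing $V'$ entirely. In the former case $\alpha(X_t) \ge \alpha(G[V(G)]) = \alpha(G)$, and in the latter $\alpha(X_t) \ge |V'| = n \ge \alpha(G)$ because $V'$ is independent in $G'$. The only point that warrants a moment's thought, and the only candidate \emph{obstacle}, is checking that the ``$V'$-side'' case is not vacuously worse than $\alpha(G)$, which is settled by the trivial bound $n \ge \alpha(G)$.
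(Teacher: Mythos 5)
Your proof is correct, and it follows the same high-level plan as the paper: build a $2n$-vertex graph $G'$ with $\tin(G') = \alpha(G)$ using a biclique so that \autoref{obs:folk} can be invoked, then run the hypothesized $\tin$ algorithm on $(G', k)$. The construction itself is slightly different, though, and this changes the upper-bound argument in a way worth noting. The paper takes two disjoint copies $G_1, G_2$ of $G$ and joins them completely; then $\alpha(G') = \alpha(G)$ holds (every independent set lives in one copy), so the trivial single-bag decomposition already witnesses $\tin(G') \le \alpha(G')=\alpha(G)$, and no explicit path decomposition is needed. Your construction instead replaces the second copy by an independent set $V'$ of $n$ fresh vertices, which forces $\alpha(G') = n$; the single-bag decomposition is therefore useless for the upper bound, and you correctly compensate with the path decomposition whose $i$-th bag is $V(G) \cup \{v'_i\}$. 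The lower bounds are analogous: both invoke \autoref{obs:folk}, but where the paper gets $\alpha(X_t)\ge\alpha(G)$ immediately from a copy of $G$ landing in a bag, you need the extra observation $n \ge \alpha(G)$ to cover the $V'$-side case. Net effect: the paper's construction is a touch cleaner because the upper bound becomes trivial, while yours is arguably more elementary (no complement or second copy of $G$'s edges to reason about) at the cost of a slightly longer argument. Both are valid, both take time $n^{\OO(1)}$, and both yield $\tin(G')=\alpha(G)$, so the transfer of the distinguishing algorithm goes through identically.
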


\begin{proof}
We show the lemma by giving a polynomial-time algorithm that given an $n$-vertex graph $G$ constructs a $2n$-vertex graph $G'$ with $\tin(G') = \alpha(G)$.

We consider the reduction used by Dallard et al.\ in~\cite{DallardMS24} to prove that computing the tree-independence number is \NP-hard.
Construct the graph $G'$ by taking two disjoint copies $G_1$ and $G_2$ of $G$ and making each vertex of $G_1$ adjacent to every vertex of~$G_2$. Clearly, $\alpha(G')=\alpha(G)$, because a set of vertices $X$ is an independent set of $G'$ if and only if $X$ is either an independent set of $G_1$ or an independent set of~$G_2$.
By \cref{obs:folk}, we have that every tree decomposition of $G'$ has a bag that contains $V(G_1)$ or $V(G_2)$. Therefore, the trivial tree decomposition of $G'$ with the unique bag $V(G')$ is optimal and $\tin(G')=\alpha(G')=\alpha(G)$.
\end{proof}

The lemma can be used to obtain inapproximability lower bounds for computing the tree-independence number by using existing lower bounds for the approximation of the independence number.
We refer to the surveys of Feldmann et al.~\cite{FeldmannSLM20} and Ren~\cite{Ren21}, and the recent paper of Karthik and Khot~\cite{KarthikK21} for the statements of various complexity assumptions and lower bounds based on them which can be combined with \Cref{lem:inappr}.
Here, we spell out the consequences of the \W-hardness of independent set approximation by Lin~\cite{Lin21} and the {\sf Gap-ETH} result of Chalermsook et al.~\cite{DBLP:journals/siamcomp/ChalermsookCKLM20}.

\begin{theorem}\label{thm:inappr}
 For any constant $c\geq 1$, there is no algorithm running in  $f(k)\cdot n^{\OO(1)}$ time for a computable function $f(k)$ that, given an $n$-vertex graph and a positive integer $k$,
 can distinguish between the cases $\tin(G)\leq k$ and $\tin(G)>ck$, unless $\FPT=\W$.
 Moreover, assuming {\sf Gap-ETH}, for any computable function $g(k) \ge k$, there is no algorithm running in $f(k) \cdot n^{o(k)}$ time for a computable function $f(k)$ that, given an $n$-vertex graph and a positive integer $k$,
 can distinguish between the cases $\tin(G)\leq k$ and $\tin(G)>g(k)$.
 \end{theorem}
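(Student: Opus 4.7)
The plan is to derive both statements by applying Lemma \ref{lem:inappr} as a contrapositive, reducing approximation lower bounds for the independence number to those for the tree-independence number. Since the reduction in Lemma \ref{lem:inappr} is polynomial-time and produces a graph whose size is linear in the original and whose tree-independence number exactly equals the independence number of the original, it preserves both FPT-type running times and the $n^{o(k)}$ form of any lower bound.

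For the first statement, I would argue by contradiction. Fix a constant $c \ge 1$ and suppose there is an algorithm running in $f(k)\cdot n^{\OO(1)}$ time that distinguishes $\tin(G)\leq k$ from $\tin(G)>ck$. Applying Lemma \ref{lem:inappr} with this fixed value of $c$ (so that the function in the lemma is $f(k)=ck$), we obtain an algorithm running in $f(k)\cdot (2n)^{\OO(1)} + n^{\OO(1)} = f(k)\cdot n^{\OO(1)}$ time that distinguishes $\alpha(G)\le k$ from $\alpha(G)>ck$ in a given $n$-vertex graph. This contradicts the result of Lin \cite{Lin21}, which shows that this gap version of \textsc{Independent Set} is \W-hard for every constant $c$, unless $\FPT = \W$.

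For the second statement, I would do the same contrapositive reduction under Gap-ETH. Suppose, assuming Gap-ETH, that there exists a computable $g(k)\ge k$ and a computable $f(k)$ together with an algorithm running in $f(k)\cdot n^{o(k)}$ time that distinguishes $\tin(G)\le k$ from $\tin(G)>g(k)$. Plugging this into Lemma \ref{lem:inappr}, the produced independent-set algorithm runs in time $f(k)\cdot (2n)^{o(k)} + n^{\OO(1)}$. Since $(2n)^{o(k)} = 2^{o(k)} \cdot n^{o(k)}$ and $2^{o(k)}$ is absorbed into a new computable $f'(k)$, the total running time is $f'(k)\cdot n^{o(k)}$, giving an algorithm that distinguishes $\alpha(G)\le k$ from $\alpha(G)>g(k)$. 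This contradicts the Gap-ETH lower bound of Chalermsook, Cygan, Kortsarz, Laekhanukit, Manurangsi, Nanongkai, and Trevisan \cite{DBLP:journals/siamcomp/ChalermsookCKLM20} on the approximation of \textsc{Independent Set}.

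There is essentially no obstacle beyond matching the form of the lower bounds to the running times in the hypothesis; the only care needed is checking that the linear blow-up $n \mapsto 2n$ of the reduction absorbs cleanly into both the $n^{\OO(1)}$ factor of the parameterized statement and the $n^{o(k)}$ factor of the Gap-ETH statement, which is routine.
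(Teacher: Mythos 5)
Your proof is correct and matches the paper's approach: the paper derives Theorem~\ref{thm:inappr} exactly as you do, by applying \Cref{lem:inappr} as a black-box reduction to the independent-set lower bounds of Lin and of Chalermsook et al., with the same routine check that the linear blow-up $n\mapsto 2n$ is absorbed into the running-time forms.
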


The simple reduction from the proof of \Cref{lem:inappr} immediately implies that it is \W-hard to decide whether $\tin(G)\leq k$ for the parameterization by~$k$.
However, the problem is, in fact, harder. We prove that it is \NP-complete to decide whether $\tin(G)\leq 4$.
For this, we use the following auxiliary result.
We use $\tau(G)$ to denote the \emph{vertex cover} number of $G$, that is, the minimum size of a set $S\subseteq V(G)$ such that for every edge $uv\in E(G)$, $u\in S$ or $v\in S$.

\begin{lemma}\label{lem:aux}
    It is \NP-complete to decide whether a given graph $G$ has two disjoint subsets $U_1,U_2\subseteq V(G)$ such that, for $i = 1,2$,
    \begin{itemize}
    \item[(i)] $\tau(G[U_i])\leq 1$ and
    \item[(ii)] for every clique $K$ of size three, $U_i\cap K\neq \emptyset$.
    \end{itemize}
    Moreover, the problem remains \NP-complete when restricted to graphs without cliques of size four.
\end{lemma}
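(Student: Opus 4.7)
The plan is to prove membership in NP and then show hardness by reducing from 3-Coloring restricted to $K_4$-free graphs, which is well known to be NP-complete. Membership in NP is straightforward: given a candidate pair $(U_1, U_2)$ with $U_1, U_2 \subseteq V(G)$, I will check in polynomial time that each $G[U_i]$ has vertex cover number at most $1$ (equivalently, that $G[U_i]$ is a disjoint union of a single, possibly empty, star with isolated vertices) and that every one of the $\OO(n^3)$ triangles of $G$ is met by both $U_1$ and $U_2$.

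For hardness, given a $K_4$-free graph $H$ I will construct $G$ by keeping $V(H)$ and $E(H)$ and, for each edge $e = uv$ of $H$, adding a new vertex $w_e$ adjacent to both $u$ and $v$. This produces a triangle $\{u, v, w_e\}$ for every edge of $H$ and forces every edge of $G$ to lie in some triangle. Since $H$ is $K_4$-free and every $w_e$ has degree two, $G$ inherits $K_4$-freeness. The intended correspondence assigns ``color'' $i$ to a vertex $v \in V(H)$ whenever $v \in U_i$ for $i \in \{1, 2\}$, and ``color'' $3$ whenever $v \notin U_1 \cup U_2$.

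The forward direction is routine: from a proper 3-coloring $c \colon V(H) \to \{1,2,3\}$ I will set
\[
U_i \;=\; c^{-1}(i) \;\cup\; \{\,w_{uv} : \{c(u), c(v)\} = \{1,2,3\} \setminus \{i\}\,\}, \qquad i \in \{1, 2\}.
\]
Each triangle of $G$ then realizes all three ``colors'', so it is met by both $U_1$ and $U_2$, and a direct check shows that $G[U_i]$ is edgeless, giving $\tau(G[U_i]) = 0$.

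The backward direction is the heart of the argument. Given a good pair $(U_1, U_2)$, I define $c$ through the intended correspondence and aim to prove it is a proper 3-coloring of $H$. For an edge $uv \in E(H)$ with $c(u) = c(v) = 3$, the triangle $\{u, v, w_{uv}\}$ can be met by both $U_1$ and $U_2$ only via $w_{uv}$, forcing $w_{uv} \in U_1 \cap U_2 = \emptyset$, a contradiction. The main obstacle is the case $c(u) = c(v) = i \in \{1, 2\}$: the edge $uv$ then lies in $G[U_i]$, and the relaxed condition $\tau(G[U_i]) \le 1$ merely forces all edges of $G[U_i]$ to share a single ``star center'' and so does not immediately rule such an edge out. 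To close this gap I will augment $G$ with a small constant-size \emph{anchor gadget}: a $K_4$-free subgraph attached via additional triangles whose combined hitting requirements force, in any good pair, the star center of each $G[U_i]$ to be a specific gadget vertex lying outside $V(H) \cup \{w_e : e \in E(H)\}$. Once this is ensured, any edge $uv \in E(H)$ in $G[U_i]$ would be disjoint from the forced center, contradicting $\tau(G[U_i]) \le 1$. Designing the anchor gadget so that it simultaneously preserves $K_4$-freeness and admits the forward-direction assignment on top of it is where the main technical work of the proof will lie.
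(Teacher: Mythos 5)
Your overall strategy coincides with the paper's: both reduce from \textsc{3-Coloring} restricted to $K_4$-free graphs, both use the trick of adding a pendant vertex $w_{uv}$ for each edge $uv$ to guarantee that every edge lies in a triangle (so that ``$U_1$ and $U_2$ hit every triangle'' forces $V(G)\setminus(U_1\cup U_2)$ to be independent), and both use the intended correspondence between $(U_1,U_2,V(G)\setminus(U_1\cup U_2))$ and a $3$-coloring. You also correctly isolate the only genuine obstacle: the relaxation from ``$U_i$ independent'' to $\tau(G[U_i])\le 1$ means a monochromatic edge inside $V(H)$ cannot be excluded directly, and some extra structure must absorb the one allowed edge of each $G[U_i]$.

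However, the piece you explicitly leave open --- the ``anchor gadget'' --- is precisely the content of the paper's proof, so what you have is a correct outline with the main lemma still to be supplied. The paper's gadget is remarkably simple: take the \emph{disjoint union} of the edge-subdivided instance with two vertex-disjoint copies $H_1,H_2$ of the wheel $W_5$ (a $5$-cycle plus a dominating hub). No connecting triangles are needed. The key facts are (a) $W_5$ is $K_4$-free, so the construction stays $K_4$-free; (b) $W_5$ is not $3$-colorable, so in any valid $(U_1,U_2)$ the independent set $W=V(G')\setminus(U_1\cup U_2)$ cannot be a full color class inside $H_j$, forcing an edge of $G'[U_1]$ or $G'[U_2]$ inside each $H_j$; and (c) since $H_1$ and $H_2$ are vertex-disjoint and $\tau(G'[U_i])\le 1$ forbids two vertex-disjoint edges in $U_i$, each of $U_1,U_2$ must spend its single allowed edge inside one of the wheels, leaving $U_1\cap V(G)$ and $U_2\cap V(G)$ independent. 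This is a weaker and easier-to-achieve condition than your plan of forcing a \emph{specific} star center: you only need the one allowed edge of each $G[U_i]$ to be confined to the gadget, not to pin down its endpoints. For the forward direction one also has to exhibit an assignment of the wheel vertices to $U_1$ and $U_2$ that hits every triangle of both wheels while keeping $\tau(G'[U_i])\le 1$; this is a small finite check (the paper gives it in a figure). In short: your approach is the same as the paper's, but you should replace the unspecified ``anchor gadget attached via triangles'' by the concrete choice of two disjoint $W_5$'s and carry out the two short verifications above.
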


\begin{proof}
We reduce from the $3$-\textsc{Coloring} problem. Recall that the task of $3$-\textsc{Coloring} is to decide whether a graph $G$ admits a \emph{proper $3$-coloring}, that is, its vertices can be colored by three colors in such a way that adjacent vertices receive distinct colors. Equivalently,
the vertex set of $G$ has a partition $(I_1,I_2,I_3)$ into independent sets, called \emph{color classes}.
The problem is well-known to be \NP-complete~\cite{GareyJ79}. Note that the problem stays \NP-complete for graphs that
have no cliques of size four.
Observe also that $3$-\textsc{Coloring} is \NP-complete for graphs $G$ such that each edge is contained in a triangle, that is, a cycle of length three.
To see this, let $G'$ be the graph obtained from $G$ by constructing a new vertex $w_{uv}$ for every edge $uv\in E(G)$ and making $w_{uv}$ adjacent to both $u$ and~$v$. We have that every edge of $G'$ is in a triangle and it is easy to see that $G$ has a proper $3$-coloring if and only if $G'$ has the same property.

For an integer $k\geq 3$, the \emph{wheel} graph $W_k$ is the graph obtained from a cycle $C_k$ on $k$ vertices by adding a new vertex and making it adjacent to each vertex of the cycle.

Let $G$ be a graph that does not contain cliques of size four such that every edge is in a triangle. We construct $G'$ by taking the disjoint union of $G$ and two copies of $W_5$ denoted by $H_1$ and~$H_2$. Notice that $G'$ has no cliques of size four.
We claim that $V(G)$ has a partition into three color classes if and only if $G'$ has two disjoint subsets $U_1,U_2\subseteq V(G')$ satisfying conditions (i) and (ii) of the lemma.

Suppose that $(I_1,I_2,I_3)$ is a partition of $V(G)$ into three color classes. We set $U_1:=I_1$, $U_2:=I_2$, and add to the sets some vertices of $H_1$ and $H_2$ as it is shown in \Cref{fig:H}.
By construction, $\tau(G'[U_i])\leq 1$ for $i=1,2$.
Suppose that $K$ is a clique in $G'$ of size three.
If $K$ is a clique in $H_1$ or $H_2$, then $U_i\cap K\neq \emptyset$ for $i=1,2$ by construction (see \Cref{fig:H}). If $K$ is a clique in $G$, then the vertices of $K$ are colored by distinct colors. Therefore, $U_i\cap K\neq \emptyset$ for $i=1,2$.

\begin{figure}[ht]
\centering
\scalebox{0.7}{
\input{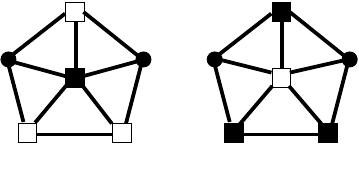_t}
}
\caption{The placement of the vertices of $H_1$ and $H_2$ in $U_1$ and $U_2$; the vertices denoted by black squares are in $U_1$ and the vertices denoted by white squares are in~$U_2$.}\label{fig:H}
\end{figure}

Suppose now that $G'$ has two disjoint subsets $U_1,U_2\subseteq V(G')$ satisfying conditions (i) and (ii). Let $W=V(G')\setminus (U_1\cup U_2)$. Recall that every edge of $G$ and, therefore, every edge of $G'$ is contained in a triangle. Hence, if $uv$ is an edge of $G'$, then there is a clique $K$ of size three such that $u,v\in K$. We have that $U_i\cap K\neq\emptyset$ for $i=1,2$. This means that at most one vertex of $K$ is in~$W$. Hence, either $u\notin W$ or $v\notin W$ implying that
 $W$ is an independent set.  Because $W_5$ is not $3$-colorable and $W$ is an independent set, we have that $H_1$ contains at least two adjacent vertices of $U_1$ or $U_2$ and the same holds for~$H_2$. Because $\tau(G[U_i])\leq 1$ for $i=1,2$, we have that  $\tau(G[U_i\cap (V(H_1)\cup V(H_2))])=1$ for $i=1,2$. Therefore, $I_1=V(G)\cap U_1$ and $I_2=V(G)\cap U_2$ are independent sets and  $(I_1,I_2,I_3)$, where $I_3=W\cap V(G)$, is a partition of $V(G)$ into three color classes.
\end{proof}

We use \Cref{lem:aux} to show \Cref{the:main_nphard} which we restate here.

\npcmain*

\begin{proof}
We show the theorem for $k=4$ and then explain how to extend the proof for $k\geq 5$.

\begin{figure}[ht]
\centering
\scalebox{0.7}{
\input{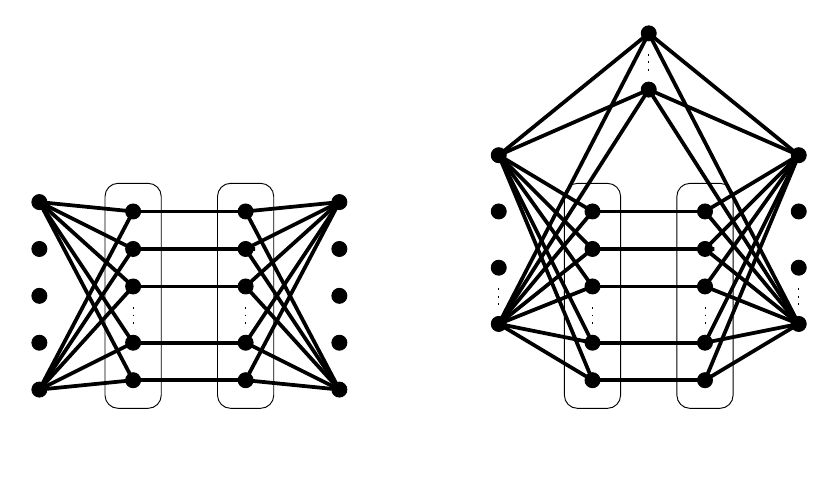_t}
}
\caption{The construction of $G'$ for $k=4$ (a) and for $k\geq 5$ (b).}\label{fig:GNPc}
\end{figure}

We reduce from the problem from \Cref{lem:aux}.
Let $G$ be a graph without cliques of size four. We construct the graph $G'$ as follows (see \Cref{fig:GNPc} (a)).
\begin{itemize}
\item Take two disjoint copies $G_1$ and $G_2$ of $\overline{G}$ and make every vertex of $G_1$ adjacent to its copy in~$G_2$.
\item Add 5 new vertices $x_1,\ldots,x_5$ and make each of them adjacent to every vertex of~$G_1$.
 \item Add 5 new vertices $y_1,\ldots,y_5$ and make each of them adjacent to every vertex of~$G_2$.
\end{itemize}
Throughout the proof we denote for any vertex $v\in V(\overline{G}) = V(G)$ the two copies of $v$ in $G_1$ and $G_2$ by $v^{(1)}$ and $v^{(2)}$, respectively.

It is easy to see that the graph $G'$ can be constructed in polynomial time.
We claim that $G$ has two disjoint subsets $U_1,U_2\subseteq V(G)$ such that  (i) $\tau(G[U_i])\leq 1$ for $i=1,2$
and (ii) for every clique $K$ of size three, $U_i\cap K\neq \emptyset$ for $i=1,2$ if and only if $\tin(G')\leq 4$.

Suppose that  $G$ has two disjoint subsets $U_1,U_2\subseteq V(G)$ satisfying (i) and (ii).  We construct a tree decomposition  $\mathcal{T} = (T, \{X_t\}_{t\in V(T)})$ of  $G'$ such that $\alpha(G'[X_t])\leq 4$ for all $t\in V(T)$.
Denote by $v_1,\ldots,v_n$ the vertices of $G$ and assume that (a) $U_1=\{v_1,\ldots,v_\ell\}$ and $U_2=\{v_r,\ldots,v_n\}$ for $1\leq\ell<r\leq n$, and (b) $\{v_1,\ldots,v_{\ell-1}\}$ and $\{v_{r+1},\ldots,v_n\}$ are independent sets in $G$; we can make assumption (b) because $\tau(G[U_i])\leq 1$ for $i=1,2$.
Then $\mathcal{T}$ is constructed as follows.
\begin{itemize}
\item To construct $T$, introduce a path $P=t_0\cdots t_{2n}$ with $2n+1$ nodes, and then set the corresponding bags  $X_{t_{2i}}=\{v_{i+1}^{(1)},\ldots,v_n^{(1)}\}\cup\{v_1^{(2)},\ldots,v_{i}^{(2)}\}$ for $i\in\{0,\ldots,n\}$ and
$X_{t_{2i-1}}=\{v_i^{(1)},\ldots,v_n^{(1)}\}\cup\{v_1^{(2)},\ldots,v_{i}^{(2)}\}$ for $i\in\{1,\ldots,n\}$; in particular $X_{t_0}=V(G_1)$ and $X_{t_{2n}}=V(G_2)$.
\item For every $j\in\{1,\ldots,5\}$, construct a node $t_j'$ of $T$, make it adjacent to $t_0$, and set $X_{t_j'}=V(G_1)\cup\{x_j\}$.
\item For every $j\in\{1,\ldots,5\}$, construct a node $t_j''$ of $T$, make it adjacent to $t_{2n}$, and set $X_{t_j''}=V(G_2)\cup\{y_j\}$.
\end{itemize}
The construction immediately implies that $\mathcal{T}$ is indeed a  feasible tree decomposition of~$G'$. We claim that $\alpha(G'[X_t])\leq 4$ for every node $t$ of~$T$.

Because $G$ has no cliques of size four, we have $\alpha(G_1)=\alpha(G_2)=\alpha(\overline{G})\leq 3$.
Therefore, $\alpha(G'[X_t])\leq 3$ for $t\in\{t_1',\ldots,t_5'\}\cup\{t_1'',\ldots,t_5''\}\cup\{t_0,t_{2n}\}$. 
Notice that for every $i\in\{1,\ldots,n\}$, $X_{t_{2i}}\subset X_{t_{2i-1}}$.
Hence, it is sufficient to prove that $\alpha(G'[X_{t_{2i-1}}])\leq 4$ for every $i\in\{1,\ldots,n\}$.
Let $i\in\{1,\ldots,n\}$ and let $I$ be an independent set of maximum size in $X_{2i-1}=\{v_i^{(1)},\ldots,v_n^{(1)}\}\cup\{v_1^{(2)},\ldots,v_{i}^{(2)}\}$.
We consider the following cases.

Suppose that $i<\ell$. We have that $\{v_1,\ldots,v_i\}$ is an independent set in~$G$. This means that  $\{v_1^{(2)},\ldots,v_{i}^{(2)}\}$ is a clique in $G_2$ and $|I\cap \{v_1^{(2)},\ldots,v_{i}^{(2)}\}|\leq 1$.  Because $G$ has no cliques of size four, $|I\cap \{v_i^{(1)},\ldots,v_n^{(1)}\}|\leq \alpha(G_1)\leq 3$ and we conclude that $|I|\leq 4$. Notice that the case $i>r$ is symmetric and we have that $|I|\leq 4$ by the same arguments.

Assume that $i=\ell$. Then  $\{v_1,\ldots,v_i\}=U_1$. If $v_i^{(2)}\notin I$, then $|I\cap \{v_1^{(2)},\ldots,v_{i}^{(2)}\}|\leq 1$, because $\{v_1^{(2)},\ldots,v_{i-1}^{(2)}\}$ is a clique.
Then similarly to the previous case, we observe that $|I\cap \{v_i^{(1)},\ldots,v_n^{(1)}\}|\leq \alpha(G_1)\leq 3$ and conclude that $|I|\leq 4$. Suppose that $v_i^{(2)}\in I$. Then we have that $|I\cap \{v_1^{(2)},\ldots,v_{i}^{(2)}\}|\leq 2$, because $\tau(G[\{v_1,\ldots,v_i\}])\leq 1$.
Since $v_i^{(2)}\in I$ and $v_i^{(1)}v_i^{(2)}\in E(G')$, we infer that $v_i^{(1)}\notin I$.
Therefore, $I\cap \{v_i^{(1)},\ldots,v_n^{(1)}\}\subseteq \{v_{i+1}^{(1)},\ldots,v_n^{(1)}\}$. Recall that for any clique $K$ of size three, $K\cap U_1\neq\emptyset$, that is, $G\setminus U_1$ has no cliques of size three.
Equivalently, $\alpha(G_1\setminus U_1)\leq 2$.
Thus, $|I\cap  \{v_i^{(1)},\ldots,v_n^{(1)}\}|=|I\cap \{v_{i+1}^{(1)},\ldots,v_n^{(1)}\}|\leq
\alpha(G_1[\{v_{i+1}^{(1)},\ldots,v_n^{(1)}\}])=\alpha(G\setminus U_1)\leq 2$.
This implies that  $|I|\leq 4$.
The case $i=r$ is symmetric and $|I|\leq 4$ by the same arguments.

Finally, we assume that $\ell<i<r$. Clearly, $I\cap\{v_1^{(2)},\ldots,v_i^{(2)}\}\subseteq \{v_1^{(2)},\ldots,v_{r-1}^{(2)}\}$. Since $G\setminus U_2$ has no clique of size three,
$\alpha(G_2[\{v_1^{(2)},\ldots,v_{r-1}^{(2)}\}])\leq 2$. Thus, $|I\cap\{v_1^{(2)},\ldots,v_i^{(2)}\}|\leq 2$. By symmetry, we also have that $|I\cap\{v_i^{(1)},\ldots,v_n^{(1)}\}|\leq 2$.
Hence, $|I|\leq 4$. This concludes the case analysis.

We obtain that $\alpha(G'[X_t])\leq 4$ for every $t\in V(T)$. Therefore, $\tin(G')\leq 4$.

For the opposite direction, assume that $\tin(G')\leq 4$. Consider a tree decomposition $\mathcal{T} = (T, \{X_t\}_{t\in V(T)})$ of  $G'$ such that $\alpha(G'[X_t])\leq 4$ for every $t\in V(T)$.
For every $j\in\{1,\ldots,5\}$, vertex $x_j$ is adjacent to every vertex of~$G_1$. By \cref{obs:folk}, there is a node $t'\in V(T)$ such that $\{x_1,\ldots,x_5\}\subseteq X_{t'}$ or $V(G_1)\subseteq X_{t'}$.
However, $\{x_1,\ldots,x_5\}$ is an independent set of size five and no bag can contain all these vertices.
Thus, $V(G_1)\subseteq X_{t'}$.
By symmetry, there is a node $t''\in V(T)$ such that   $V(G_2)\subseteq X_{t''}$.
We assume without loss of generality that $X_{t'}=V(G_1)$ and $X_{t''}=V(G_2)$. Otherwise, if, say, $V(G_1)\subset X_{t'}$, we can add a leaf node to $T$, make it adjacent to $t'$, and assign the bag $V(G_1)$ to the new node.
Consider the $t',t''$-path $P$ in $T$ and set $Y_t=X_t\cap(V(G_1)\cup V(G_2))$ for $t\in V(P)$. We claim that
$\mathcal{P}=(P,\{Y_t\}_{t\in V(P)})$ is a path decomposition of $H=G'[V(G_1)\cup V(G_2)]$.

Because $V(G_1)=Y_{t'}$ and $V(G_2)=Y_{t''}$, every vertex of $H$ is included in some bag. Since $P$ is a path in $T$, for every vertex $v\in V(H)$, the subgraph of $P$ induced by $\{t\in V(P)\colon v\in Y_t\}$ is a subpath of~$P$. Let $uv$ be an edge of~$H$. If $uv\in E(G_1)$, then $u,v\in Y_{t'}$, and if $uv\in E(G_2)$, then $u,v\in Y_{t''}$. Suppose that $u\in V(G_1)$ and $v\in V(G_2)$. Then $u$ and $v$ are two copies of the same vertex of~$G$. Because $\mathcal{T}$ is a tree decomposition of $G'$, there is a node $t\in V(T)$  such that $u,v\in X_t$.
Note that $u\in Y_{t'}$, $v\notin Y_{t'}$, $v\in Y_{t''}$, and $u\notin Y_{t''}$.
Then either $t$ is an internal vertex of the path $P$, or the shortest path in $T$ between $t$ and $P$ has its end-vertex $t^*$ in an internal vertex of~$P$.
In the first case, $u,v\in Y_t$, and $u,v\in Y_{t^*}$ in the second.
This completes the proof of our claim.

Using standard arguments (see, e.g., the textbook~\cite{cygan2015parameterized}, or~\cite{DallardMS24} for a treatment focused on the independence number), we can assume that $\mathcal{P}$ is \emph{nice}, that is, $|Y_{s}\bigtriangleup Y_{s'}|\leq 1$ for every two adjacent nodes $s$ and $s'$ of~$P$.
Recall that $Y_{t'}\cap V(G_2)=\emptyset$ and $Y_{t''}=V(G_2)$.
Therefore, there are distinct nodes $t_1,\ldots,t_n\in V(P)$ sorted along the path order with respect to $P$ such that the vertices $v_1^{(2)},\ldots,v_n^{(2)}$ are \emph{introduced} in $t_1,\ldots,t_n$, that is, for $i\in \{1,\ldots,n\}$, $v_i^{(2)}$ is contained in the bags $Y_t$ for every $t$ in the $(t_i,t'')$-subpath of $P$ and $v_i^{(2)}\notin Y_t$ for every $t\neq t_i$ in the $(t',t_i)$-subpath.
Notice that because $v_i^{(2)}\in X_{t_i}$ and $v_i^{(2)}\in X_{t''}$, $v_i^{(2)}$ is included in every bag $X_t$, where  $t$ is in the $(t_i,t'')$-subpath of~$P$.
Hence, $v_1^{(2)},\ldots,v_i^{(2)}\in Y_{t_i}$ for all $i\in\{1,\ldots,n\}$.
Recall that for every $i\in\{1,\ldots,n\}$, the vertex  $v_i^{(2)}$ is the copy in $G_2$ of a vertex $v_i$ of $\overline{G}$, and we denote the corresponding copy of $v_i$ in $G_1$ by $v_i^{(1)}$.
Let $i\in\{1,\ldots,n\}$.
Because $\mathcal{P}$ is a path decomposition of $H$ and $v_i^{(1)}v_i^{(2)}\in E(H)$, there is $t\in V(P)$ such that $v_i^{(1)},v_i^{(2)}\in X_t$. By the definition of $t_i$, $t\geq t_i$. Since $v_i^{(1)}\in X_{t'}$, $v_i^{(1)}$ is included in each bag for nodes of $P$ in the $(t',t_i)$-subpath.
This implies that $v_i^{(1)},\ldots,v_n^{(1)}\in Y_{t_i}$ for each $i\in\{1,\ldots,n\}$.
For every $i\in\{1,\ldots,n\}$, let $Z_i=\{v_i^{(1)},\ldots,v_n^{(1)}\}\cup\{v_1^{(2)},\ldots,v_i^{(2)}\}$.
We obtain that $Z_i\subseteq Y_{t_i}$ for each $i\in\{1,\ldots,n\}$.
In particular, this implies that
$\alpha(G'[Z_i])\leq 4$ for each $i\in\{1,\ldots,n\}$.

We select minimum $\ell\in\{1,\ldots,n-1\}$ such that $\tau(G[\{v_1,\ldots,v_\ell\}])=1$ if such an $\ell$ exists and we set $\ell=n-1$ if $\{v_1,\ldots,v_{n-1}\}$ is an independent set of~$G$.
We choose maximum $r\in\{\ell+1,\ldots,n\}$ such that $\tau(G[\{v_r,\ldots,v_n\}])=1$ if such an $r$ exists and $r=\ell+1$ if $\{v_{\ell+1},\ldots,v_n\}$ is an independent set of~$G$. We define $U_1=\{v_1,\ldots,v_\ell\}$ and
$U_2=\{v_r,\ldots,v_n\}$. Observe that $\tau(G[U_1])\leq 1$ and $\tau(G[U_2])\leq 1$ by construction. We claim that for every clique $K$ of $G$ of size three, $U_i\cap K\neq \emptyset$ for $i=1,2$.

Suppose that $\tau(G[U_1])=1$. Then $\alpha(G'[\{v_1^{(2)},\ldots,v_\ell^{(2)}\}])=2$. Because $\alpha(G'[Z_\ell])\leq 4$, we have that $\alpha(G'[\{v_{\ell+1}^{(1)},\ldots,v_n^{(1)}\}])\leq 2$, that is, the graph  $G_1[\{v_{\ell+1}^{(1)},\ldots,v_n^{(1)}\}]$ does not have an independent set of size three.
Then $\overline{G}[\{v_{\ell+1},\ldots,v_n\}]$ has the same property.
Thus, $U_1$ intersects any clique of size three in~$G$.
Suppose that $U_1$ is an independent set in $G$.
Then $U_1=\{v_1,\ldots,v_{n-1}\}$. 
Trivially, $U_1$ intersects any clique of size three in~$G$. 
If $\tau(G[U_2])=1$, then we have that $U_2$ intersects any clique of size three in $G$ by the same arguments as for $U_1$ by symmetry. 
Suppose that $U_2$ is an independent set in $G$. 
Then $r=\ell+1$ and $V(G)\setminus U_2=U_1$. 
Because $\tau(G[U_1])\leq 1$, the graph $G[U_1]$ has no clique of size three.
We conclude that $U_2$ intersects every clique of size three in~$G$.

This concludes the proof of the theorem for~$k=4$. To prove the statement for $k\geq 5$, we slightly modify the reduction as it is shown in \Cref{fig:GNPc} (b). In the construction of $G'$,  instead of adding five new vertices $x_1,\ldots,x_5$ and five vertices $y_1,\ldots,y_5$, we create $k+1$ vertices $x_1,\ldots,x_{k+1}$ and $k+1$ vertices $y_1,\ldots,y_{k+1}$.
Furthermore, we create $k-4$ new vertices $z_1,\ldots,z_{k-4}$ and make them adjacent to  $x_1,\ldots,x_{k+1}$ and $y_1,\ldots,y_{k+1}$. The other parts of the construction remain the same. Then we show that $G$ has two disjoint subsets $U_1,U_2\subseteq V(G)$ such that  (i) $\tau(G[U_i])\leq 1$ for $i=1,2$
and (ii) for every clique $K$ of size three, $U_i\cap K\neq \emptyset$ for $i=1,2$ if and only if $\tin(G')\leq k$ using almost the same arguments as for the case~$k=4$.

For the forward direction, in the construction of the tree decomposition of $G'$, we construct $k+1$ nodes $t_j'$ and $k+1$ nodes $t_j''$ instead of five and
include $z_1,\ldots,z_{k-4}$ in every bag of the decomposition. Then we obtain a tree decomposition whose bags induce subgraphs with independence number at most~$k$.
For the opposite direction, if there is a tree decomposition $\mathcal{T} = (T, \{X_t\}_{t\in V(T)})$ of  $G'$ such that $\alpha(G'[X_t])\leq k$ for every $t\in V(T)$, we observe that there are $t',t''\in V(T)$ such that
$V(G_1)\cup\{z_1,\dots,z_{k-4}\}\subseteq X_{t'}$ and $V(G_2)\cup\{z_1,\dots,z_{k-4}\}\subseteq X_{t''}$. Then for every $t\in V(P)$, where $P$ is the $(t',t'')$-path in $T$, $z_1,\ldots,z_{k-4}\in X_t$. This allows us to use the same arguments as in the case $k=4$ to show that $G$ has two disjoint subsets $U_1,U_2\subseteq V(G)$ satisfying (i) and (ii). This concludes the proof.
\end{proof}

\section{Hardness of finding a separator with bounded independence number\label{sec:hardsep}}

In this section we show that, for every fixed integer $k\ge 3$, deciding if two given vertices of a graph can be separated by removing a set of vertices that induces a graph with independence number at most $k$ is \NP-complete.
To put this result in perspective, note that the case with $k = 1$ is polynomial since we can compute all clique cutsets in polynomial time using Tarjan's algorithm~\cite{MR798539}.
The case with~$k=2$ is still open.

A graph property is \emph{nontrivial} if there is at least one graph having the property and there is at least one graph that does not have the property.
We say that a graph property is \emph{additive hereditary} if it is closed under taking vertex-disjoint unions and induced subgraphs.
For a graph $G$ and any two nontrivial additive hereditary graph properties $\mathcal{P}$ and $\mathcal{Q}$, we say that $G$ is \emph{$(\mathcal{P},\mathcal{Q})$-colorable} if $V(G)$ can be partitioned into two sets $A$ and $B$ such that $G[A]$ has property $\mathcal{P}$ and $G[B]$ has property $\mathcal{Q}$.
In \cite{MR2097312}, Farrugia showed that deciding if a given graph is $(\mathcal{P},\mathcal{Q})$-colorable is \NP-hard, except if both $\mathcal{P}$ and $\mathcal{Q}$ are the property of being edgeless (in which case the problem corresponds to deciding $2$-colorability).

\begin{theorem}\label{thm:separator}
For every integer $k\ge 3$, it is \NP-complete to decide, given a graph $H$ and two distinct vertices $u,v \in V(H)$, if there exists a $u{,}v$-separator $S$ such that $\alpha(H[S]) \leq k$.
\end{theorem}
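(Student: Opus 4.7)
I would establish NP-completeness by reducing from a $(\mathcal{P},\mathcal{Q})$-coloring problem given by Farrugia's theorem. Membership in NP is immediate: for fixed $k$, a candidate $S$ is verified as a $u,v$-separator in polynomial time, and the condition $\alpha(H[S])\le k$ is checked by enumerating all $(k{+}1)$-subsets of $S$. For the hardness, I would fix integers $a,b\ge 1$ with $a+b=k$ and $(a,b)\ne(1,1)$---possible for every $k\ge 3$, e.g.\ $(a,b)=(1,k-1)$---and reduce from the following partition problem: given a graph $G$, decide whether $V(G)$ admits a partition $(A,B)$ with $\alpha(G[A])\le a$ and $\alpha(G[B])\le b$. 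This is NP-hard by Farrugia's theorem applied to the additive hereditary properties $\omega\le a$ and $\omega\le b$ (not both edgeless), combined with complementation, since $G$ is $(\omega\le a,\omega\le b)$-colorable if and only if $\overline{G}$ admits the desired $(\alpha\le a,\alpha\le b)$-partition.

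Given $G$, I would construct $H$ as follows. Take two disjoint copies $G_1,G_2$ of $G$, with $x\in V(G)$ having copies $x_1,x_2$, and add the matching $\{x_1 x_2:x\in V(G)\}$. Introduce two new vertices $u,v$, making $u$ adjacent to every vertex of $V(G_1)$ and $v$ adjacent to every vertex of $V(G_2)$. Finally, add two disjoint independent sets $Q_1$ of size $b$ and $Q_2$ of size $a$; make $u$ adjacent to all of $Q_1$ and $v$ adjacent to all of $Q_2$, and install complete bipartite connections between each of the pairs $(Q_1,V(G_2))$, $(Q_2,V(G_1))$ and $(Q_1,Q_2)$, while keeping $Q_1$ non-adjacent to $V(G_1)$ and $Q_2$ non-adjacent to $V(G_2)$.

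For the forward direction, given a valid partition $(A,B)$, the set $S=\{x_1:x\in A\}\cup\{x_2:x\in B\}\cup Q_1\cup Q_2$ is a $u,v$-separator (since $A\cup B=V(G)$ implies $x_1\in S$ or $x_2\in S$ for every $x$, and $Q_1\cup Q_2\subseteq S$ eliminates all shortcuts through $Q_1$ or $Q_2$), and the three bicliques confine every independent set of $H[S]$ to one of the shapes $Q_1\cup I_A$, $Q_2\cup I_B$ or $I_A\cup I_B$, with $I_A,I_B$ independent in $G[A],G[B]$, each of total size at most $k$. For the backward direction, the three-edge paths $u-x_1-x_2-v$ and $u-q^1-q^2-v$ force respectively that $x_1\in S$ or $x_2\in S$ for every $x$, and that $Q_1\subseteq S$ or $Q_2\subseteq S$. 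A case analysis then shows that either both $Q_1,Q_2\subseteq S$, in which case the independent set $Q_1\cup I_A$ with $I_A$ maximum in $G[A]=G[\{x:x_1\in S\}]$ has no cross-adjacencies in $H[S]$ and forces $\alpha(G[A])\le a$ (symmetrically $\alpha(G[B])\le b$, with any overlap $A\cap B$ reduced to a partition by replacing $A$ with $A\setminus B$), or only one of $Q_1,Q_2$ lies entirely in $S$, say $Q_1\subseteq S, Q_2\not\subseteq S$; then the path $u-x_1-q^2-v$ for $q^2\notin S$ forces $V(G_1)\subseteq S$, and the non-adjacency of $Q_1$ to $V(G_1)$ yields $b+\alpha(G)\le k$, so $\alpha(G)\le a$ and the trivial partition $(V(G),\emptyset)$ witnesses a yes-instance.

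The main technical obstacle lies in calibrating the adjacencies of $Q_1,Q_2$ so that the additive ``$+b$'' contribution of $Q_1$ to $\alpha(H[S])$ isolates cleanly from $\alpha(G[B])$ (and symmetrically for $Q_2$), producing the penalty that pins down $\alpha(G[A])\le a$ rather than merely the looser sum bound $\alpha(G[A])+\alpha(G[B])\le k$. The $Q_1$--$Q_2$ biclique in particular is essential to rule out the degenerate scenario where $V(G_1)\cup V(G_2)\subseteq S$ is a separator without $Q_1$ or $Q_2$ fully inside $S$, which would otherwise correspond to a different two-disjoint-independent-sets problem and break the equivalence.
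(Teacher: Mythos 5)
Your proof is correct, and although it shares the high-level plan of reducing from Farrugia's coloring theorem, the construction is genuinely different from and substantially more economical than the paper's. The paper fixes the two properties to $K_2$-free and $K_3$-free (so $a=1$, $b=2$ in your notation), pads with $k-3$ common neighbors $W$ of $u$ and $v$, and, to force the separator to commit to anchor vertices $v_i\in A$ and $v_j\in B$, chains together $\Theta(n^2)$ gadgets $J^k_{i,j}(G)$, one per ordered pair $(i,j)$, each built from two copies of $\overline{G}$ with designated universal vertices $v_i^L, v_j^R$ and auxiliary triples $Z_L, Z_R$; the resulting graph has $\Theta(n^3)$ vertices. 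Your single gadget with the two biclique-linked sets $Q_1$ (size $b$) and $Q_2$ (size $a$) achieves the same on $2n+k+2$ vertices. The paths $u\text{--}Q_1\text{--}Q_2\text{--}v$ force $Q_1\subseteq S$ or $Q_2\subseteq S$, replacing the paper's need to enumerate anchor pairs, and the asymmetric adjacency (each $Q_i$ anticomplete to its own side $G_i$, complete to the other) makes $Q_1$ contribute a clean $+b$ (and $Q_2$ a $+a$) to any independent set in $H[S]$ that also uses $V(G_1)$ (resp.\ $V(G_2)$), which is exactly what splits the single bound $\alpha(H[S])\le k$ into the two constraints $\alpha(G[A])\le a$ and $\alpha(G[B])\le b$ individually rather than merely their sum. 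Letting $(a,b)$ range over all pairs with $a+b=k$, $(a,b)\ne(1,1)$ absorbs the role of the paper's padding set $W$. Both reductions establish the same theorem.
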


\begin{proof}
Let $G$ be a graph with vertices $v_1, \dots, v_n$.

For two distinct integers $i,j \in \{1, \dots, n\}$, let $J = J_{i,j}(G)$
be the graph obtained as follows (see \Cref{fig:GHNPc} (a)):
\begin{itemize}
    \item take two disjoint copies $G_L$ and $G_R$ of $\overline{G}$, with vertex sets $V(G_L) = \{v_1^L, \dots, v_n^L\}$ and $V(G_R) = \{v_1^R, \dots, v_n^R\}$, respectively, such that for each
    $p\in \{1, \dots, n\}$, the vertices $v_p^L$ and $v_p^R$ both correspond to the vertex $v_p$, and add, for each vertex $v_p \in V(G)$, the edge $v_p^L v_p^R$; we call such edges the \emph{middle edges} of $J$;
    \item create two disjoint sets $Z_L$ and $Z_R$ of four new vertices each;
    \item finally, connect $v_i^L$ by edges to all the vertices in $Z_R$, and $v_j^R$ to all the vertices in $Z_L$.
\end{itemize}
Any such graph $J_{i,j}(G)$ is called a \emph{$J$-graph} of~$G$.
The vertices in $V(G_L)\cup Z_L$ and $V(G_R)\cup Z_R$ are referred to as the \emph{left} and \emph{right vertices} of $J$, respectively.
Furthermore, the vertices in $V(G_L)\cup V(G_R)$ are referred to as the \emph{$G$-vertices} of $J$, while the vertices in $Z_L\cup Z_R$ are referred to as the \emph{$Z$-vertices} of~$J$.
Analogously, the vertices in $V(G_L)$ are referred to as the \emph{left $G$-vertices} of $J$, etc.

\begin{figure}[ht]
\centering
\scalebox{0.7}{
\input{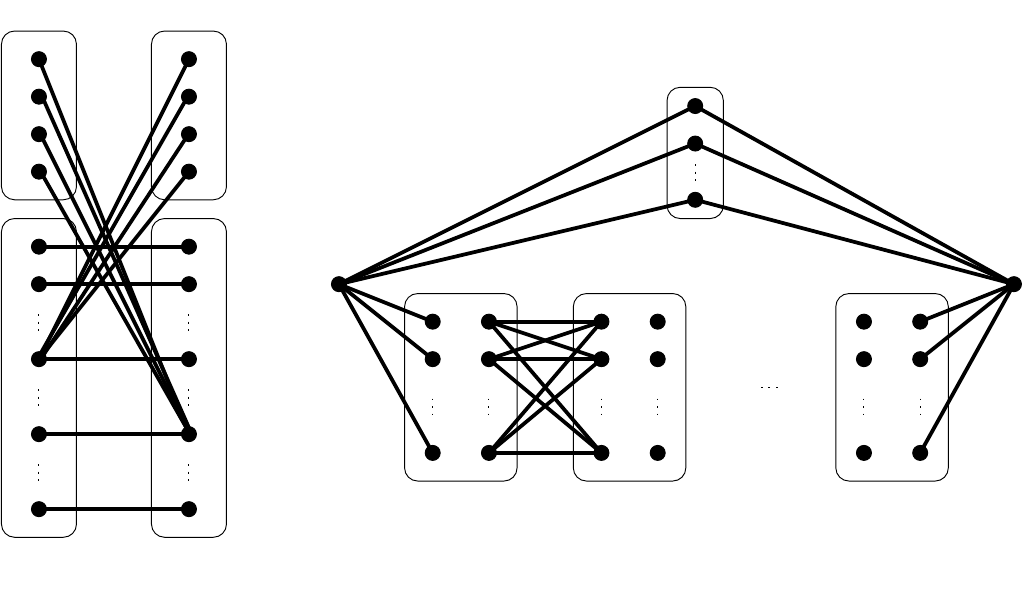_t}
}
\caption{The construction of $J_{i,j}$ (a) and $H_k(G)$ (b).}\label{fig:GHNPc}
\end{figure}

We are now ready to describe the graph $H = H_k(G)$, which we construct as follows (see \Cref{fig:GHNPc} (b)):
\begin{itemize}
    \item first, take the disjoint union of the graphs $J_{i,j}(G)$ over all possible distinct values of $i,j \in \{1, \dots, n\}$;
    \item fix a total order on the $J$-graphs and add all possible edges between the right vertices of a $J$-graph and the left vertices of its successor in the order (if any);
    \item add two new vertices $u$ and $v$ and connect $u$ to the left vertices of the first $J$-graph and $v$ to the right vertices of the last $J$-graph;
    \item finally, add a set $W$ of $k-3$ new vertices and connect $u$ and $v$ to all the vertices in~$W$.
\end{itemize}

Let $\mathcal{P}$ be the property of being $K_2$-free and $\mathcal{Q}$ the property of being $K_3$-free.
Since the properties $\mathcal{P}$ and $\mathcal{Q}$ can be recognized in polynomial time, the problem of deciding if $G$ is $(\mathcal{P},\mathcal{Q})$-colorable is in \NP.
Furthermore, the problem is \NP-complete, since both $\mathcal{P}$ and $\mathcal{Q}$ are additive hereditary properties and the aforementioned result of Farrugia \cite{MR2097312} applies.
Note also that, since $k$ is fixed, deciding if there exists a $u{,}v$-separator $S$ in $H$ such that $\alpha(H[S]) \leq k$ is in \NP.

We claim that $G$ is $(\mathcal{P},\mathcal{Q})$-colorable if and only if $H = H_k(G)$ admits a $u{,}v$-separator $S$ with $\alpha(H[S]) \leq k$.
We first assume that $G$ is $(\mathcal{P},\mathcal{Q})$-colorable.
Let $(A,B)$ be a partition of $V(G)$ such that $G[A]$ is $K_2$-free and $G[B]$ is $K_3$-free.
There are two cases to consider, depending on whether one of $A$ or $B$ is empty, or both are nonempty.
Suppose first that $A$ is empty.
Then $B= V(G)$ and thus $G$ is $K_3$-free.
Fix one of the $J$-graphs of $G$, say $J = J_{i,j}(G)$, and let $S$ be the union of the set $L$ of left $G$-vertices of $J$ together with $W$ and $\{v_j^R\}$.
Observe that $S$ is a $u{,}v$-separator in $H$, since $S$ contains all the vertices in $W$ and any $u{,}v$-path in $G \setminus W$ that does not contain any vertex in $L$ has to contain $v_j^R$.
Recall that $L$ induces the complement of $G$ in $H$ and that $G$ is $K_3$-free.
Hence, we obtain that $\alpha(H[S]) = \alpha(H[W]) + \alpha(H[L \cup \{v_j^R\}]) \leq k-3+2+1 = k$.
The case when $B$ is empty is similar, except that this time $G$ is edgeless, hence, $H[L]$ is a complete graph and the same $S$ as above works.
So we may assume that neither $A$ nor $B$ is empty.
Let $v_i \in A$ and $v_j \in B$ and consider the $J$-graph $J = J_{i,j}(G)$.
We abuse notation and, for a set $X \subseteq V(G)$, denote by $X \cap V(J)$ the set of copies of vertices of $X$ in $V(J)$.
Let $L$ and $R$ be the sets of left and right vertices of $J$, respectively, and define $S = W \cup (L \cap A) \cup (R \cap B)$.
Observe that $S$ is a $u{,}v$-separator in $H$, since it contains all the vertices in $W$, intersects each middle edge of $J$ in exactly one endpoint, and contains both $v_i^L$ and $v_j^R$.
Furthermore, for the same reasons as above, we have $\alpha(H[S]) = \alpha(H[W]) + \alpha(H[L \cap A]) + \alpha(H[R \cap B]) \leq k-3 + 1 + 2 = k$.

Now, assume that $H$ admits a $u{,}v$-separator $S$ with $\alpha(H[S]) \leq k$.
We may assume without loss of generality that $S$ is inclusion-minimal.
Notice that $W = N(u) \cap N(v)$, and hence $W \subseteq S$, necessarily.
If each $J$-graph of $G$, say $J = J_{i,j}(G)$, contains an edge $e_{i,j}$ connecting a left vertex of $J$ with a right vertex of $J$, such that no endpoint of $e_{i,j}$ belongs to $S$, then choosing arbitrarily one such edge in each $J$-graph would result in a $u{,}v$-path in $G\setminus S$, a contradiction.
Therefore, there exists some $J$-graph of $G$, say $J = J_{i,j}(G)$, such that $S$ contains an endpoint of each edge connecting a left vertex of $J$ with a right vertex of~$J$.
Since $S\cap (W\cup V(J))$ already separates $u$ from $v$, the minimality of $S$ implies that $J_{i,j}(G)$ is the only $J$-graph of $G$ containing vertices from $S$, that is, $S \subseteq V(J)\cup W$.
In what follows, we use the same notations for the sets of vertices as in the definition of $J_{i,j}(G)$ above.
We show next that $S$ contains both $v_i^L$ and $v_j^R$.
Suppose for a contradiction that $S$ does not contain $v_i^L$ or $v_j^R$; without loss of generality, we may assume that $v_j^R \notin S$.
Since $v_j^R$ is adjacent to all the vertices in $Z_L$ and $S$ contains an endpoint of each edge connecting a left vertex of $J$ with a right vertex of $J$, we have $Z_L \subseteq S$ and hence
$Z_L \cup W\subseteq S$.
However, $Z_L \cup W$ is an independent set in $H$, which implies that $\alpha(H[S]) \geq |Z_L|+|W| = 4 + k-3 = k+1$, a contradiction.
This shows that $S$ contains both $v_i^L$ and $v_j^R$, as claimed.

Let $L$ and $R$ be the sets of left and right vertices of $J$, respectively, and let $A,B\subseteq V(G)$ be the sets of vertices of $G$ whose copies in $J$ belong to $S \cap L$ and $S \cap R$, respectively.
Recall that $S$ contains both $v_i^L$ and $v_j^R$, and hence $v_i \in A$ and $v_j \in B$.
Since $S$ is a minimal \hbox{$u{,}v$-separator} and $\{v_i^L,v_j^R\}\subseteq S\cap V(J)$, the set $S\cap V(J)$ does not contain any vertex from $Z_L\cup Z_R$ and intersects every middle edge of $J$ in exactly one endpoint.
Thus, $(A,B)$ is a partition of $V(G)$.
We show that $(A,B)$ is a certificate to the $(\mathcal{P},\mathcal{Q})$-colorability of~$G$.
By assumption, $\alpha(H[S]) \leq k$, and we notice that $k-3$ of the vertices of any maximum independent set in $H[S]$ come from the vertices in~$W$.
Since $(A,B)$ is a partition of $V(G)$, there is no edge in $H$ between a vertex of $S \cap L$ and $S \cap R$.
This implies that $k-3 + \alpha(\overline{G}[A]) + \alpha(\overline{G}[B]) =\alpha(H[W]) + \alpha(H[S \cap L]) + \alpha(H[(S \cap R)]) = \alpha(H[S])\leq k$, and in particular that $\alpha(\overline{G}[A]) + \alpha(\overline{G}[B]) \leq 3$.
We may assume without loss of generality that
$\alpha(\overline{G}[A]) \leq \alpha(\overline{G}[B])$.
Since $A$ and $B$ are nonempty, we deduce that $\alpha(\overline{G}[A]) \leq 1$ and that $\alpha(\overline{G}[B]) \leq 2$.
This means that $G[A]$ is a $K_2$-free graph and $G[B]$ a $K_3$-free graph.
Thus, $G$ is $(\mathcal{P},\mathcal{Q})$-colorable, which concludes the proof.
\end{proof}

\section{Conclusion}\label{sec:concl}

The main result of our paper is an algorithm that, given an $n$-vertex graph $G$ and an integer $k$, in time $2^{\OO(k^2)} n^{\OO(k)}$ either outputs a tree decomposition of $G$ with independence number at most $8k$, or concludes that the tree-independence number of $G$ is larger than~$k$.
This yields also the same result for computing the minor-matching hypertree-width of a graph~\cite{Yolov18}.
Our results allow us to solve in $2^{\OO(k^2)} n^{\OO(k)}$ time a plethora of problems when the inputs are restricted to graphs of tree-independence number $k$ or minor-matching hypertree-width $k$~\cite{DallardMS24,LMMORS24,Yolov18}.
We now show that this result is tight in several aspects.

First, one could ask: What is the most general width-parameter defined by a min-max formula over the bags of a tree decomposition (see, e.g.~\cite{adler2006width,MR3144912}) that allows us to solve problems like \textsc{Maximum Independent Set} in polynomial time when bounded?
For parameters where the width of a bag depends only on the induced subgraph of the bag, this turns out to be $\tin$.
In particular, we recall that \textsc{Maximum Independent Set} is \NP-hard on graphs with each edge subdivided twice, but such graphs admit a tree decomposition where one bag is a large independent set, and the induced subgraphs of the other bags are isomorphic to $4$-vertex paths.
It follows that if the width-measure of a bag is monotone, i.e., it does not increase when taking induced subgraphs, it must be unbounded whenever $\alpha$ is unbounded.
In other words, if there would be a width parameter $\trlambda$ defined as the minimum, over all tree decomposition, of the maximum of $\lambda(G[X_t])$ over the bags $X_t$ of the tree decomposition, where $\lambda$ is a monotone graph invariant, then either \textsc{Maximum Independent Set} is already \NP-hard when $\lambda$ is a constant, or the parameterization by $\tin$ is more general than the parameterization by $\trlambda$.

The width parameter $\trmu$, the
minor-matching hypertree-width, escapes this argument because it does not only depend on the subgraphs induced by the bags, but also on the neighborhoods of the bags.
In particular, for $\trmu$ the width of a bag $X_t$ is defined as the maximum cardinality of an induced matching in $G$ whose every edge intersects~$X_t$.
A similar example shows that this type of parameters where the width of a bag $X_t$ depends on $G[N[X_t]]$ cannot be generalized much more:
If we start from \textsc{Maximum Independent Set} on cubic graphs and subdivide each edge four times, we obtain graphs where \textsc{Maximum Independent Set} is \NP-hard, but that admit tree decompositions that contain one large bag $X_t$ such that every connected component of $G[N[X_t]]$ is a $3$-vertex path, while for all other bags $X_t$ their closed neighborhood $N[X_t]$ has bounded size.

Further, we remind the reader that our main result is computationally tight. In particular, in \Cref{thm:inappr}, we proved
that it is unlikely that there is a $g(k)$-approximation algorithm for the tree-independence number with running time $f(k)n^{o(k)}$, for any computable function~$g$. This shows that the $n^{\Omega(k)}$-factor in the running time is unavoidable up to some reasonable complexity assumptions.
For exact computation of the tree-independence number, we proved in  \Cref{the:main_nphard} that it is \NP-complete to decide whether $\tin(G)\leq k$ for every constant $k\geq 4$.
Since $\tin(G)=1$ if and only if $G$ is a chordal graph, \Cref{the:main_nphard} leads to the question about the complexity of deciding whether the tree-independence number is at most $k$ for $k=2$ and~$3$. In \Cref{thm:separator},
we demonstrated that for every fixed integer $k\ge 3$, deciding if two given vertices of a graph can be separated by removing a set of vertices that induces a graph with independence number at most $k$ is \NP-complete. This result
indicates that  it may be already \NP-complete to decide whether $\tin(G)\leq 3$. We hesitate to state any conjecture for the case~$k=2$.

The final question is about the place of computing the tree-independence number in the polynomial hierarchy.
 For a fixed $k$, deciding whether $\tin(G)\leq k$ is in \NP.
However, when $k$ is a part of the input,  the problem is naturally placed in the class $\Sigma^P_2$ on the second level of the polynomial hierarchy.
Is the problem $\Sigma^P_2$-complete?

\section*{Acknowledgments}

We thank the anonymous reviewers for their thoughtful feedback and suggestions, which helped improve the presentation of this work. 
We are especially grateful to one of the reviewers for pointing out an issue in the proof of \Cref{thm:separator} and suggesting a correction.


\begin{thebibliography}{10}

\bibitem{abrishami2023tree}
T.~Abrishami, B.~Alecu, M.~Chudnovsky, S.~Hajebi, S.~Spirkl, and
  K.~Vu{\v{s}}kovi{\'c}.
\newblock Tree independence number. {I}. ({Even} hole, diamond, pyramid)-free
  graphs.
\newblock {\em J. Graph Theory}, 106(4):923--943, 2024.

\bibitem{adler2006width}
I.~Adler.
\newblock {\em Width functions for hypertree decompositions}.
\newblock PhD thesis, Albert-Ludwigs-Universit\"at Freiburg im Breisgau, 2006.

\bibitem{Baker94}
B.~S. Baker.
\newblock Approximation algorithms for {NP}-complete problems on planar graphs.
\newblock {\em J. Assoc. Comput. Mach.}, 41(1):153--180, 1994.

\bibitem{MR1172354}
M.~B\'{\i}r\'{o}, M.~Hujter, and Z.~Tuza.
\newblock Precoloring extension. {I}. {I}nterval graphs.
\newblock {\em Discrete Math.}, 100(1-3):267--279, 1992.

\bibitem{DBLP:journals/siamcomp/BodlaenderDDFLP16}
H.~L. Bodlaender, P.~G. Drange, M.~S. Dregi, F.~V. Fomin, D.~Lokshtanov, and
  M.~Pilipczuk.
\newblock A \mbox{$c^{k} n$} 5-approximation algorithm for treewidth.
\newblock {\em {SIAM} J. Comput.}, 45(2):317--378, 2016.

\bibitem{bodlaender1998}
H.~L. Bodlaender, J.~Gustedt, and J.~A. Telle.
\newblock Linear-time register allocation for a fixed number of registers.
\newblock In H.~J. Karloff, editor, {\em Proceedings of the Ninth Annual
  {ACM-SIAM} Symposium on Discrete Algorithms, 25-27 January 1998, San
  Francisco, California, {USA}}, pages 574--583. {ACM/SIAM}, 1998.

\bibitem{DBLP:journals/siamcomp/ChalermsookCKLM20}
P.~Chalermsook, M.~Cygan, G.~Kortsarz, B.~Laekhanukit, P.~Manurangsi,
  D.~Nanongkai, and L.~Trevisan.
\newblock From gap-exponential time hypothesis to fixed parameter tractable
  inapproximability: Clique, dominating set, and more.
\newblock {\em {SIAM} J. Comput.}, 49(4):772--810, 2020.

\bibitem{MR4249058}
S.~Chaplick, F.~V. Fomin, P.~A. Golovach, D.~Knop, and P.~Zeman.
\newblock Kernelization of graph {H}amiltonicity: proper {$H$}-graphs.
\newblock {\em SIAM J. Discrete Math.}, 35(2):840--892, 2021.

\bibitem{MR4332111}
S.~Chaplick, M.~T\"{o}pfer, J.~Voborn\'{\i}k, and P.~Zeman.
\newblock On {$H$}-topological intersection graphs.
\newblock {\em Algorithmica}, 83(11):3281--3318, 2021.

\bibitem{DBLP:journals/endm/ChaplickZ17}
S.~Chaplick and P.~Zeman.
\newblock Combinatorial problems on {$H$}-graphs.
\newblock {\em Electron. Notes Discret. Math.}, 61:223--229, 2017.

\bibitem{chudnovsky2024treeindependencenumberiv}
M.~Chudnovsky, P.~Gartland, S.~Hajebi, D.~Lokshtanov, and S.~Spirkl.
\newblock Tree independence number {IV. E}ven-hole-free graphs.
\newblock 2024.
\newblock arXiv:2407.08927.

\bibitem{chudnovsky2024treeindependencenumberii}
M.~Chudnovsky, S.~Hajebi, D.~Lokshtanov, and S.~Spirkl.
\newblock Tree independence number {II. T}hree-path-configurations.
\newblock 2024.
\newblock arXiv:2405.00265.

\bibitem{chudnovsky2024treeindependencenumberiii}
M.~Chudnovsky, S.~Hajebi, and N.~Trotignon.
\newblock Tree independence number {III. T}hetas, prisms and stars.
\newblock 2024.
\newblock arXiv:2406.13053.

\bibitem{CT24}
M.~Chudnovsky and N.~Trotignon.
\newblock On treewidth and maximum cliques.
\newblock 2024.
\newblock arXiv:2405.07471.

\bibitem{ClarkCJ90}
B.~N. Clark, C.~J. Colbourn, and D.~S. Johnson.
\newblock Unit disk graphs.
\newblock {\em Discret. Math.}, 86(1-3):165--177, 1990.

\bibitem{cygan2015parameterized}
M.~Cygan, F.~V. Fomin, L.~Kowalik, D.~Lokshtanov, D.~Marx, M.~Pilipczuk,
  M.~Pilipczuk, and S.~Saurabh.
\newblock {\em Parameterized Algorithms}.
\newblock Springer, 2015.

\bibitem{dallard2024treewidthversuscliquenumber}
C.~Dallard, M.~Krnc, O.~joung Kwon, M.~Milanič, A.~Munaro, K.~Štorgel, and
  S.~Wiederrecht.
\newblock Treewidth versus clique number. {IV. T}ree-independence number of
  graphs excluding an induced star.
\newblock 2024.
\newblock arXiv:2402.11222.

\bibitem{DMS-WG2020}
C.~Dallard, M.~Milani{\v{c}}, and K.~{\v{S}}torgel.
\newblock Treewidth versus clique number in graph classes with a forbidden
  structure.
\newblock In I.~Adler and H.~M{\"{u}}ller, editors, {\em Graph-Theoretic
  Concepts in Computer Science - 46th International Workshop, {WG} 2020, Leeds,
  UK, June 24-26, 2020, Revised Selected Papers}, volume 12301 of {\em Lecture
  Notes in Computer Science}, pages 92--105. Springer, 2020.

\bibitem{DallardMS24}
C.~Dallard, M.~Milani{\v c}, and K.~{\v S}torgel.
\newblock Treewidth versus clique number. {II. T}ree-independence number.
\newblock {\em J. Comb. Theory, Ser. {B}}, 164:404--442, 2024.

\bibitem{dallard2021treewidth}
C.~Dallard, M.~Milani\v{c}, and K.~\v{S}torgel.
\newblock Treewidth versus clique number. {I}. {G}raph classes with a forbidden
  structure.
\newblock {\em SIAM J. Discrete Math.}, 35(4):2618--2646, 2021.

\bibitem{dallard2022secondpaper}
C.~Dallard, M.~Milanič, and K.~Štorgel.
\newblock Treewidth versus clique number. {III}. {T}ree-independence number of
  graphs with a forbidden structure.
\newblock {\em Journal of Combinatorial Theory, Series B}, 167:338--391, 2024.

\bibitem{DBLP:journals/siamcomp/BergBKMZ20}
M.~de~Berg, H.~L. Bodlaender, S.~Kisfaludi{-}Bak, D.~Marx, and T.~C. van~der
  Zanden.
\newblock A framework for exponential-time-hypothesis-tight algorithms and
  lower bounds in geometric intersection graphs.
\newblock {\em {SIAM} J. Comput.}, 49(6):1291--1331, 2020.

\bibitem{DemaineFHT05jacm}
E.~D. Demaine, F.~V. Fomin, M.~Hajiaghayi, and D.~M. Thilikos.
\newblock Subexponential parameterized algorithms on graphs of bounded genus
  and {$H$}-minor-free graphs.
\newblock {\em Journal of the ACM}, 52(6):866--893, 2005.

\bibitem{DBLP:journals/eccc/Dinur16}
I.~Dinur.
\newblock Mildly exponential reduction from gap {3SAT} to polynomial-gap
  label-cover.
\newblock {\em Electron. Colloquium Comput. Complex.}, {TR16-128}, 2016.

\bibitem{MR2097312}
A.~Farrugia.
\newblock Vertex-partitioning into fixed additive induced-hereditary properties
  is {NP}-hard.
\newblock {\em Electron. J. Combin.}, 11(1):Research Paper 46, 9, 2004.

\bibitem{FeldmannSLM20}
A.~E. Feldmann, K.~{C. S.}, E.~Lee, and P.~Manurangsi.
\newblock A survey on approximation in parameterized complexity: Hardness and
  algorithms.
\newblock {\em Algorithms}, 13(6):146, 2020.

\bibitem{FominG21}
F.~V. Fomin and P.~A. Golovach.
\newblock Subexponential parameterized algorithms and kernelization on almost
  chordal graphs.
\newblock {\em Algorithmica}, 83(7):2170--2214, 2021.

\bibitem{MR4141534}
F.~V. Fomin, P.~A. Golovach, and J.-F. Raymond.
\newblock On the tractability of optimization problems on {$H$}-graphs.
\newblock {\em Algorithmica}, 82(9):2432--2473, 2020.

\bibitem{FominLS18}
F.~V. Fomin, D.~Lokshtanov, and S.~Saurabh.
\newblock Excluded grid minors and efficient polynomial-time approximation
  schemes.
\newblock {\em J. {ACM}}, 65(2):10:1--10:44, 2018.

\bibitem{Galby23}
E.~Galby, A.~Munaro, and S.~Yang.
\newblock Polynomial-time approximation schemes for independent packing
  problems on fractionally tree-independence-number-fragile graphs.
\newblock In E.~W. Chambers and J.~Gudmundsson, editors, {\em 39th
  International Symposium on Computational Geometry, SoCG 2023, June 12-15,
  2023, Dallas, Texas, {USA}}, volume 258 of {\em LIPIcs}, pages 34:1--34:15.
  Schloss Dagstuhl - Leibniz-Zentrum f{\"{u}}r Informatik, 2023.

\bibitem{DBLP:journals/corr/abs-2402-18352}
E.~Galby, A.~Munaro, and S.~Yang.
\newblock Polynomial-time approximation schemes for induced subgraph problems
  on fractionally tree-independence-number-fragile graphs.
\newblock {\em CoRR}, abs/2402.18352, 2024.

\bibitem{GareyJ79}
M.~R. Garey and D.~S. Johnson.
\newblock {\em Computers and Intractability: {A} Guide to the Theory of
  NP-Completeness}.
\newblock W. H. Freeman, 1979.

\bibitem{Golumbic80}
M.~C. Golumbic.
\newblock {\em Algorithmic Graph Theory and Perfect Graphs}.
\newblock Academic Press, New York, 1980.

\bibitem{Grohe:2003kt}
M.~Grohe.
\newblock Local tree-width, excluded minors, and approximation algorithms.
\newblock {\em Combinatorica}, 23(4):613--632, 2003.

\bibitem{HlinenyK01}
P.~Hlinen{\'{y}} and J.~Kratochv{\'{\i}}l.
\newblock Representing graphs by disks and balls (a survey of
  recognition-complexity results).
\newblock {\em Discret. Math.}, 229(1-3):101--124, 2001.

\bibitem{DBLP:journals/algorithmica/JacobPRS22}
A.~Jacob, F.~Panolan, V.~Raman, and V.~Sahlot.
\newblock Structural parameterizations with modulator oblivion.
\newblock {\em Algorithmica}, 84(8):2335--2357, 2022.

\bibitem{KarthikK21}
{Karthik {C. S.}} and S.~Khot.
\newblock Almost polynomial factor inapproximability for parameterized
  $k$-clique.
\newblock In S.~Lovett, editor, {\em 37th Computational Complexity Conference,
  {CCC} 2022, July 20-23, 2022, Philadelphia, PA, {USA}}, volume 234 of {\em
  LIPIcs}, pages 6:1--6:21. Schloss Dagstuhl - Leibniz-Zentrum f{\"{u}}r
  Informatik, 2022.

\bibitem{LMMORS24}
P.~T. Lima, M.~Milanič, P.~Muršič, K.~Okrasa, P.~Rzążewski, and
  K.~Štorgel.
\newblock {Tree Decompositions Meet Induced Matchings: Beyond Max Weight
  Independent Set}.
\newblock In T.~Chan, J.~Fischer, J.~Iacono, and G.~Herman, editors, {\em 32nd
  Annual European Symposium on Algorithms (ESA 2024)}, volume 308 of {\em
  Leibniz International Proceedings in Informatics (LIPIcs)}, pages
  85:1--85:17, Dagstuhl, Germany, 2024. Schloss Dagstuhl -- Leibniz-Zentrum
  f{\"u}r Informatik.

\bibitem{Lin21}
B.~Lin.
\newblock Constant approximating $k$-clique is {W}[1]-hard.
\newblock In S.~Khuller and V.~V. Williams, editors, {\em {STOC} '21: 53rd
  Annual {ACM} {SIGACT} Symposium on Theory of Computing, Virtual Event, Italy,
  June 21-25, 2021}, pages 1749--1756. {ACM}, 2021.

\bibitem{LokshtanovPSXZ22}
D.~Lokshtanov, F.~Panolan, S.~Saurabh, J.~Xue, and M.~Zehavi.
\newblock Subexponential parameterized algorithms on disk graphs (extended
  abstract).
\newblock In {\em Proceedings of the 2022 {ACM-SIAM} Symposium on Discrete
  Algorithms (SODA)}, pages 2005--2031. {SIAM}, 2022.

\bibitem{DBLP:conf/icalp/ManurangsiR17}
P.~Manurangsi and P.~Raghavendra.
\newblock A birthday repetition theorem and complexity of approximating dense
  {CSP}s.
\newblock In I.~Chatzigiannakis, P.~Indyk, F.~Kuhn, and A.~Muscholl, editors,
  {\em 44th International Colloquium on Automata, Languages, and Programming,
  {ICALP} 2017, July 10-14, 2017, Warsaw, Poland}, volume~80 of {\em LIPIcs},
  pages 78:1--78:15. Schloss Dagstuhl - Leibniz-Zentrum f{\"{u}}r Informatik,
  2017.

\bibitem{MR3144912}
D.~Marx.
\newblock Tractable hypergraph properties for constraint satisfaction and
  conjunctive queries.
\newblock {\em J. ACM}, 60(6):Art. 42, 51, 2013.

\bibitem{DBLP:conf/birthday/Pilipczuk20a}
M.~Pilipczuk.
\newblock Computing tree decompositions.
\newblock In F.~V. Fomin, S.~Kratsch, and E.~J. van Leeuwen, editors, {\em
  Treewidth, Kernels, and Algorithms - Essays Dedicated to Hans L. Bodlaender
  on the Occasion of His 60th Birthday}, volume 12160 of {\em Lecture Notes in
  Computer Science}, pages 189--213. Springer, 2020.

\bibitem{RaghavanS03}
V.~Raghavan and J.~P. Spinrad.
\newblock Robust algorithms for restricted domains.
\newblock {\em J. Algorithms}, 48(1):160--172, 2003.

\bibitem{Ren21}
X.~Ren.
\newblock A survey on parameterized inapproximability: $k$-clique,
  $k$-setcover, and more.
\newblock 2021.
\newblock arXiv:22112.04669.

\bibitem{DBLP:journals/jal/RobertsonS86}
N.~Robertson and P.~D. Seymour.
\newblock Graph minors. {II.} {A}lgorithmic aspects of tree-width.
\newblock {\em J. Algorithms}, 7(3):309--322, 1986.

\bibitem{RobertsonS-GMXIII}
N.~Robertson and P.~D. Seymour.
\newblock Graph minors. {XIII}. {T}he disjoint paths problem.
\newblock {\em Journal of Combinatorial Theory. Series B}, 63(1):65--110, 1995.

\bibitem{Skodinis99}
K.~Skodinis.
\newblock Efficient analysis of graphs with small minimal separators.
\newblock In P.~Widmayer, G.~Neyer, and S.~J. Eidenbenz, editors, {\em
  Graph-Theoretic Concepts in Computer Science, 25th International Workshop,
  {WG} '99, Ascona, Switzerland, June 17-19, 1999, Proceedings}, volume 1665 of
  {\em Lecture Notes in Computer Science}, pages 155--166. Springer, 1999.

\bibitem{MR798539}
R.~E. Tarjan.
\newblock Decomposition by clique separators.
\newblock {\em Discrete Math.}, 55(2):221--232, 1985.

\bibitem{Yolov18}
N.~Yolov.
\newblock Minor-matching hypertree width.
\newblock In {\em Proceedings of the Twenty-Ninth Annual {ACM-SIAM} Symposium
  on Discrete Algorithms (SODA)}, pages 219--233. {SIAM}, 2018.

\end{thebibliography}
\end{document}